\theoremstyle{plain}
\newtheorem{theorem}{Theorem}
\newtheorem{lemma}         [theorem]{Lemma}
\newtheorem{corollary}     [theorem]{Corollary}
\newtheorem*{remark} {Remark}
\long\def\commhide #1\commhideend{}
\newcommand{\ID}{ID\xspace}
\newcommand{\IDs}{IDs\xspace}
\newcommand{\onencc}{{\mathsf{NCC}_1}}
\newcommand{\zeroncc}{{\mathsf{NCC}_0}}
\newcommand{\val}{{\mathsf{value}}}
\newcommand{\tok}{{\mathsf{token}}}
\newcommand{\gid}{{\mathsf{gid}}}
\newcommand{\elev}{\mathsf{elev}}
\renewcommand{\L}{\mathcal{L}}
\newcommand{\hh}{\textsf{Havel-Hakimi }}
\newcommand{\blue}[1]{\textcolor{blue}{#1}}
\newcommand{\magenta}[1]{\textcolor{magenta}{#1}}
\newcommand{\green}[1]{\textcolor{green}{#1}}
\newcommand{\tO}{\widetilde O}
\newcommand{\plog}[1]{\mathrm{polylog}(#1)}
\def\dnsparagraph#1{\par\vspace{3pt}\noindent\textbf{#1}}
\def\dnsparagraphem#1{\par\vspace{3pt}\noindent\textit{#1}}
\def\john#1{\marginpar{$\leftarrow$\fbox{J}}\footnote{$\Rightarrow$~{\sf #1 \magenta{--John}}}}
\def\sumathi#1{\marginpar{$\leftarrow$                                                                      \fbox{S}}\blue{ $\bigstar$}\footnote{$\Rightarrow$~{\sf #1 \blue{--Sumathi}}}}
\def\keerti#1{\marginpar{$\leftarrow$\fbox{K}}\blue{ $\bigstar$}\footnote{$\Rightarrow$~{\sf #1 \blue{--Keerti}}}}
\def\avi#1{\marginpar{$\leftarrow$\fbox{A}}\green{ $\bigstar$}\footnote{$\Rightarrow$~{\sf #1 \green{--Avi}}}}
\def\suman#1{\marginpar{$\leftarrow$\fbox{Su}}\green{ $\bigstar$}\footnote{$\Rightarrow$~{\sf #1 \cyan{--Suman}}}}
\def\david#1{\marginpar{$\leftarrow$\fbox{D}}\green{ $\bigstar$}\footnote{$\Rightarrow$~{\sf #1 \teal{--David}}}}
\newcommand{\comment}[1]{{\blue{{\em #1}}}}
\def\john#1{}
\def\sumathi#1{}
\def\keerti#1{}
\def\avi#1{}
\def\suman#1{}
\def\david#1{}
\newcommand{\comment}[1]{}
\newcommand{\shortOnly}[1]{\ifthenelse{\boolean{short}}{#1}{}}
\newcommand{\onlyShort}[1]{\ifthenelse{\boolean{short}}{#1}{}}
\newcommand{\longOnly}[1]{\ifthenelse{\boolean{short}}{}{#1}}
\newcommand{\onlyLong}[1]{\ifthenelse{\boolean{short}}{}{#1}}
\newcommand{\shortLong}[2]{\ifthenelse{\boolean{short}}{#1}{#2}}
\newcommand{\longShort}[2]{\ifthenelse{\boolean{short}}{#2}{#1}} 
\newcommand{\onlyIPDPS}[1]{\ifthenelse{\boolean{IPDPS}}{#1}{}}
\def\blackslug{\hbox{\hskip 1pt \vrule width 4pt height 8pt
    depth 1.5pt \hskip 1pt}}
\def\QED{\quad\blackslug\lower 8.5pt\null\par}
\def\cT{{\cal T}}
\def\cD{{\cal D}}
\def\cTvd{\cT}
\def\edgeconn{\textsf{Conn}}
\title{Distributed Graph Realizations}
\begin{document}

\makeatletter
\newcommand{\linebreakand}{%
  \end{@IEEEauthorhalign}
  \hfill\mbo

  \mbox{}\hfill\begin{@IEEEauthorhalign}
}
\makeatother
\author{
John Augustine\\
  Computer Science and Engineering\\
  Indian Institute of Technology Madras\\
  \texttt{augustine@cse.iitm.ac.in} \\
   \And
 Keerti Choudhary \\
 Computer Science and Applied Mathematics\\
  Weizmann Institute of Science\\
  \texttt{keerti.choudhary@weizmann.ac.il} \\
  \And
 Avi Cohen \\
  Computer Science and Applied Mathematics\\
  Weizmann Institute of Science\\
  \texttt{avi.chohen@weizmann.ac.il} \\
    \And
 David Peleg\\
 Computer Science and Applied Mathematics\\
  Weizmann Institute of Science\\
  \texttt{david.peleg@weizmann.ac.il} \\
    \And
Sumathi Sivasubramaniam \\
  Computer Science and Engineering\\
  Indian Institute of Technology Madras\\
  \texttt{sumathi@cse.iitm.ac.in} \\
    \And
 Suman Sourav \\
  Computer Science\\
  National University of Singapore\\
  \texttt{sourav@comp.nus.edu.sg} \\
}

\maketitle

\begin{abstract}
    We study graph realization problems from a distributed perspective. The problem is naturally applicable to the distributed construction of overlay networks that must satisfy certain degree or connectivity properties, and we study it in the node capacitated clique (NCC) model of distributed computing, recently introduced for representing peer-to-peer networks. 
    
    We focus on two central variants, \emph{degree-sequence realization} and \emph{minimum threshold-connectivity} realization. 
    In the degree sequence problem, each node $v$ is associated with a degree $d(v)$, and the resulting degree sequence is realizable if it is possible to construct an overlay network in which the degree of each node $v$ is $d(v)$. The minimum threshold-connectivity problem requires us to construct an overlay network that satisfies connectivity constraints specified between every pair of nodes. 
    
    Overlay network realizations can be either explicit or implicit.
    Explicit realizations require both endpoints of any edge in the realized graph to be aware of the edge. In implicit realizations, on the other hand, at least one endpoint of each edge of the realized graph needs to be aware of the edge. 
    
    The main realization algorithms we present are the following.
    \begin{itemize}
        \item  An $\tilde{O}(\min\{\sqrt{m},\Delta\})$ time algorithm for implicit realization of a degree sequence. Here, $\Delta = \max_v d(v)$ is the maximum degree and $m = (1/2) \sum_v d(v)$ is the number of edges in the final realization.
        \item $\tilde{O}(\Delta)$ time algorithm for an explicit realization of a degree sequence. We first compute an implicit realization and then transform it into an explicit one in  $\tilde{O}(\Delta)$ additional rounds.
        \item An $\tilde{O}(\Delta)$ time algorithm for the threshold connectivity problem that obtains an explicit solution and an improved $\tilde{O}(1)$ algorithm for implicit realization when all nodes know each other's IDs. These algorithms yield 2-approximations w.r.t. the number of edges.
    \end{itemize}
    We complement our upper bounds with lower bounds to show that the above algorithms are tight up to factors of $\log n$. Additionally, we provide algorithms for realizing trees (including a procedure for obtaining a tree with a minimal diameter), an $\tilde{O}(1)$ round algorithm for approximate degree sequence realization and finally an $O(\log^2 n)$ algorithms for degree sequence realization in the non-preassigned case namely, where the input degree sequence may be permuted among the nodes.

\end{abstract}

 \section{Introduction}

\subsection{Background and motivation}

{\em Graph Realization problems}, which deal with constructing graphs that satisfy certain specified properties, have been studied quite extensively for over half a century, focusing on problems related to realizing graphs with specified degrees~\cite{H55,EG60,Ha62}, as well as other properties, like connectivity and flow \cite{GH61,FC70,F92,F94} or eccentricities \cite{B76,L75}.

The most prominent realization problems deal with degree sequences. A (typically non-increasing) sequence of non-negative numbers $D = (d_1, d_2, \ldots, d_n)$
is said to be {\em realizable} or {\em graphic} if there is a graph on $n$ nodes whose sequence of degrees matches $D$. By the {\em handshaking lemma}, dating back to Euler's work on the Konigsburg bridges problem~\cite{Euler1736}, we know that if $D$ is graphic, then $\sum_i d_i$ must be even. In 1960, Erd\"os and Gallai~\cite{EG60} gave a complete characterization, showing that $D$ is graphic iff $\sum_{i=1}^k d_i \le k(k-1) + \sum_{i=k+1}^n \min(d_i, k)$ for every $k \in [1,n]$.
Havel~\cite{H55} and Hakimi~\cite{Ha62} independently gave a recursive algorithm that can determine whether a given $D$ is graphic and compute a realizing graph when it is graphic. Being constructive, their 
method
has played a crucial role in many subsequent studies, and ours is no exception.

Over the last two decades, peer-to-peer (P2P) networks have developed as a versatile and effective platform for cooperative distributed computations. Research on P2P has lead to ideas that have  become crucial in a variety of contexts, ranging from fully decentralized applications like blockchain networks to more controlled contexts like Akamai's network services~\cite{akamai}. 
Overlay construction is an important P2P component, involving the formation of new links -- so called overlay links -- that comprise an overlay network tailored to benefit P2P applications. In the typical scenario, starting from some basic network state, the nodes in a P2P network must interact with each other in a fully decentralized manner and form an overlay network to be used for specific purposes. 

The constructed overlay network $G$ is often required to possess certain desirable properties. A common requirement is that $G$ be of bounded degree, so that the overhead for network formation and maintenance at each node is bounded. Additionally, one can envision a variety of other desired properties that the overlay $G$ should possess, like bounded diameter, well-connectedness, flow guarantees, tolerance to both benign and malicious failures, and so on.

Note, however, that such overlay constructions can be viewed as (distributed) graph realization problems. This natural connection makes it plausible that ideas from graph realization theory may lend interesting new techniques allowing us to build better overlay networks. Conversely, the endeavor to build useful overlay networks is expected to pose new theoretical challenges that are likely to enrich graph realization theory. We hope that our work will initiate this new synergy between these two areas that, to the best of our knowledge, has not been formally explored in the past. 

Towards the goal of formulating and studying distributed graph realization problems, we employ the {\em node capacitated clique (NCC)} model~\cite{AGGHSKL19} that captures several key aspects of P2P networks. In this model, we have $n$ nodes $V$ with unique identifiers called \IDs. Any node $u$ can send messages of bounded size to any other node $v$ provided $u$ knows $v$'s \ID; we can think of $v$'s \ID as its IP address. In this sense, the NCC model is somewhat similar to the {\em congested clique (CC)}
model~\cite{LSPP05,KS18}. However, in the interest of being useful in the P2P context, NCC limits each node to send and receive a bounded number of messages, which, interestingly, makes NCC quite distinct from CC.

In the first paper that introduced NCC~\cite{AGGHSKL19}, node \IDs were assumed to be common knowledge. This knowledge of all other \IDs corresponds roughly to the $KT_1$ version of the CONGEST model (cf. \cite{P00}) and for this reason we call it the $\onencc$ model. One may argue that $\onencc$ is somewhat impractical in the P2P context because  peers are highly unlikely to have so much knowledge about other nodes. 
To address this, we introduce a variant of NCC called $\zeroncc$, corresponding to the $KT_0$ version of the CONGEST model, which limits the number of \IDs known to each node. The knowledge available to the nodes at the beginning of the execution is represented by a directed graph on $V$, called the {\em initial knowledge graph} $G_k$, in which the (out) neighbors of each $v$ are the nodes whose \IDs were known to $v$ at the start. In our case, we make the simplifying assumption that $G_k$ is a path of $n$ nodes. Later, we show how our algorithms can be adapted for other initial knowledge graphs. In fact, we show that our algorithms be run on any low degree initial knowledge graph. Since $\zeroncc$ is more restrictive, algorithms designed for it will also work in $\onencc$. 
Consequently,
we focus on algorithms for $\zeroncc$ (unless stated otherwise). 
Further details of the model are described in Section \ref{sec:model}.

\dnsparagraph{Problem Statements.~}
We say that an overlay graph $G=(V,E)$ is constructed if, for every $e = (u,v) \in E$, at least one of the endpoints is aware of the \ID of the other and also aware that $e \in E$. We say that the overlay  graph is {\em explicit} if, for every edge in the graph, both endpoints are aware of the edge. Otherwise, the overlay is said to be {\em implicit}. In this paper, we focus on distributed realization problems in which, from a given initial knowledge graph and some other required input parameters, we are to construct an overlay graph that satisfies certain requirements. We study both explicit and implicit versions of the following two realization problems.

\dnsparagraphem{Degree Realization:~}
Each node $v_i$ in the distributed network knows its required degree $d(v_i)=d_i$. The goal is to construct a realizing graph (if one exists).
Formally, our input is a vector $D = (d_1, d_2, \ldots, d_n)$ such that each $d_i$ is only known to the corresponding node $v_i$. The required output is an overlay graph in which every $v_i$ has degree $d_i$ if $D$ is realizable; otherwise, at least one node outputs {\sf Unrealizable}. We also explore degree realization in the \emph{non-preassigned} setting. In this case, each node $v_i$ is given a degree $d_i$, but the required output is an overlay graph whose degree sequence is only required to be a permutation of $D$. 
%
%

\dnsparagraphem{Connectivity Threshold Realization:~}
The local edge connectivity of two nodes $u$ and $v$, denoted $\edgeconn_G(u,v)$, captures the minimum number of edge disjoint paths required between the nodes $u$ and $v$. In the connectivity realization problem, each node $v$ in the distributed network is provided with a vector specifying the required minimum local edge connectivity (denoted by  $\sigma(u,v)$) to every other node $u$.\footnote{Here $\sigma$ is a symmetric relation, that is, $\sigma(u,v)=\sigma(v,u)$.}
The goal is then to compute an overlay graph $G$ with as few edges as possible so that any two nodes $u,v$ in $G$ satisfy the \emph{edge-connectivity} relation $\edgeconn_G(u,v) \geq \sigma(u,v)$.

For this problem, we primarily focus on an approximate solution. In particular, we ensure that the number of edges in the overlay network is larger by at most twice that of the optimal realization.

\subsection{Our Contributions}

We present a number of new algorithms. Unless stated otherwise, these are randomized (Las Vegas) algorithms, whose running time bounds hold with high probability (w.h.p.)\footnote{An event such as running time staying within some bound is said to hold {\em with high probability} if its probability is at least $1-n^{-c}$ for any constant $c$  independent of $n$, but may depend on the constants used within the algorithm.}. For the following results, $m$ is the number of edges, and $\Delta$ denotes the maximum degree in the given degree sequence.
\begin{compactenum}
\item For the distributed degree sequence realization problem, we provide an $\tilde{O}(\min(\sqrt{m}, \Delta))$ time\footnote{We use $\tilde{O}$ and $\tilde{\Omega}$ as asymptotic notations that hide $\text{polylog }n$ factors.} algorithm that produces an implicit realization of the given graphic degree sequence. We then adapt this algorithm into an $\tilde{O}(\Delta)$ time algorithm for explicit realizations. 
We also study {\em tree realizations}, and present an algorithm for implicit realization in $\tilde{O}(1)$ rounds. Furthermore, we optimize the diameter of the realized tree.

\item We give an $\tilde{O}(1)$ round algorithm for implicit connectivity threshold realizations in the $\onencc$ model that uses at most twice the optimal number of edges 
needed for satisfying the connectivity threshold requirements. 
For the $\zeroncc$ model, we give an $\tilde{O}(\Delta)$ round explicit realization algorithm.

\item All our algorithms are tight up to factors of $\log n$ in the $\zeroncc$ model. Specifically, we show that for implicit realizations of degree sequences there are instances that require at least $\tilde{\Omega}(\sqrt{m})$ rounds and other instances that require $\tilde{\Omega}(\Delta)$ rounds in $\zeroncc$. 
In comparison, we also show that every instance of explicit realization requires at least $\tilde{\Omega}(\Delta)$ rounds in $\zeroncc$. 

\item To facilitate the design of algorithms in $\zeroncc$, we provide some algorithmic primitives that may be of independent interest. First, we show that the nodes can be arranged in the form of a balanced binary tree in deterministic $O(\log n)$ time. Furthermore, they can be rearranged to form a path sorted according to some parameter known locally to each node in deterministic $O(\plog{n})$ rounds. Finally, we show how a series of  primitives presented in~\cite{AGGHSKL19} for $\onencc$ can be adapted to work in $\zeroncc$.

\item Finally, for the distributed degree sequence problem, in the non-preassigned case (where it is only required that the final realization be a permutation of the assigned sequence) we provide an $(1+\epsilon)$ solution that runs in $O(\log^2 n)$ number of rounds in $\onencc$. 
\end{compactenum}

\subsection{Related Work}
A variety of graph realization problems have been studied in the literature. For the problem of realizing degree sequences, Havel and Hakimi~\cite{H55,Ha62} independently came up with the recursive algorithm that forms the basis for our distributed algorithm. Non-centralized versions of realizing degree sequences have also been studied, albeit to a lesser extent.  Arikati and Maheshwari~\citep{AM1996realizing} provide an efficient technique to realize degree sequences in the PRAM model. To the best of our knowledge, graph realization problems  have not been explored in the distributed setting.

Other graph realization problems studied are eccentricities~\citep{B76,L75}, connectivity~\citep{F94,FC70}, degree interval sequences (cf.\citep{BCPR-interval:19} and references therein), and more (cf. ~\citep{BCPR18survey,BCPR19happiness}).

Overlays in distributed settings have been well studied, as they provide structure and stability which are best exemplified by structures such as Chord~\citep{SML03}, CAN~\citep{RFHKS01}, and Skip Graphs~\citep{AS07}. They can also be used to handle dynamism and faults. For example,  Fiat and Saia~\citep{AS02} introduced an overlay using butterfly networks that is tolerant of faults. This was used in~\citep{AGGHSKL19,AS18} to provide a low diameter structure that is also easily addressable.  Detailed  surveys of various overlays and their properties can be found in~\cite{M15,LCPSL05}. 

Our work uses the NCC model~\cite{AGGHSKL19} for P2P networks. This model is similar to the CC 
model~\citep{LSPP05}, introduced well over a decade ago and studied extensively since then (see~\cite{KS18} and references therein).

\dnsparagraph{Organization.~}
In Section~\ref{sec:model} we formally define the NCC model of communication. Then, to lay the groundwork for our algorithms, Section~\ref{s:prelim} presents a series of primitives along with a brief description of the Havel-Hakimi algorithm. Our main contributions are in sections \ref{sec:deg-realiz-general} through \ref{sec:lb}. Section~\ref{sec:deg-realiz-general} presents algorithms for realizing general graphs given graphic degree sequences. The special case of trees is treated in Section~\ref{s:trees}. Section~\ref{section:Connectivity} studies the problem of connectivity threshold realization. Our lower bounds are presented in Section~\ref{sec:lb}. In Section~\ref{sec:non-preassigned}, we present our approach to the non-preassigned case for degree sequence realization, where the degrees aren't assigned to specific nodes.
Throughout this paper, we assume that the initial knowledge graph $G_k$ is a directed path on $n$ nodes. In Section~\ref{sec:equivalence} we introduce the notion of graph equivalences and show how to eliminate the restriction that the initial knowledge graph is a path. Finally, our conclusions are offered in Section~\ref{sec:con}.


\section{The NCC model of communication} \label{sec:model}

We consider the node capacitated clique model (NCC) recently introduced in 
\cite{AGGHSKL19}. The NCC comprises $n$ nodes $V = \{v_1, v_2, \ldots, v_n\}$ that can communicate with each other via synchronous message passing. Each node is uniquely identified by an \ID 
drawn from $[1, n^c]$ for some  fixed $c \gg 1$. A node can send at most $O(\log n)$ messages of size $O(\log n)$ bits each per round. However, in order for a node $u$ to send a message to another node $v$, $u$ must know the \ID of $v$. (Intuitively, $v$'s \ID can be viewed as its IP address.)

Since inter-node communication is significantly slower than local computation at the nodes, our focus is on minimizing the {\em round complexity} (a.k.a. {\em time} complexity), which measures the number of synchronous communication rounds it takes an algorithm to terminate. Hence, the NCC model allows nodes to perform an unbounded amount of local computation per round. However, we emphasize that in our algorithms, the computation times are bounded by some polynomial in $n$.  

We distinguish two variants of the NCC setting, $\zeroncc$ and $\onencc$, based on the \IDs initially known to the nodes. In $\onencc$, which matches the NCC model introduced in \cite{AGGHSKL19}, all nodes have full knowledge of each others' \IDs. Thus, w.l.o.g., we can assume that the \IDs are in $[1,n]$. This version of the NCC is similar to the $KT_1$ variant of the CC model~\cite{LSPP05}, except for the bound on the number of messages that can be sent/received at a node in every round. In $\zeroncc$, on the other hand, 
each node only knows the \IDs of a few other nodes. Formally, each node knows only the \IDs of its neighbors in some directed  \emph{initial knowledge graph} $G_k = (V, E_k)$, such that a pair $(u,v) \in E_k$ iff $u$ knows $v$'s \ID at the beginning. For concreteness, in this paper we assume that $G_k$ is a directed path consisting of the $n$ nodes arranged in some arbitrary order. In Section~\ref{sec:equivalence}, we show how our algorithms can be adapted to any low degree initial knowledge graph.
\begin{remark}
Any algorithm that can be executed in the $\zeroncc$ model can be executed in the $\onencc$ model without any increase in its time complexity. Thus, unless stated otherwise, all our algorithms are designed for $\zeroncc$.
\end{remark}

\section{Preliminaries}
\label{s:prelim}

Sections \ref{sec:structural} and \ref{sec:comp} describe several fundamental \emph{structural} and \emph{computational primitives} in $\zeroncc$ that are used quite extensively later on. Then, Section~\ref{sec:shh} briefly describes the classical (sequential) Havel-Hakimi algorithm~\cite{H55,Ha62}, 
which serves as a basis for many of our algorithms. 
\subsection{Structural Primitives} \label{sec:structural}
Structural primitives deal with arranging the nodes in suitable ways. Specifically, we show how to connect the nodes in the form of a tree, or linearly, sorted by some parameter of the nodes. 
Recall that the initial knowledge graph $G_k$ in our model is a directed \emph{path}. In one round, the directed path can be converted into an undirected (but ordered) path, by having $u$ send its \ID to $v$, for each edge $(u,v)$ in $G_k$.
We say that the path is ordered because, for each $(u,v)$ in the initial knowledge graph, $u$ can remember $v$ as its successor and $v$ can remember $u$ as its predecessor.

\subsubsection{Balanced Binary Search Tree}
Our first goal is to rearrange the nodes in the form of a balanced binary search tree of height  at most $O(\log n)$. The formed tree has to be a search tree, in the sense that, for every node $u$ in the tree, the nodes in the left subtree must appear earlier (i.e., precede $u$)  in $G_k$ and the nodes in the right subtree must appear later (i.e., succeed $u$) in $G_k$. This enables as to find nodes via the knowledge of their position in the path $G_k$.

\label{subsubsec:balancedbinarytree} As a warm up, we first present a simple and straightforward approach to building a balanced binary tree that will {\em not} be a search tree. To do that,
we exploit the fact that the nodes are arranged in a path, and that in one round, each node can learn the ID of its neighbors' neighbors (when present). This can be accomplished by having every node send each of its neighbors' addresses to the other neighbor. With this ability to learn one's neighbors' neighbor, we can decompose a path into two paths comprising nodes in odd and even positions, respectively. Now our algorithm can be described succinctly in the  following recursive manner. Initially, there is a single path, but as the algorithm progresses, several paths will be created and the recursive step, described as follows,  must be applied to every path.  The left-most node $r$ makes its immediate neighbor $a$ its left child; $r$ then makes $a$'s other neighbor $b$ its right child. Then, $r$ removes itself from the path and the path is decomposed into two paths comprising nodes in odd and even positions, with $a$ and $b$ being the  left-most nodes in the two paths respectively. This process is then repeated recursively (and in parallel) in the two paths that are created, and the recursion terminates when empty paths are created. In $O(\log n)$ rounds, all recursive calls would have terminated because the length of the paths halve in each recursive call.  

\begin{center}
  \includegraphics[height=65mm, width=.55\textwidth]{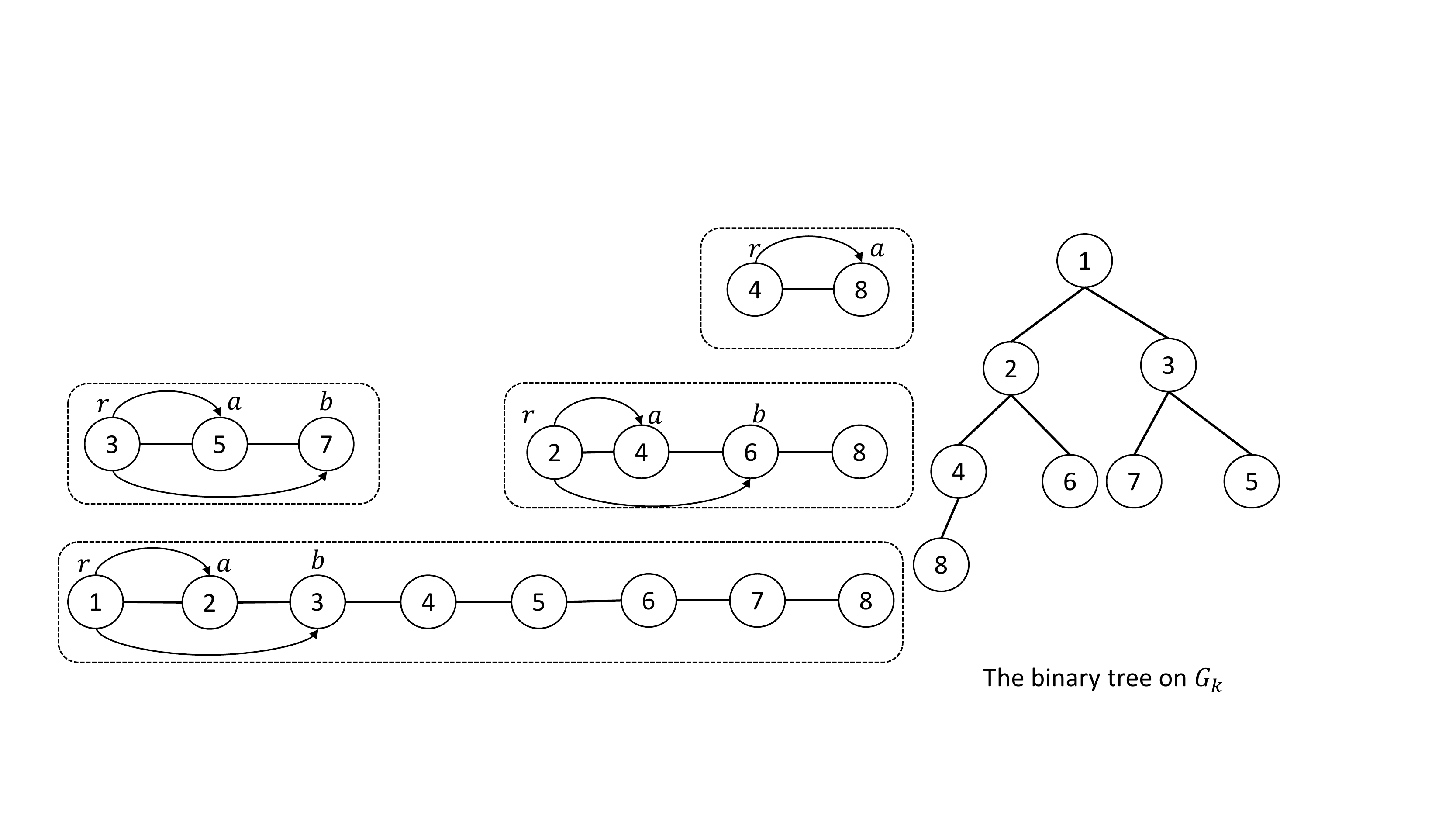}
  \captionof{figure}{The balanced binary tree built on $G_k$ using the recursive construction. For any path, $r$, $a$, and $b$ indicates the parent and its left and right child respectively.}
   \label{fig:btree}
\end{center}

We now provide a more elaborate algorithm to produce a  balanced binary search tree. We first design a structure $\L$ and then perform a controlled breadth first search that results in a balanced binary tree. Our construction is fully distributed and deterministic.
The structure $\L$ comprises $\lceil \log n \rceil+1$ levels $(L_0, L_1, \ldots, L_{\lceil \log n \rceil})$. Each level $i$, $0 \le i \le \lceil \log n \rceil$, comprises up to $2^i$ paths (set of nodes with structure similar to $G_k$, but undirected). $L_0$ has just one path, which is the undirected form of $G_k$. Subsequently, each path $\ell$ on level $L_{i-1}$ produces up to two paths $\ell_0$ and $\ell_1$ on level $L_i$; $\ell_0$ comprises nodes at even position in $\ell$ and $\ell_1$ contains the nodes in odd position. We say that path $\ell$ is the {\em parent path} of paths $\ell_0$ and $\ell_1$, and conversely, $\ell_0$ and $\ell_1$ are the {\em children} of $\ell$. In order to bypass the problem of nodes having to determine the parity of their positions, we do not require nodes in $\ell_0$ and $\ell_1$ to be aware of whether they are in $\ell_0$ or $\ell_1$. 
To construct $\ell_0$ and $\ell_1$, each node in $\ell$ sends its predecessor's \ID to its successor and vice versa and then each node $v$ in $\ell$ connects to its grand-predecessor $p^*$ (i.e., predecessor's predecessor) in $\ell$ and (the similarly defined) grand-successor $s^*$ (both, only if present). Node $p^*$ and $s^*$ now serve as the predecessor and successor, respectively, of $v$ in the newly formed path on level $L_i$. After all paths on level $L_{\lceil \log n \rceil}$ are constructed, the first node in $G_k$ (i.e., the node with no predecessor on level $L_0$) becomes the root node $r$ and initiates a controlled BFS as described in Algorithm~\ref{alg:bfs} to form a BFS tree $T_{BST}$. 

\begin{theorem}\label{thm:binarytree}
The fully distributed deterministic construction outlined above takes a linear arrangement of nodes as the knowledge graph $G_k$ and produces a  binary tree $T_{BST}$ of height at most $\lceil \log n \rceil + 1$ in $O(\log n)$ rounds. Moreover, an inorder traversal of this tree yields the original $G_k$. 
\end{theorem}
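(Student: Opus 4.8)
The plan is to prove the three assertions of the theorem --- the $O(\log n)$ round bound, the height bound $\lceil\log n\rceil+1$, and the inorder property --- largely separately, grounding the last two on a single search-tree invariant maintained by Algorithm~\ref{alg:bfs} as it sweeps the levels of $\L$. For the round complexity I would first argue that level $L_i$ is built from $L_{i-1}$ in $O(1)$ rounds: in one round every node forwards its current predecessor's and successor's \ID across itself, and then each node relinks --- a purely local operation --- to its grand-predecessor and grand-successor, which become its predecessor and successor on $L_i$. Since $\L$ has $\lceil\log n\rceil+1$ levels, all of $\L$ is built in $O(\log n)$ rounds, and the NCC bandwidth and degree constraints are respected because on every level each node lies on exactly one path, hence has at most two neighbours and sends and receives $O(1)$ messages per round. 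The controlled BFS of Algorithm~\ref{alg:bfs} then touches each of the $\lceil\log n\rceil+1$ levels of $\L$ with $O(1)$ work per level, a further $O(\log n)$ rounds, so the whole construction runs in $O(\log n)$ rounds.

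For the structural claims I would record two facts that follow directly from the construction of $\L$: (i) on level $L_i$ the paths partition $V$, and within each path the nodes are listed in the order induced on that path's node set by $G_k$ (every path is ``$G_k$-monotone''); and (ii) the path lengths essentially halve from one level to the next, so every path on level $L_{\lceil\log n\rceil}$ consists of a single node. By (ii) the BFS is forced to place every node by the time it finishes processing $L_{\lceil\log n\rceil}$. Since Algorithm~\ref{alg:bfs} attaches one new tree layer while processing each level $L_i$, and there are $\lceil\log n\rceil+1$ levels, the height of $T_{BST}$ is at most $\lceil\log n\rceil+1$; and since a path on $L_i$ spawns at most two paths on $L_{i+1}$, each tree node receives at most two children, so $T_{BST}$ is binary. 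A short check that every non-root node is assigned exactly one parent (and that $T_{BST}$ is connected to $r$) confirms that $T_{BST}$ is a genuine rooted tree.

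The remaining and hardest claim is that an inorder traversal of $T_{BST}$ reproduces $G_k$, and this is the step I expect to be the main obstacle. I would prove, by induction on the phase index $i$, that the partial tree produced after phase $i$ satisfies the search-tree invariant: for every already-placed tree node $u$, the set of nodes destined for the subtree below $u$ is a contiguous block of $G_k$ containing the $G_k$-position of $u$, and the nodes destined for the subtrees of $u$'s (eventual) left and right children are exactly the portion of that block before $u$ and the portion after $u$, respectively. The base case $i=0$ is immediate, since the lone level-$L_0$ path is $G_k$ itself, in order. The inductive step must show that splitting a level-$L_i$ path into its even- and odd-position subsequences, together with the way Algorithm~\ref{alg:bfs} designates the roots of the next layer and re-threads the predecessor/successor pointers, re-expresses every contiguous block as the part before the newly placed node, then that node, then the part after it. The delicate point --- and the reason the argument requires care --- is that an even-position (or odd-position) subsequence of a contiguous $G_k$-block is itself \emph{not} contiguous in $G_k$, so one cannot simply say that each path ``owns'' a contiguous block; the invariant has to be carried at the level of tree nodes, tracking how the two child paths \emph{together with} the node placed at the current layer reconstitute contiguity at the next finer scale, with separate attention to the boundary cases in which $n$ is not a power of two or a path has odd length (one of its children then being shorter), which is also where the additive $+1$ in the height bound is spent. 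Granting the invariant for the root, whose block is all of $[1,n]$, an ordinary left--root--right inorder traversal of $T_{BST}$ then visits the nodes in the order $v_1,v_2,\ldots,v_n$, i.e.\ it reproduces $G_k$.
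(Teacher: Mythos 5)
Your round-complexity and binary/height arguments match the paper's and are fine. The genuine gap is in the spanning claim: you assert that ``by (ii) the BFS is forced to place every node by the time it finishes processing $L_{\lceil\log n\rceil}$,'' but this is not a consequence of the paths shrinking to singletons. ($L_{\lceil\log n\rceil}$ is not even among the levels the loop in Algorithm~\ref{alg:bfs} iterates over; it only starts at $\lceil\log n\rceil-1$.) The BFS propagates invitations outward from $r$ one level at a time, and it is not obvious that a given node $v$ ever receives an invitation. This is precisely where the paper does its real work: it defines, for each node $v$, the elevation $\elev(v)=\max\{i\mid v\in t_i\}$ where $t_i$ is the root's path on level $i$, and then proves by contradiction (taking a non-invited $v$ of maximum elevation and examining its level-$\elev(v)$ neighbour, which necessarily has strictly higher elevation and hence must already be in the tree and must have invited $v$) that every node is invited exactly on level $\elev(v)$. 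Without this or an equivalent argument, your proof does not establish that $T_{BST}$ spans $V$, and in particular your ``root's block is all of $[1,n]$'' step is unsupported.

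A secondary issue is that you outline the inorder argument as an induction on a contiguity invariant but do not carry it out, and you yourself flag the obstacle (even/odd subsequences of a path are not $G_k$-contiguous). That obstacle is real, and the resolution is essentially the same elevation machinery you skipped for spanning. The paper allocates effort the other way around: it proves spanning in detail via elevations, and then disposes of the inorder property in one line by observing that the invitation rule (predecessor $\to$ left child, successor $\to$ right child, along $G_k$-monotone paths) directly yields the BST/search property, which together with spanning gives the inorder claim. If you completed your contiguity induction it would in fact subsume the spanning proof, but as written it is a plan that stops at exactly the point where the nontrivial reasoning begins. Concretely: carry out the induction with a per-node interval $(a,b)$ in $G_k$-position, showing that when $v$ with interval $(a,b)$ invites a predecessor/successor at a lower level, that invitee's interval becomes $(a,v)$ or $(v,b)$ respectively and is in fact the node of maximum elevation inside that subinterval --- this is where you will end up reinventing the paper's elevation argument.
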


\begin{center}
  \includegraphics[height=65mm,width=.55\textwidth]{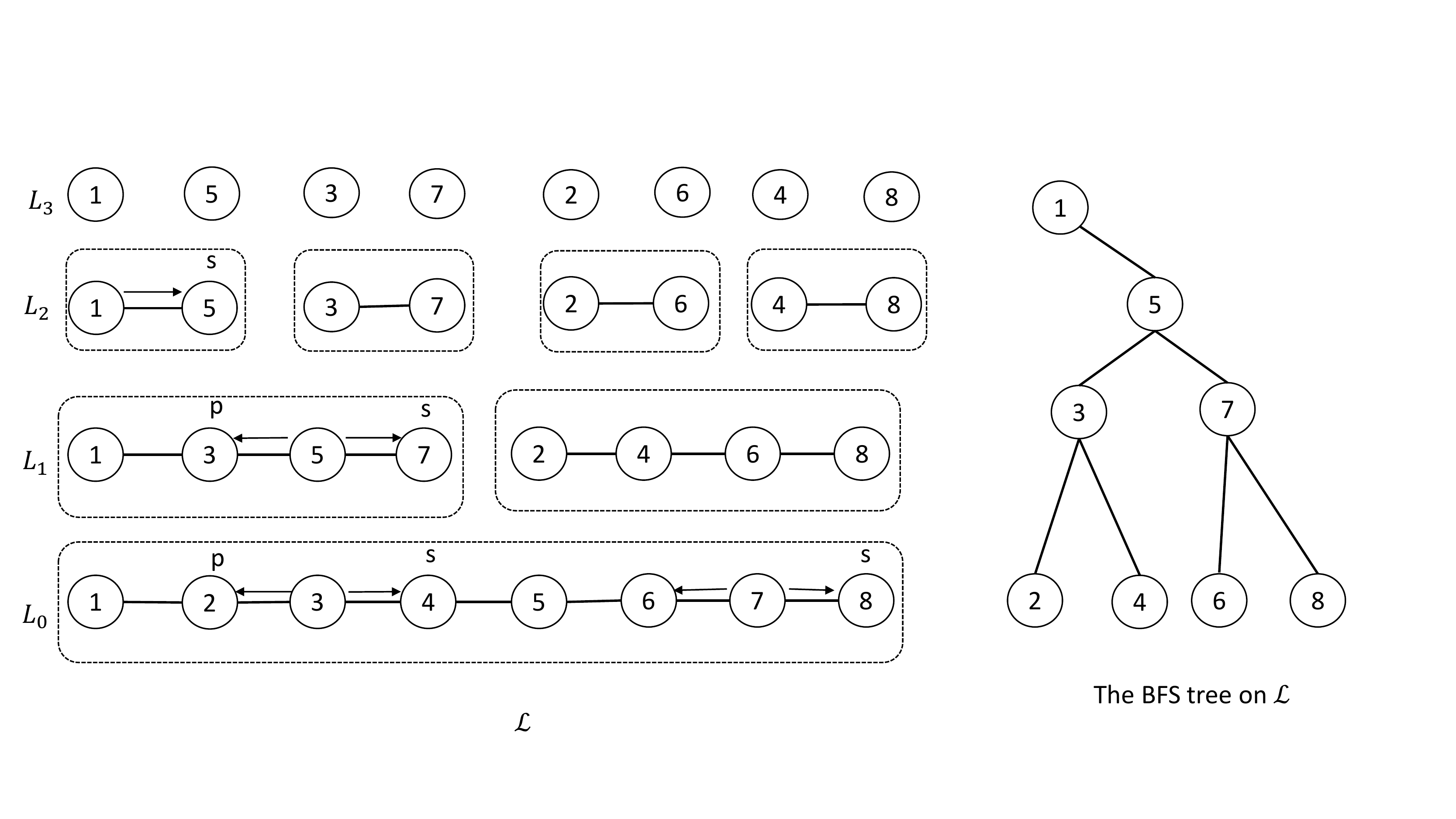}
  \captionof{figure}{The structure $\L$ and the corresponding balanced binary search tree built on $\L$. The letters p and s indicate the \emph{predecessor} and \emph{successor} of a node in a level respectively.}
   \label{fig:bstree}
\end{center}

\begin{algorithm}[!ht]\label{alg:bfs}
 \KwIn{$\L$ has been constructed and the root $r$ knows that it's the root.}
 \KwOut{A balanced binary tree rooted at $r$ such that an in-order traversal will yield the nodes in the input knowledge graph ordering.}
\tcc{Sets $S_p, S_s$ are maintained implicitly in the sense that each node in $S_p$ or $S_s$ knows its membership.}
    $S_p \leftarrow S_s  \leftarrow \{r\}$\;
 \For{$i$ from $\lceil \log n \rceil -1$ down to 0}{
        \ForEach{$v \in S_p$ (in parallel)}{
            \If{$v$ has a predecessor $p$ on level $i$}{
                $v$ invites $p$ as its left child in the tree\;
                $S_p \leftarrow S_p \setminus \{v\}$\; 
            }
    	}
    	\ForEach{$v \in S_s$ (in parallel)}{
            \If{$v$ has a successor $s$ on level $i$}{
                $v$ invites $s$ as its right child in the tree\;
                $S_s \leftarrow S_s \setminus \{v\}$\; 
            }
    	}
        \ForEach{node $u$ not already in the BFS tree that was invited (in parallel)}{
            $u$ accepts the invitation from, say, $v$\;
            $u$ connects to $v$ forming a BFS tree edge between parent $v$ and child $u$\;
            $S_p \leftarrow S_p \cup \{u\}$\;
            $S_s \leftarrow S_s \cup \{u\}$\;
        }
    }
 \caption{A controlled BFS algorithm to arrange nodes in the form of a balanced binary tree.}
\end{algorithm}

\begin{proof}
The paths on each level are close to half the length of their parent path on the previous level, so there are clearly at most $O(\log n)$ levels and the construction of each path takes at most $O(1)$ parallel rounds. Thus $\L$ is constructed in $O(\log n)$ rounds. Since, in the binary tree construction, invitations and subsequent acceptances happen in parallel, each iteration only takes $O(1)$ rounds  of height at most $O(\log n)$, thereby making the overall running time $O(\log n)$.
To argue the correctness of the binary tree construction, we observe the following properties of $T_{BST}$. 

\dnsparagraph{Single Parent.~}
Each node has exactly one parent because it only accepts one invitation. (Once it joins $T_{BST}$, it stops accepting invitations.)

\dnsparagraph{Two Children.~}
Each node $v$ has at most two children, since it gets at most one predecessor child and one successor child. 

\dnsparagraph{Balanced.~}
The depth of $T_{BST}$ is at most $O(\log n)$ (which again holds because there are only $O(\log n)$ levels in $\L$ and each parent is always at a higher level than its children).

\dnsparagraph{Spanning Tree.~}
Finally, we argue that $T_{BST}$ spans all the nodes in $V$. Observe that  the $2^i$ lists on level $i$ form a partition of $V$ because each list on level $i > 0$ is formed as a partition of a list on level $i-1$. We use $t_i$ to denote the list on level $i$ that contains the root node $r$. Thus, $t_{\lceil \log n \rceil}$ is just the singleton list containing  $r$ and $t_0$ is the initial knowledge graph which is a linear arrangement of all nodes in $V$. For each node $v$, define its elevation $\elev(v) \triangleq \max\{i \mid v \in t_i\}$. To show that $T_{BST}$ spans $V$, we need to argue that each node $v$ is added to $T_{BST}$ on level $\elev(v)$. Suppose not. Then, there must be a $v$ with maximum elevation that did not get an invitation on level $\elev(v)$ (and of course it couldn't have gotten an invitation at any  level higher than $\elev(v)$ because $v \not\in t_i$ for $i> \elev(v)$). 

Consider the case where $v$ has exactly one neighbor $v'$ on level $\elev(v)$; Wlog, let $v$ be the successor of $v'$ on level $\elev(v)$. The node $v'$ must have a higher elevation as both $v$ and $v'$ cannot have the same parity in $t_{\elev(v)}$.  Thus, by our assumption, $v'$ must have been invited, entered $T_{BST}$, and put itself in both $S_p$ and $S_s$ on level $\elev(v')$. From level $\elev(v')$ down to $\elev(v)+1$, $v'$ could not have had any successor, as otherwise, $v$ would not be its successor on level $\elev(v)$, so $v'$ must invite $v$ on level $\elev(v)$, contradicting our assumption.  

The case when $v'$ has two neighbors in $t_{\elev(v)}$ can be argued similarly. In fact, we can argue that exactly one neighbor $v'$ of $v$ in $t_{\elev(v)}$ will have elevation $\elev(v)+1$. To see this, notice that both neighbors will be in  $t_{\elev(v)+1}$ as they had the same parity in $t_{\elev(v)}$ and only one can be on level $t_{\elev(v)+2}$ as they both have different parities in $t_{\elev(v)+1}$. Thus, since $v'$ entered $T_{BST}$ on level $\elev(v)+1$, it must have invited $v$ on level $\elev(v)$, which again contradicts our assumption. 

Thus, $T_{BST}$ is indeed a binary tree containing all $n$ nodes and  is balanced. Moreover, for every node $v$, every node in its left (resp., right) subtree appears before (resp., after) $v$ in the original path. Therefore, the original ordering can be recovered by an in-order traversal of $T_{BST}$.
\end{proof}

Finally, we show that each node can compute its position in the path graph $G_k$ in $O(\log n)$ rounds. We first construct $T_{BST}$ on $G_k$. Recall that the inorder traversal of $T_{BST}$ is $G_k$, so the problem reduces to  each node calculating its inorder traversal number in $T_{BST}$. This can be done efficiently in $O(\log n)$ rounds~\citep{AS18}. The key idea is to perform a bottom up phase in which nodes learn the size of their (and their children's) subtrees, and then a top down phase in which the nodes learn their inorder traversals. We assume that $n$ is known, but the algorithm can be modified to work as long as we have a reasonable upper bound for it. In such situations, the exact number of nodes can inferred from $T_{BST}$ as the inorder traversal number of the last node in the original ordering. Thus, once  $n$ becomes common knowledge, the median node's ID can also be inferred in $O(\log n)$ rounds (and flooded to all nodes if needed). Thus, we have
\begin{corollary}\label{cor:inorder}
The position of a node in a path graph can be found in $O(\log n)$ rounds. Similarly, the address of the median node can be made common knowledge to all nodes within $O(\log n)$ rounds. 
\end{corollary}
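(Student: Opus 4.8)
My plan is to reduce both statements to the computation of inorder ranks in the balanced binary search tree $T_{BST}$ produced by Theorem~\ref{thm:binarytree}. First I would run that construction on $G_k$, which costs $O(\log n)$ rounds and yields a tree of height $O(\log n)$ whose inorder traversal is exactly $G_k$; consequently the position of a node $v$ in the path $G_k$ coincides with its rank in the inorder traversal of $T_{BST}$. Note that in $T_{BST}$ every node has at most two children and one parent, so the whole computation will use $O(1)$ messages per node per round, comfortably inside the NCC message budget.

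To obtain the inorder ranks I would perform the standard two-sweep procedure (as in~\cite{AS18}). In the bottom-up sweep each node computes the size of the subtree it roots: a leaf sends $1$ to its parent, and an internal node, after receiving its (at most two) children's subtree sizes, sets its own size to their sum plus one and passes it up; this finishes in $O(\mathrm{height})=O(\log n)$ rounds. In the top-down sweep, let $\mathrm{pre}(v)$ denote the number of nodes that precede the entire subtree of $v$ in the global inorder order and let $\ell(v)$ be the size of $v$'s left subtree; then the rank of $v$ is $\mathrm{pre}(v)+\ell(v)+1$, the left child inherits the value $\mathrm{pre}(v)$, and the right child inherits $\mathrm{pre}(v)+\ell(v)+1$. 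Initializing $\mathrm{pre}(r)=0$ at the root and propagating these values down the tree takes another $O(\log n)$ rounds, after which every node knows its position in $G_k$.

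For the second claim, observe that the last node of $G_k$ -- the unique node with no successor on level $L_0$, equivalently the rightmost node of $T_{BST}$ -- receives inorder rank equal to the total number of nodes, and thus learns $n$ exactly; this only requires the construction to be seeded with some known upper bound $N\ge n$ so that $\L$ has $\lceil\log N\rceil+1$ levels, with the true value of $n$ recovered afterwards as just described. Flooding $n$ down $T_{BST}$ makes it common knowledge in $O(\log n)$ further rounds; then the node whose computed rank equals $\lceil n/2\rceil$ identifies itself as the median and sends its \ID up to the root and back down the tree, once more in $O(\log n)$ rounds.

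I do not expect a real obstacle here. The only two things needing care are that all quantities manipulated (subtree sizes and ranks) are at most $n$ and so fit in a single $O(\log n)$-bit message -- keeping each node within its per-round sending quota despite its degree being up to three -- and that the binary-tree primitive must be parameterized by an upper bound on $n$ rather than by $n$ itself, the exact value being read off from $T_{BST}$ as above.
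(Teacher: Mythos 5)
Your proposal is correct and follows essentially the same route as the paper: construct $T_{BST}$ (Theorem~\ref{thm:binarytree}), compute inorder traversal numbers via a bottom-up subtree-size sweep followed by a top-down rank propagation, infer the exact $n$ as the inorder rank of the last node when only an upper bound is known, and then broadcast $n$ and the median's \ID through the tree. The only difference is that you spell out the rank recurrence and the message-size bookkeeping in more detail than the paper does.
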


\subsubsection{Sorting in $\zeroncc$} \label{sec: sorting}
Consider the $\zeroncc$ model with knowledge graph $G_k$ being a path of $n$ nodes. Furthermore, let each node be assigned a unique value. We now show how to build a sorted path on those $n$ nodes. To be more precise, we define a {\em sorted path} as a path of nodes such that the first node, called the {\em head}, has the smallest value, the last node, called the {\em tail}, has the largest value, and all other nodes have values greater than (resp., smaller than) their predecessor (resp., successor). Every node in a sorted path must know its predecessor and successor's IDs (when present). In addition, for convenience, we assume that a balanced binary search tree has been constructed on the set of nodes, i.e., a tree of height that is asymptotically at most logarithmic in the number of nodes such that for every node $v$, the values in its left (resp., right) subtree are smaller (resp., greater) than $v$'s value. The head node's ID will serve as the handle for the sorted path.
 
Our approach is to first build a balanced binary search tree $T$ (see Theorem~\ref{thm:binarytree}) that may not be ordered according to node values, and then build a sorted path of the $n$ nodes in a bottom up manner starting with the leaves. To build the sorted path, first each leaf node sends its value up to  its parent. Then the parent sorts them, constructs the sorted path with itself and its children, and informs the leaves of their neighbors in the sorted path. This is then repeated at higher levels, i.e., each node $u$ passes on the ID of the head node  of the sorted path containing all nodes in $u$'s subtree up to its parent $v$. Node $v$ will receive up to two such IDs, one from each of its children. Node $v$ then uses the merging techniques described shortly  to merge the two corresponding paths (also including itself) and passes on the merged path's head ID to its parent. This bottom-up procedure will therefore require us to go through $O(\log n)$ levels. At each level, the nodes must merge the two sorted paths they receive from their children, for which we  provide an $O(\log^2 n)$ rounds procedure. Thus, the overall round complexity is $O(\log^3 n)$.

We now focus on a node $v$ that has received two head node IDs corresponding to the sorted paths $R_1$ and $R_2$ of nodes from $v$'s left subtree and right subtree, respectively. Node $v$ initiates a recursive merge procedure that combines the two paths into one sorted path $R$  and then inserts itself into the sorted path (taking advantage of the fact that the sorted path $R$ will include a balanced binary search tree). Thus, the output will be a sorted path containing all nodes in the subtree rooted at $v$. 

The merge procedure is defined recursively. Without loss of generality, let $R_1$ be the larger of the two paths (breaking ties arbitrarily) that $v$ receives from its children in $T$. If $R_2$ has just one node $y$ in it, that singleton node $y$ and $v$ are just inserted into $R_1$ to produce the required merged sorted path $R$. Otherwise, The median node $x$ of $R_1$ is computed (see Corollary~\ref{cor:inorder}) and each of the two sorted paths is split into two sorted paths as follows. Path $R_1$ is split into $R_1^<$ (resp., $R_1^>$) comprising nodes in $R_1$ with values less than (resp., more than) the value of $x$. Path $R_2^<$ and $R_2^>$ are defined similarly. These paths can be constructed in a straightforward manner by searching for the value of $x$ in the two paths $R_1$ and $R_2$. For example, the node in $R_1$ with the largest (resp., smallest) value that is less than (resp., more than) the value of $x$ will be $R_1^<$'s tail (resp., $R_1^>$'s head) and having computed the required tails and heads, we can apply Theorem~\ref{thm:binarytree} to complete the construction of the required sorted paths. Thus, these sorted path constructions will take $O(\log n)$ time.  We now recursively merge $R_1^<$ and $R_2^<$ to form the merged path $R^<$ comprising all nodes from $R_1$ and $R_2$ whose values are less than that of $x$. The path $R^>$ whose values are greater than that of $x$ is also obtained similarly. The two paths $R^<$ and $R^>$ are sorted, so they can now be merged easily (with $x$ placed in between) to form the required sorted path $R$. 

The correctness of the procedure is straightforward and similar to mergesort. Note that the recursion depth of each merge procedure will be at most $O(\log n)$ because the sum of the nodes in the two paths at any recursive step will be at most 3/4 of the number of nodes in the two paths at the previous recursive step. This follows from our choice of $x$ as the median element in the larger of the two paths. Thus, the depth of the recursion employed by the merge procedure is $O(\log n)$. At each recursive step, it is required to compute the median element (see Corollary~\ref{cor:inorder}) of the larger sorted path, search within the sorted paths, and build a balanced binary search tree (see Theorem~\ref{thm:binarytree}), each requiring at most $O(\log n)$ rounds. 

\begin{algorithm}[!ht]
\BlankLine
\KwIn{A node  $v$ in the network receives two paths $R_1$ and $R_2$.}
\KwOut{The path $R$, where $R$ is created by merging (in a sorted order) of $R_1$ and $R_2$. }
\BlankLine 
Create balanced binary search trees for both $R_1$ and $R_2$ (see Corollary~\ref{cor:inorder}).

\uIf{either $R_1$ or $R_2$ has only one node $y$.}
{Insert $y$ in the appropriate position using the balanced  binary search  tree\;
Return $R$.}
\Else{
Find the median of the larger path using the balanced  binary trees\;
Use it to split $R_1$ and $R_2$ into four sub-paths $R_1^<, R_1^>$ and $R_2^<, R_2^>$\;
 \tcc{Here $R_1^<$ (respectively $R_1^>$) represents the parts of path $R_1$ whose values are less than (respectively, greater than) that of the median. $R_2^<, R_2^>$ are defined similarly.}
 Call the recursive merge procedure on the pairs $R_1^<, R_2^<$ and $R_2^>, R_2^>$.}

\caption{Recursive-Merge}
\label{Algorithm:rec-merge}
\end{algorithm}

\begin{theorem}
There exists an algorithm for creating a sorted path graph on $n$ nodes of a balanced tree $T$ in $O(\log^3 n)$ rounds in the $\zeroncc$ (and hence also in $\onencc$) model. 
\end{theorem}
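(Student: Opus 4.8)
The plan is to formalize the bottom-up procedure sketched in the text and account for its round complexity carefully. First I would set up the recursion on the balanced binary search tree $T$ produced by Theorem~\ref{thm:binarytree}, whose height is $O(\log n)$. The invariant maintained is that after processing level $\ell$ (counting from the leaves), every node $u$ at height $\ell$ holds (as the head ID of a sorted path) a correctly sorted linear arrangement of exactly the nodes in $u$'s subtree, with predecessor/successor pointers installed at every node of that path. The base case is trivial: each leaf is a sorted path of length one. For the inductive step, a node $v$ at height $\ell+1$ receives the head IDs of the sorted paths $R_1$, $R_2$ of its two children's subtrees; it then runs \textsc{Recursive-Merge} (Algorithm~\ref{Algorithm:rec-merge}) to merge $R_1$, $R_2$ and splice in $v$ itself, yielding a sorted path on $v$'s subtree. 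Summing over the $O(\log n)$ levels of $T$, the claimed bound will follow once we show that one level of merges costs $O(\log^2 n)$ rounds; note that all merges at a given level of $T$ run in parallel and involve disjoint node sets, so there is no congestion across different subtrees.

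Next I would analyze the cost of a single invocation of \textsc{Recursive-Merge}. The key claim is that its recursion depth is $O(\log n)$: writing $N$ for the total number of nodes in the two input paths at some recursive call, the pivot $x$ is chosen as the median of the larger of the two paths, so the larger path splits into two pieces each of size at most (roughly) half the larger path, hence at most $N/2$; adding the at-most-$N/2$ nodes coming from the smaller path, each recursive call receives at most $3N/4$ nodes, so after $O(\log n)$ levels the paths shrink to singletons and the recursion bottoms out. At each recursive call the work consists of: computing the median of the larger sorted path via its balanced BST (Corollary~\ref{cor:inorder}), locating the pivot's rank in both paths to determine the heads and tails of the four sub-paths $R_1^<,R_1^>,R_2^<,R_2^>$, and rebuilding balanced binary search trees on those sub-paths (Theorem~\ref{thm:binarytree}). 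Each of these is $O(\log n)$ rounds, so each recursive level is $O(\log n)$ rounds, and one full \textsc{Recursive-Merge} is $O(\log^2 n)$ rounds. I would also observe that the two sibling recursive calls (on $(R_1^<,R_2^<)$ and $(R_1^>,R_2^>)$) operate on disjoint node sets and can be run in parallel, so they do not compound the depth; and that the final step — merging the already-sorted $R^<$ and $R^>$ with $x$ in the middle — is just pointer surgery at the tails/heads, $O(1)$ rounds.

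The main obstacle I anticipate is the congestion bookkeeping in the NCC model, since each node may send/receive only $O(\log n)$ messages per round. Within \textsc{Recursive-Merge}, the coordinating node $v$ must drive searches and tree constructions over potentially many nodes, but the point is that the primitives of Theorem~\ref{thm:binarytree} and Corollary~\ref{cor:inorder} are themselves implemented so that each participating node handles only $O(1)$ (or $O(\log n)$) messages per round, and $v$ only needs to initiate and receive aggregated answers along tree paths of logarithmic length; I would argue that launching these primitives is therefore safe. A second, subtler point is that different recursive calls at the same depth, and different subtree-merges at the same level of $T$, all touch disjoint vertex sets — so although the total volume of messages is large, no individual node is overloaded. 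I would make this disjointness explicit (it follows because the level-$i$ paths of the structure $\mathcal{L}$, and hence the subtrees of $T$, partition $V$) and conclude that each round of each primitive respects the $O(\log n)$ message bound. Putting it together: $O(\log n)$ levels of $T$, each costing one parallel round of \textsc{Recursive-Merge} of cost $O(\log^2 n)$, gives $O(\log^3 n)$ rounds overall, which holds in $\zeroncc$ and therefore in $\onencc$ as well.
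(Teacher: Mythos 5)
Your proposal is correct and mirrors the paper's argument almost exactly: a bottom-up pass over the balanced BST from Theorem~\ref{thm:binarytree} with $O(\log n)$ levels, each level performing parallel, node-disjoint invocations of \textsc{Recursive-Merge} whose recursion depth is $O(\log n)$ and whose per-step cost is $O(\log n)$ rounds (via Corollary~\ref{cor:inorder} and Theorem~\ref{thm:binarytree}), giving $O(\log^3 n)$. One small note on the $3N/4$ shrinkage: writing $N=n_1+n_2$ with $n_1\ge n_2$, the bound comes from $n_1/2 + n_2 = N - n_1/2 \le 3N/4$ using $n_1\ge N/2$, rather than from ``at most $N/2$ from the larger path plus at most $N/2$ from the smaller'' (which only gives $N$), though your conclusion and recursion-depth bound are the same as the paper's.
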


\subsection{Computational Primitives}
\label{sec:comp}
The computational primitives described next are  primarily for aggregating, collecting, broadcasting and multicasting information in $\zeroncc$. 
The following 
problem variations have been presented in~\cite{AGGHSKL19} for the $\onencc$ model. We briefly state how we adapt them to $\zeroncc$ and restate the results for the sake of completeness.

\subsubsection{Global Computational Primitives} 
In {\em global broadcast}, a token $\tok_0$ initially held by a designated leader node $\ell$ is to be sent to all other nodes. 
After constructing the balanced binary tree, the leader can simply send the token to the root and the root can then send it down to all nodes.

To introduce {\em global aggregation}, we first need to define a few terms. An {\em aggregate function} is a function on any given set of items $I$. In the context of distributed computing, we may encounter the items in $I$ in various orders, so we use a special form of aggregate functions called {\em distributive aggregate functions}. An aggregate function $f$ is distributive if there is another aggregate function $g$ such that for every partition of $I$ into sets $I_1, I_2, \ldots, I_k$, $f(I) = g(\{f(I_1), f(I_2), \ldots, f(I_k)\})$. We assume that $g$ is known whenever $f$ is known; in fact, they are often the same (e.g, maximum and minimum).

In global aggregation, a designated leader node $\ell$ is known to all nodes, and every node $u$ in $V$ has an input, $\val_u$, of size $O(\log n)$, and is aware of a distributive aggregate function $f$ over $I$ that produces $f(I)$ of size at most $O(\log n)$. The goal is for the leader to learn $f(\{\val_v\}_{v \in V})$. Again, constructing a balanced binary tree, we may aggregate the values to the root using the standard convergecast technique, and the root will then send $f(I)$ to $\ell$.

\begin{theorem}\label{thm:global}
Global broadcast and global aggregation can be performed in $O(\log n)$ rounds.
\end{theorem}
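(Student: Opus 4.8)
The plan is to reduce both tasks to standard tree operations on the balanced binary tree $T_{BST}$ of height $O(\log n)$ produced by Theorem~\ref{thm:binarytree}, and then to observe that each such operation respects the $O(\log n)$-messages-per-round budget of the NCC model precisely because $T_{BST}$ has maximum degree $3$ (each node has at most two children and one parent, hence communicates with $O(1)$ neighbors per round, each message of size $O(\log n)$).

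First I would build $T_{BST}$ on the initial path $G_k$; by Theorem~\ref{thm:binarytree} this costs $O(\log n)$ rounds, and afterwards every node knows the IDs of its parent and its (at most two) children in the tree. To let the designated leader $\ell$ and the root $r$ of $T_{BST}$ locate one another, I would first have $r$ propagate its own ID down the tree: in round $k$, every node that received $r$'s ID in round $k-1$ forwards it to each of its children. After $O(\log n)$ rounds every node, in particular $\ell$, knows $r$'s ID. (The leader's ID is common knowledge by assumption, so $r$ could alternatively just be told directly; but the downward sweep is needed for broadcast in any case.)

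For global broadcast, $\ell$ sends $\tok_0$ to $r$ in one round and $r$ then floods it down $T_{BST}$ exactly as above; since the tree has height $O(\log n)$, all nodes hold $\tok_0$ within $O(\log n)$ rounds. For global aggregation I would run a convergecast: each leaf sends $\val_v$ to its parent, and each internal node $v$, once it has received a partial value from each of its children, forwards to its parent the value obtained by applying $g$ to the multiset of those partial values together with $f(\{\val_v\})$. Since the subtrees rooted at $v$'s children, together with $v$ itself, partition the set of items contained in $v$'s subtree, a straightforward induction using the defining property of a distributive aggregate function shows that the value forwarded by $v$ equals $f$ of the items in $v$'s subtree; hence $r$ eventually learns $f(\{\val_u\}_{u \in V})$ and forwards it to $\ell$ in one more round. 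Each partial value has size $O(\log n)$ by assumption, so every message fits the bandwidth bound, and every node sends and receives only a constant number of messages per round.

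The one thing to be careful about is the round accounting of the convergecast: a node at height $h$ above the leaves transmits only after all of its descendants have reported, so I would argue by induction on $h$ that it transmits by round $h + O(1)$; since $T_{BST}$ has height $O(\log n)$, the root finishes in $O(\log n)$ rounds. Thus the only real ``obstacle'' is bookkeeping rather than a mathematical difficulty — the substantive work (constructing the tree) is already carried out in Theorem~\ref{thm:binarytree}, and the degree-$3$ property of that tree is exactly what keeps the whole procedure within the NCC message budget.
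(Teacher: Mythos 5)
Your proposal is correct and takes essentially the same route as the paper: construct $T_{BST}$ via Theorem~\ref{thm:binarytree}, then flood down the tree for broadcast and convergecast up the tree for aggregation, relaying through the root in $O(\log n)$ rounds. You supply more bookkeeping than the paper's two-sentence sketch (how $\ell$ and $r$ learn each other's IDs, the height induction for the convergecast, and the remark that the degree-$3$ bound keeps every node within the NCC message budget), but the underlying argument is the same.
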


\subsubsection{Global Collection} Again, a designated leader $\ell$ is known to all nodes. For some $A \subseteq V$ of size $|A| = k$, each $v \in A$ has a token $\tok_v$ of size $O(\log n)$ bits and the goal is to collect $\{ \tok_v \}_{v \in A}$ at $\ell$.  Using the balanced binary tree and pipelining, we get:
\begin{theorem}{\label{thm:global_collection}}
Global collection can be performed in $O(k+ \log n)$ rounds. 
\end{theorem}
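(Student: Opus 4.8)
The plan is to realize the global collection problem by combining the balanced binary tree from Theorem~\ref{thm:binarytree} with a pipelined convergecast, and to account carefully for the fact that tokens may be concentrated in a subtree, which forces the $+\log n$ additive term rather than a clean $O(k)$ bound. First I would build the balanced binary tree $T_{BST}$ on $G_k$ in $O(\log n)$ rounds; since this tree has height $h = O(\log n)$ and constant degree (at most two children plus a parent), each node can send and receive $O(1)$ tokens per round along tree edges without violating the $O(\log n)$-message-per-round capacity of $\zeroncc$. The collection then proceeds by having every node in $A$ inject its token into the tree, and having every internal node forward tokens it receives (and its own, if it is in $A$) upward to its parent, one per round, in FIFO order.

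The key step is the timing analysis of this pipelined convergecast. I would argue that after the pipeline fills, the root receives one new token per round, so the total number of rounds is bounded by (number of tokens) $+$ (length of the longest path any single token must travel) $= k + O(\log n)$. To make this rigorous I would use the standard pipelining lemma: when each node has a queue and forwards at most one item per round along a tree of height $h$, with $k$ items total to be delivered to the root, the last item arrives by round $k + h - 1$; the proof is by induction on $h$ (or by a potential/delay argument tracking, for each token, the round at which it departs each ancestor). The final hop from the root to the designated leader $\ell$ costs one extra round (the leader's ID is known to the root since $\ell$ is known to all nodes), and can itself be pipelined, adding only $O(1)$ to the bound. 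Summing: $O(\log n)$ for the tree construction plus $k + O(\log n)$ for the pipelined collection plus $O(1)$ for delivery to $\ell$, giving $O(k + \log n)$ rounds.

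The main obstacle I expect is verifying that the per-round message-size and message-count constraints of $\zeroncc$ are never violated during the pipeline: a node could in principle be asked to forward a token to its parent while simultaneously receiving tokens from both children and relaying control information. Since the tree has constant degree, each node engages in only $O(1)$ communications per round along tree edges, and each token is $O(\log n)$ bits, so the $O(\log n)$-messages-of-$O(\log n)$-bits budget is comfortably respected; I would state this explicitly. A secondary point to check is that no node needs knowledge of any ID outside its tree neighborhood — which holds because all routing is along $T_{BST}$ edges, whose endpoints mutually know each other's IDs by construction, and the only non-tree communication is the root-to-$\ell$ hop, for which the root already knows $\ell$. Hence the procedure runs entirely within $\zeroncc$, and a fortiori within $\onencc$.
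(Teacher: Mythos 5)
Your proposal is correct and matches the paper's approach: the paper justifies the theorem with exactly the one-line sketch ``using the balanced binary tree and pipelining,'' and your proof fills in the details of that sketch — building $T_{BST}$ in $O(\log n)$ rounds via Theorem~\ref{thm:binarytree}, running a pipelined convergecast (last token reaches the root by round $k+h-1$ with $h=O(\log n)$), forwarding to $\ell$, and checking that the constant tree degree keeps the per-round message budget of $\zeroncc$ respected. No gaps.
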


\subsubsection{Local Computational Primitives} The first local task is {\em local aggregation} in which we seek to aggregate data over  $g$ different aggregation groups $A_1, A_2, \ldots, A_g \subseteq V$. 
Each node $v \in A_i$ has an input, $\val_{v,i}$. The $A_i$'s are not necessarily disjoint, so the number of input values each $v$ has equals the number of $A_i$'s that contain it. Each aggregation group $A_i$ has an associated destination node $t_i$ (not necessarily in $A_i$) that must learn $f_i(\{\val_v\}_{v \in A_i})$ for some distributive aggregate function $f_i$ known to all $v \in A_i$. Each aggregation group $A_i$ has a unique identifier\footnote{In~\cite{AGGHSKL19}, the unique group identifier happens to be the \ID of $t_i$, but as commented in~\cite{AGGHSKL19}, any arbitrary unique identifier will suffice equally well. Some of our algorithms, however, crucially require choice of group identifiers other than the \ID of $t_i$.}
$\gid_i \in [n]$ known to all its members. 

Define $L = \sum_i |A_i|$, $\ell_1 = \max_{v \in V}|\{i \in [g] \mid v \in A_i\}|$, and $\ell_2 = \max_{v \in V}|\{i \in [g] \mid v = t_i\}|$. 

The participants of $A_i \cup \{t_i\}$ are not required to know the other participants. 

To perform the aggregation task (and other subsequently defined tasks), we need to ensure that the nodes in $V$ can emulate a butterfly network (see~\cite{MU17} for definition and details) within the framework of $\zeroncc$. This can be accomplished in $O(\log n)$ time by adapting a recursive procedure given in~\cite{AS18};  also see~\cite{AGGHSKL19} for specifications of the required emulation.

\begin{theorem}[\cite{AGGHSKL19}, Thm. 2.2]
\label{thm:aggregation}
Aggregation can be performed by a randomized algorithm in 
$O\left (\frac{L}{n}+ \frac{\ell_1 + \ell_2}{\log n}+\log n\right )$
rounds w.h.p. 
\end{theorem}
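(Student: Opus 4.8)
The plan is to reduce the claim to the already-known $\onencc$ algorithm of~\cite{AGGHSKL19} by first manufacturing, inside $\zeroncc$, the communication substrate that that algorithm relies on — namely, an emulated butterfly network — and then running the $\onencc$ routine on top of it essentially unchanged. Since the extra cost of building this substrate will be only $O(\log n)$ rounds, it is absorbed into the stated bound, and nothing else about the analysis changes.

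First I would build the butterfly emulation. Starting from the initial path $G_k$, I would apply the recursive construction of~\cite{AS18} (the same one underlying Theorem~\ref{thm:binarytree}) to lay the $n$ nodes out as an emulated butterfly network of depth $\Theta(\log n)$ (see~\cite{MU17} for the structure), in such a way that every physical node learns the \IDs of the physical nodes hosting the butterfly-neighbors of each butterfly vertex it emulates. This is exactly the substrate that~\cite{AGGHSKL19} takes for granted in $\onencc$ (there it is free, since all \IDs are common knowledge), and the construction costs $O(\log n)$ rounds, which is dominated by the claimed running time.

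Next, with the butterfly in place, I would invoke the $\onencc$ aggregation procedure of~\cite{AGGHSKL19} verbatim: each group $A_i$ picks a ``rendezvous'' location in the butterfly determined by a (random) hash of $\gid_i$; the inputs $\val_{v,i}$ are routed greedily along butterfly edges toward these rendezvous points, partial results are combined en route using the companion function $g_i$ guaranteed by distributivity of $f_i$, and finally the aggregated value $f_i(\{\val_v\}_{v\in A_i})$ is delivered to $t_i$. Correctness is immediate from the definition of a distributive aggregate function (any partition of the items yields the same answer). The round-complexity bound $O\!\left(\frac{L}{n}+\frac{\ell_1+\ell_2}{\log n}+\log n\right)$ is precisely the congestion argument of~\cite{AGGHSKL19}: with high probability no physical node has to relay more than this many messages over the course of the routing, and that argument depends only on the butterfly topology and the randomness of the rendezvous choices, not on whether \IDs are initially known.

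The one step that genuinely needs care — and the step I expect to be the main obstacle — is the first: verifying that the $\zeroncc$ butterfly emulation actually hands each node every \ID it will ever need, so that the borrowed $\onencc$ routing never asks a node to address a node whose \ID it does not know. I expect this to follow from the neighbor-learning trick used throughout Section~\ref{sec:structural} (in a path, a node learns its neighbors' neighbors in one round; iterating this along the levels of the recursive butterfly construction propagates exactly the adjacency information required), but making that bookkeeping airtight is where the real work lies; once it is discharged, everything else is inherited directly from~\cite{AGGHSKL19}.
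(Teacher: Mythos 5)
Your proposal matches the paper's treatment exactly: the paper also reduces to the $\onencc$ aggregation algorithm of~\cite{AGGHSKL19} after first spending $O(\log n)$ rounds to set up the butterfly emulation in $\zeroncc$ via the recursive procedure of~\cite{AS18}, and it similarly absorbs that setup cost into the $O(\log n)$ additive term. You also correctly identify the bookkeeping about which \IDs each emulating node must learn as the only point requiring care in $\zeroncc$; the paper handles this at the same level of brevity, simply pointing to~\cite{AS18} and~\cite{AGGHSKL19} for the details.
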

The second primitive is {\em local multicasting}. This task concerns sets $A_1, A_2, \ldots, A_g \subseteq V$ and source nodes $\{s_1, s_2, \ldots, s_g\}$. Each {\em multicast group} $A_i\cup\{s_i\}$ has a unique group identifier $\gid_i$, known to all its participants. Each source node $s_i$ has a token $\tok_i$ that must reach all nodes in $A_i$. The parameters $L$ and $\ell_1$ are  defined as before. Also let $\ell_3 = \max_{v \in V}|\{i \in [g] \mid v = s_i\}|$. 
\begin{theorem}[Theorem 2.5 in~\cite{AGGHSKL19}]\label{thm:multicast}
Multicast can be performed in 
$O\left(\frac{L}{n} + \frac{\ell_1 + \ell_3}{\log n} + \log n\right)$
rounds.
\end{theorem}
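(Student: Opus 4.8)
The plan is to obtain this as a restatement of Theorem~2.5 of~\cite{AGGHSKL19} for $\zeroncc$: we run their $\onencc$ multicast algorithm on top of the butterfly emulation for $\zeroncc$ established above (the $O(\log n)$-round construction adapted from~\cite{AS18}), and the only thing that really needs checking is that no step of that algorithm relies on a node knowing the \ID of an arbitrary other node. Since all communication in the algorithm is between butterfly-neighbours, and each node learns the \IDs of its $O(1)$ butterfly-neighbours during the emulation, $\zeroncc$ suffices throughout.

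Concretely, I would proceed as follows. A shared hash function maps each $\gid_i$ to a butterfly position, and the node occupying that position serves as the \emph{coordinator} $c_i$ of group $i$; because butterfly positions are $O(\log n)$-bit addresses and the network has diameter $O(\log n)$, any node can route a packet to a given position without knowing the occupant's \ID. Each source $s_i$ routes $\tok_i$ to $c_i$; in parallel, each member $v \in A_i$ routes a short request carrying $\gid_i$ to $c_i$, so $c_i$ learns (a superset of) the members of $A_i$ and relays $\tok_i$ back to each of them. To control congestion, every routing step is performed Valiant-style --- first to a uniformly random intermediate butterfly node, then on to the true destination --- so that the load on each butterfly edge is, w.h.p., close to its expectation.

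The running-time bound then follows from the standard butterfly congestion argument of~\cite{AGGHSKL19}. There are $\Theta(L)$ packets in total; the random intermediate destinations and the hashing of the $\gid_i$'s spread them roughly uniformly over the $\Theta(n\log n)$ butterfly edges, so each packet's $O(\log n)$-hop journey contributes the $L/n$ term. A node lying in $\ell_1$ groups injects $\ell_1$ member-requests and a node that is the source of $\ell_3$ groups injects $\ell_3$ tokens; pipelining these through the depth-$O(\log n)$ butterfly yields the $(\ell_1+\ell_3)/\log n$ terms, while the additive $O(\log n)$ absorbs the butterfly diameter and the one-time cost of the emulation. A Chernoff bound over the random choices, followed by a union bound over the $\mathrm{poly}(n)$ edges, makes the bound hold w.h.p.

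I do not anticipate a deep obstacle, since the substantive content is the butterfly emulation in $\zeroncc$, which is available by assumption; the point that must be argued with care is the one in the first paragraph: that the $\onencc$ algorithm only ever addresses \emph{butterfly positions} (chosen by hashing the group identifiers) and butterfly \emph{neighbours} (reached over the emulated links), and never an arbitrary global \ID, so it ports to $\zeroncc$ with no asymptotic overhead.
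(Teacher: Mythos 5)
Your core argument matches the paper's implicit one: this theorem is a restatement of Theorem~2.5 of~\cite{AGGHSKL19}, and the only new content is the observation (made in the paragraph preceding Theorem~\ref{thm:aggregation}) that the butterfly network can be emulated in $\zeroncc$ in $O(\log n)$ rounds by adapting~\cite{AS18}, after which the original $\onencc$ multicast algorithm runs unchanged because it only ever addresses butterfly neighbours and butterfly positions, never arbitrary global \IDs. You identify this as the crux, and you are right.

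One caution about the supplementary detail you supplied: your reconstruction of the algorithm's internals has a flaw. You have each member of $A_i$ route a request to the coordinator $c_i$ and then have $c_i$ ``relay $\tok_i$ back to each of them.'' For a large group (say $|A_i| = \Theta(n)$), a single coordinator can receive and send only $O(\log n)$ messages per round, so this relay step alone would cost $\Theta(|A_i|/\log n)$ rounds and blow the claimed bound. The actual construction in~\cite{AGGHSKL19} avoids this bottleneck by growing a \emph{multicast tree} whose root is determined by hashing $\gid_i$; the members attach themselves to this tree in a bottom-up fashion, and the token then flows down the tree so that no single node is a serialization point. This inaccuracy does not undermine your proof --- since the theorem is cited, you are not obliged to re-derive the $\onencc$ algorithm, only to argue that it ports to $\zeroncc$, which you do correctly --- but if you are going to sketch the internals, the tree structure (not a star at the coordinator) is what actually delivers the $O(L/n + (\ell_1+\ell_3)/\log n + \log n)$ bound.
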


Finally, we consider the task of collecting tokens, where we again have groups $A_1, A_2, \ldots, A_g \subseteq V$, and each $v \in A_i$ has an input token $\tok_{v,i}$ of size at most $O(\log n)$. the $A_i$'s are not necessarily disjoint, so the number of tokens at each $v$ equals the number of $A_i$'s that contain $v$ (assuming $v$ has a different token for each group it belongs to). Each group $A_i$ has an associated destination node $t_i$ (not necessarily in $A_i$) that must collect the tokens in $\{\tok_{v,i}|v \in A_i\}$. As usual, each group $i$ has a unique group \ID associated with it.
For simplicity, we assume that no two groups share the same destination. 
\begin{theorem}\label{thm:token}
Token collection can be performed in $O\left (\frac{L}{n} + \frac{\ell_1
}{\log n} + \log n \right)$ rounds. 
\end{theorem}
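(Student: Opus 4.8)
The plan is to handle token collection exactly as its companion local primitives are handled: realize it by routing packets through an emulated butterfly network on the $n$ nodes, and observe that the only place where the $\onencc$ assumption of global \ID knowledge enters is the construction of the butterfly emulation. So the first step is to build this emulation in $\zeroncc$; as noted when introducing Theorem~\ref{thm:aggregation}, this takes $O(\log n)$ rounds by running the recursive procedure of \cite{AS18} on top of the balanced binary tree produced by Theorem~\ref{thm:binarytree}. Once the butterfly is in place, the $\onencc$ routing machinery of \cite{AGGHSKL19} can be invoked essentially verbatim, so the $\zeroncc$ bound matches the $\onencc$ one up to the additive $O(\log n)$ set-up cost that is already absorbed in the claimed expression.

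For the routing itself, each node $v\in A_i$ forms a packet $(\tok_{v,i},\gid_i)$; a globally shared hash function maps $\gid_i$ to a (pseudo)random ``rendezvous'' node of the butterfly (with a heavy group spread over a proportional number of rendezvous nodes so that no single intermediate node is overloaded), and the packet is sent along the canonical butterfly path $v\to r_i\to t_i$ in Valiant-style two-phase fashion. Packets of a common group share the down-segment into $t_i$, and $t_i$ outputs $\{\tok_{v,i}\mid v\in A_i\}$ once it has received all of them. The running time then follows the template of Theorems~\ref{thm:aggregation} and~\ref{thm:multicast}: an $O(\log n)$ term for set-up and butterfly depth; an $\ell_1/\log n$ term for the load a node places when injecting its (up to $\ell_1$) packets, $O(\log n)$ at a time; a routing term governed by the link congestion in the butterfly, which the randomised rendezvous/destinations bound via a Chernoff argument and which contributes the $L/n$ term; and the load at a destination. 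The key simplification over aggregation is that the hypothesis that no two groups share a destination makes the destination-multiplicity parameter — the analogue of the $\ell_2$ of Theorem~\ref{thm:aggregation} — equal to $1$, so it drops out, leaving exactly $O\!\left(L/n+\ell_1/\log n+\log n\right)$.

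I expect the delicate point to be the congestion analysis at the butterfly links for adversarially chosen, overlapping groups $A_i$: one must argue, as in \cite{AGGHSKL19}, that hashing the group identifiers makes each packet's trajectory effectively random and then apply concentration, while being careful (i) that packets of a common group are forwarded as a batch rather than counted independently, (ii) that heavy groups are split across enough rendezvous nodes to keep every intermediate load within the budget, and (iii) that the $O(\log n)$-messages-per-round cap at each node is respected throughout both phases and at the sinks. Modulo reusing the \cite{AGGHSKL19} analysis for these congestion estimates, the remainder is the bookkeeping outlined above together with the $O(\log n)$-round $\zeroncc$ emulation of the butterfly.
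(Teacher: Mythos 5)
Your proposal and the paper take genuinely different routes. The paper's proof does not re-open the butterfly routing at all: it is a short black-box \emph{reduction to} Theorem~\ref{thm:aggregation}. Concretely, for each token $\tok_{v,i}$ destined for $t_i$ it forms a separate two-party aggregation group with set $\{v\}$ and destination $t_i$, and then invokes the aggregation primitive once. The entire technical content of the paper's argument is the one obstruction this reduction faces: the two parties of each new group must agree on a group \ID without communicating, and any natural choice based on $v$'s \ID or position is unknown to $t_i$. The paper resolves this by concatenating the original group \ID with $v$'s inorder traversal number in the multicast tree used by group $i$ (computable locally by Corollary~\ref{cor:inorder}). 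After that, the stated bound is read off from Theorem~\ref{thm:aggregation}. Your route instead re-derives the routing directly — Valiant-style two-phase routing over the emulated butterfly with hashed rendezvous nodes and a Chernoff-based congestion analysis. That is essentially re-proving the engine underneath Theorem~\ref{thm:aggregation} rather than reusing it, so it is considerably heavier and, as you yourself flag, leaves the congestion estimates to ``reusing the \cite{AGGHSKL19} analysis.'' The reduction buys exactly that re-use for free.

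One substantive inaccuracy in your proposal: you claim the destination-multiplicity parameter (the analogue of $\ell_2$) becomes $1$ because distinct groups have distinct destinations. That conflates the number of \emph{groups} targeting a node with the number of \emph{messages} it must receive. In token collection, $t_i$ must receive all $|A_i|$ distinct tokens (they cannot be combined, unlike in aggregation), so the receive-load at $t_i$ is $|A_i|$, not $1$. In the paper's reduction this shows up as $\ell_2' = \max_i|A_i|$ for the induced aggregation instance, and the same load appears in any direct routing scheme. If you pursue your route, this term must be accounted for in the destination phase rather than dismissed.
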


\begin{proof}
The goal is to map the given token collection task into a suitable aggregation task. Our approach is to create aggregation groups -- one group for each token -- that comprises just two nodes each, namely, the source of the token and its destination. Concretely, consider $\tok_{v,i}$ at $v \in A_i$ that must reach $t_i$. Then, we need to form an aggregation group with the singleton set $\{v\}$ and destination $t_i$. The challenge is to get both nodes in each group to agree on a group \ID because, if we use any property of the source $v$ (either its \ID or position), the destination $t_i$ may not know it. Instead, we form a unique group \ID by concatenating the following: the group \ID of group $i$ (in the token collection instance) and $u_i(v)$, where $u_i(v)$ is the inorder traversal number of  $v$ in the multicast tree~\cite{AGGHSKL19} that is used by group $i$, which we know how to compute from Corollary~\ref{cor:inorder}. Now we can apply Theorem~\ref{thm:aggregation} to get the required result.
\end{proof}

\subsection{Sequential Havel-Hakimi Algorithm}\label{sec:shh}
The characterization of Havel~\cite{H55} and Hakimi~\cite{Ha62} for graphic sequence can be stated concisely as follows. 

\begin{theorem}[Based on~\cite{H55} and \cite{Ha62}] A non-increasing sequence $D=(d_1,d_2,...,d_n)$ is graphic if and only if the sequence $(d'_2,...,d'_n)$ is graphic,
where $d'_j =d_j -1$, for $j\in [2,d_1 +1]$, and $d'_j =d_j$, for $j\in [d_1 +2,n]$
\end{theorem}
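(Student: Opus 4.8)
The plan is to prove the Havel--Hakimi characterization via two implications, using the classical "edge-swapping" argument for the harder direction.

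\textbf{The easy direction.} Suppose the reduced sequence $(d'_2, \ldots, d'_n)$ is graphic, realized by a graph $H$ on vertices $v_2, \ldots, v_n$. I will construct a realization $G$ of $D$ by adding back a vertex $v_1$ and joining it to precisely the $d_1$ vertices $v_2, \ldots, v_{d_1+1}$ (the ones whose degree was decremented). Each of those $d_1$ vertices then has degree $d'_j + 1 = d_j$, each remaining vertex $v_j$ with $j \ge d_1 + 2$ keeps degree $d'_j = d_j$, and $v_1$ has degree exactly $d_1$. Hence $G$ realizes $D$. This direction is essentially immediate and requires only that we note $d_1 \le n - 1$ (which holds since $D$ is a degree sequence of an $n$-vertex simple graph, or rather since if $d_1 > n-1$ neither side can be graphic).

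\textbf{The hard direction.} Suppose $D$ is graphic; I want to show the reduced sequence is graphic. The subtlety is that an arbitrary realization $G$ of $D$ need not have $v_1$ adjacent to exactly $\{v_2, \ldots, v_{d_1+1}\}$ — it could be adjacent to lower-indexed (smaller-degree) vertices instead. The standard fix is to choose, among all realizations of $D$, one in which the sum $\sum_{j=2}^{d_1+1} \mathbf{1}[v_1 \sim v_j]$ (the number of neighbors of $v_1$ among the $d_1$ highest-degree vertices other than $v_1$) is maximized, and then argue this maximum must be $d_1$. If not, there exist indices $j \le d_1+1$ with $v_1 \not\sim v_j$ and $k \ge d_1+2$ with $v_1 \sim v_k$; since $d_j \ge d_k$ and $v_k$ is a neighbor of $v_1$ while $v_j$ is not, there must be a vertex $w \ne v_1$ with $w \sim v_j$ but $w \not\sim v_k$. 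Now perform the swap: delete edges $v_1 v_k$ and $w v_j$, add edges $v_1 v_j$ and $w v_k$. This preserves all degrees (so we still have a realization of $D$) but strictly increases the quantity being maximized, a contradiction. Therefore some realization $G^*$ of $D$ has $v_1$ adjacent to exactly $v_2, \ldots, v_{d_1+1}$; deleting $v_1$ from $G^*$ yields a graph on $v_2, \ldots, v_n$ whose degree sequence is exactly $(d'_2, \ldots, d'_n)$, so that sequence is graphic.

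\textbf{Main obstacle.} The genuinely delicate point is verifying that the desired swap vertex $w$ exists — i.e., that when $v_1 \not\sim v_j$, $v_1 \sim v_k$, and $d_j \ge d_k$, one can always find $w \sim v_j$, $w \not\sim v_k$, $w \ne v_1$. The argument is a counting one: the neighbors of $v_j$ (none of which is $v_1$) number $d_j$, the neighbors of $v_k$ other than $v_1$ number $d_k - 1 < d_j$, so $v_j$ has a neighbor outside $N(v_k) \setminus \{v_1\}$; one must also check this neighbor is not $v_k$ itself and not $v_1$, which follows since $v_j \not\sim v_1$ and handling the case $w = v_k$ requires noting $v_j \sim v_k$ would still leave room (as $d_j \ge d_k$ gives at least one more neighbor of $v_j$ to work with, or one argues $v_j\not\sim v_k$ can be assumed after a preliminary swap). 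I will lay out this case analysis carefully, as it is the crux of the proof; everything else is bookkeeping on degrees.
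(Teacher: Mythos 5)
The paper states this theorem without proof, citing Havel and Hakimi for the characterization before using it as the basis for the sequential algorithm of Section~3.3 and its distributed adaptations; there is no argument in the paper to compare against. Your proof is the standard one: for the backward direction, realize the reduced sequence on $v_2,\dots,v_n$ and attach a new vertex $v_1$ to $v_2,\dots,v_{d_1+1}$; for the forward direction, among all realizations of $D$ pick one maximizing the number of neighbors of $v_1$ inside $\{v_2,\dots,v_{d_1+1}\}$, and use a $2$-switch to contradict maximality when that count falls short of $d_1$. This is correct, and the extremal setup is exactly the right way to avoid the trap of working with an arbitrary realization.

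The one place you correctly flag as delicate is the existence of the swap vertex $w$. The crude count $|N(v_j)\setminus\{v_1\}| = d_j > d_k - 1 = |N(v_k)\setminus\{v_1\}|$ can return $w = v_k$ precisely when $v_j \sim v_k$, and your parenthetical fix is the right one: in that case compare $N(v_j)\setminus\{v_1,v_k\}$ (size $d_j-1$) against $N(v_k)\setminus\{v_1,v_j\}$ (size $d_k-2$); since $d_j \ge d_k$ the former is strictly larger, so some $w \notin \{v_1,v_j,v_k\}$ satisfies $w \sim v_j$ and $w \not\sim v_k$, and the switch $\{v_1v_k,\, wv_j\} \to \{v_1v_j,\, wv_k\}$ is legal. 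Once that two-line case split is written out rather than merely promised, the proof is complete. (Your remark that $d_1 \le n-1$ is needed, and is without loss of generality since otherwise neither side is graphic, is also correct and worth keeping.)
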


\begin{center}
  \includegraphics[height=60mm, width=.49\textwidth]{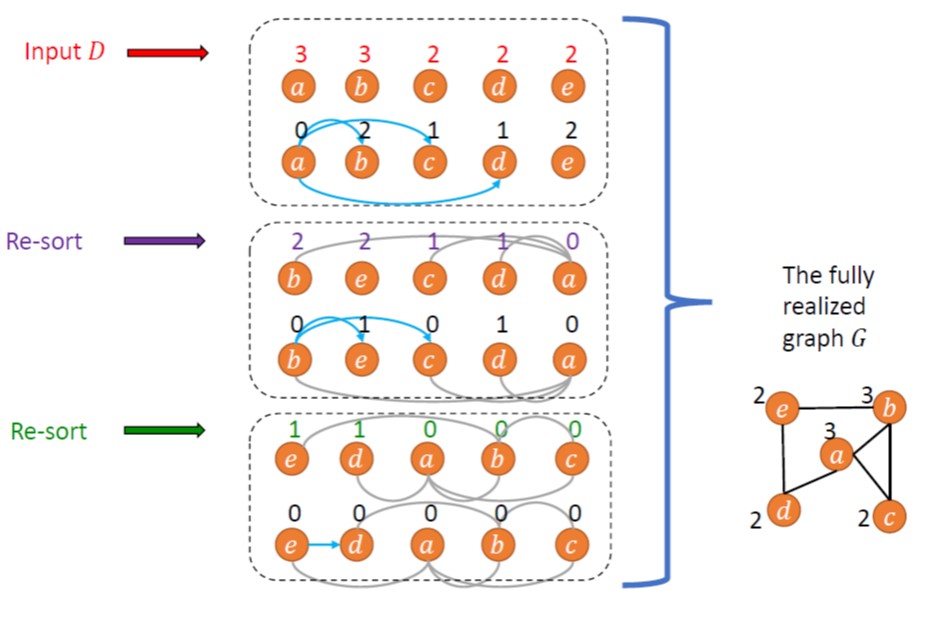}
  \captionof{figure}{A demonstration of the sequential Havel Hakimi. In each step, in the sorted degree sequence (sorted in non-increasing order of degrees), we realize the current highest degree.}
   \label{fig:havelhakimi}
\end{center}

This characterization directly implies an $O(\sum_{i=1}d_i)$ time sequential algorithm, known as the Havel-Hakimi algorithm, for constructing a realizing graph $G=(V,E)$  where $V=\{v_1,...v_n\}$ and $deg(v_i)=d_i$, 
or deciding that no such graph exists. The algorithm works as follows.
Initialize $G=(V,E)$ to be an empty graph on $V$.
In step $i$: (a) remove $d_i$ from $D$, and set $d_j=d_j-1$ for all $j \in[i+1,d_i+i+1]$; (b) set the neighborhood of the node $v_i$ to be $\{v_{i+1},v_{i+2},...v_{i+1+d_i}\}$;
(c) finally, sort the updated $D$ (as well as associate nodes). 
If, at any step, $D$ contains a negative entry, then the sequence is not realizable. See Figure~\ref{fig:havelhakimi} for an illustration.

\section{\hbox{Distributed Degree Realization in Graphs}}
\label{sec:deg-realiz-general}

In Subsection \ref{ss:implicit} we present an $\tO(\min\{\sqrt{m},\Delta\})$ time algorithm for implicit realization, where $\Delta$ is the maximum degree and $m$ is the number of edges in the realizing graph. In Subsection \ref{ss:explicit} we extend this to an $\tO(\Delta)$ time algorithm for explicit realization. Then, in Subsection \ref{ss:approx} we present a distributed degree realization algorithm that, allowing for small variations in the assigned degrees, obtains a realization in the scenarios where the degree sequence is not graphic. This plays a crucial role in the connectivity threshold realizations presented later in Section~\ref{section:Connectivity}.


\subsection{Implicit Degree Realization in \texorpdfstring{$\tO(\min\{\sqrt{m},\Delta\})$}~ Time}
\label{ss:implicit}

We first give a high-level description of our approach to implicit  degree realization. 
It is a parallel version of the well known
\hh procedure~\citep{Ha62,H55}. Recall that the \hh algorithm works by  ensuring that the highest degree node is satisfied first (and once satisfied, its degree is set to $0$). In our case, we slightly adapt the algorithm to allow several nodes of the highest degree to be satisfied in parallel. We ensure that in every round, the nodes having the highest degree disappear (they are assigned their neighbors in the final realizing graph $G$). 

 Our algorithm proceeds in phases. Each phase operates as follows. We first sort the nodes in non-increasing order of degrees in $O(\log^3{n})$ rounds via techniques presented in Section \ref{s:prelim}. These nodes are arranged in a path graph $L$. Each node determines its position in $L$ (see Subsection~\ref{sec:structural}). Let $\delta_j$ be the value of the maximum degree of any node in the $j^{th}$ phase of the algorithm and let $N$ denote the number of nodes of degree $\delta_j$ in that phase. 
 Define 
 $$q ~=~ \max\left\{1,\left\lfloor \frac{N}{\delta_j+1}\right\rfloor\right\}~.$$ 
 Then on $L$, starting with the highest degree nodes, divide the first $q(\delta_j+1)$ nodes into $q$ multicast groups, $g_1,\ldots,g_q$ of $\delta_j+1$ nodes each.  
 In every group $g_i$, $i \in [q]$, the leftmost node $t_i$ in the path graph $L$ multicasts its \ID to nodes in $g_i$. Thus, every other node $v \in g_i$, knowing $t_i$'s \ID, forms an implicit overlay edge with $t_i$. Note that each group $g_i$ can use $i$ as the group's unique identifier. The nodes $t_i$, $i \in [g]$ set their degrees to $0$, while all other nodes in the groups decrease their degree by $1$. This means that $q$ nodes of degree $\delta_j$ have now disappeared from the graph (as they have been realized and are no longer under consideration). We then re-sort the list and repeat the above procedure until all nodes have degree $0$. The detailed pseudocode is presented in Algorithm~\ref{Algorithm:dist-deg-real}.

\begin{algorithm}[!ht]
\BlankLine
\KwIn{An $n$-node network in which each node $x$ is provided with a degree $d(x)$.}
\KwOut{A corresponding implicit realization that satisfies the degrees assigned to the nodes.}
\BlankLine
$j=0$\\
\While{$(1)$}{
$j \leftarrow j+1$\\
Sort the nodes in the non-increasing order of degrees; refer to the nodes as $x_1,\ldots,x_n$ such that $d(x_1)\geq d(x_2)\ldots \geq d(x_n)$.\\
(After sorting, each node knows its position in $L$.)\\
Broadcast $\delta_j=d(x_1)$, the current maximum degree, to all the nodes in the network.\\
\uIf{$\delta_j\geq 1$}{
	Aggregate and broadcast to all the nodes the value $N=\max\{i~|~d(x_i)=\delta_j\}$.\\
	Let $q=\max\{1,\lfloor \frac{N}{\delta_j+1}\rfloor\}$.\\
	\ForEach{$i\in\{\alpha (\delta_j+1)-\delta_j~|~\alpha\in[q]\}$, in parallel}
	{
		Set $d(x_i)=\textsc{nil}$.\\	
		Broadcast ${\ID}(x_i)$ to the next consecutive $\delta_j$ successors of $x_i$, i.e. $x_{i+1},x_{i+2},\ldots,x_{i+\delta_j}$.\\
		The nodes $x_{i+1},\ldots,x_{i+\delta_j}$ store ${\ID}(x_i)$ in their neighbor-list, and decrease their degrees by $1$.
		In the process, if the degree of a node becomes negative, then it broadcasts $\textsc{unrealizable}$, and the execution terminates.\label{step:unrealizable_versus_reset-degree-to-0}
	}
	}
	\lElse{exit the while loop}
}
\caption{Distributed-Degree-Realization}
\label{Algorithm:dist-deg-real}
\end{algorithm}

\begin{lemma}\label{lem:whileloop}
The while loop in Algorithm~\ref{Algorithm:dist-deg-real} is invoked $O(\min\{\Delta,\sqrt m\})$ times (i.e., there are at most $O(\min\{\Delta,\sqrt m\})$ phases).
\end{lemma}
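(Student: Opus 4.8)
### Proof Proposal

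\textbf{Overall approach.} The plan is to bound the number of phases by a potential-function / progress argument, tracking two quantities that each drop by a controlled amount per phase. For the $O(\Delta)$ bound I would track the current maximum degree $\delta_j$; for the $O(\sqrt m)$ bound I would track the number of remaining edges $m_j$ (half the sum of remaining degrees). Since both bounds must hold simultaneously, I will argue that \emph{in every phase} the algorithm makes enough progress on whichever of the two quantities is relevant, so that after $O(\min\{\Delta,\sqrt m\})$ phases all degrees are zero.

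\textbf{The $O(\Delta)$ bound.} First I would show that the maximum degree drops by at least one every two phases, which immediately gives at most $2\Delta = O(\Delta)$ phases. In a phase with maximum degree $\delta_j$ and $N$ nodes of that degree, there are two cases. If $N \geq \delta_j+1$, then $q = \lfloor N/(\delta_j+1)\rfloor \geq 1$ and the $q$ groups together remove $q$ nodes of degree exactly $\delta_j$, consuming $q(\delta_j+1)$ of the highest-degree nodes; one checks that after removing these $q$ "satisfied" nodes, the number of remaining nodes of degree $\delta_j$ is $N - q(\delta_j+1) - (\text{those demoted to } \delta_j-1) < \delta_j+1$, and moreover every surviving node that still has degree $\delta_j$... — here I need to be slightly careful, so the cleaner statement to prove is: after at most two consecutive phases at the same value $\delta_j$, no node of degree $\delta_j$ remains. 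If $N \leq \delta_j$, then $q=1$, one node of degree $\delta_j$ is realized and its $\delta_j$ successors each lose a degree; since a node of degree $\delta_j$ with only $\le \delta_j - 1$ equal-degree peers is matched within one or two such single-group phases, again $\delta_j$ strictly decreases within $O(1)$ phases. Summing gives $O(\Delta)$.

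\textbf{The $O(\sqrt m)$ bound.} Here I would show that in each phase the edge count $m_j$ decreases by at least a constant fraction of $\delta_j$, i.e. roughly $m_{j} - m_{j+1} = \Omega(\delta_j)$ — indeed each realized center $t_i$ contributes $\delta_j$ new edges, and $q \geq 1$ so at least $\delta_j$ edges are committed per phase. Combined with $\delta_j \geq$ (something like) $\sqrt{m_j}$ whenever there are many edges left — more precisely, as long as $m_j$ is large the maximum degree $\delta_j$ is at least $\Theta(\sqrt{m_j})$ cannot be guaranteed directly, so instead I would argue as follows: either $\delta_j \geq \sqrt m$, in which case we are already in the regime controlled by the $O(\Delta)=O(\sqrt m)$ analysis, or $\delta_j < \sqrt m$, in which case $N$ could be large, $q = \lfloor N/(\delta_j+1)\rfloor$ groups are formed, and the phase removes $q\,\delta_j = \Omega(N)$ edges; since $\sum$ of small-degree contributions is $O(m)$ this can happen only $O(\sqrt m)$ times. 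The bookkeeping is: partition the phases into "high" phases ($\delta_j \geq \sqrt m$) and "low" phases ($\delta_j < \sqrt m$); high phases number $O(\sqrt m)$ by the degree argument, and low phases each destroy $\Omega(\sqrt m)$ edges (because $q(\delta_j+1) \geq N \geq \delta_j+1$ forces at least $\delta_j+1$... actually at least $q\delta_j \ge \delta_j$ edges, and when $\delta_j$ is small the number of nodes at that level is what is large) — so I would instead charge $\Omega(q)$ removed \emph{nodes} of degree $\delta_j$ and use $\sum_j q_j (\delta_j+1) = O(m)$. This charging is the cleanest: total degree is $2m$, each phase permanently zeroes out the degrees of $q_j$ nodes whose combined degree was $q_j\delta_j$, so $\sum_j q_j \delta_j \le 2m$; and whenever $q_j = 1$ we have $N \le \delta_j$, a situation that recurs $O(\Delta) = O(\sqrt m)$ times. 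Together these give $O(\sqrt m)$ phases.

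\textbf{Main obstacle.} The delicate point — and the step I expect to require the most care — is handling the interaction between the "$q=1$" phases (where only one node is realized and $\delta_j$ may stay the same, so the edge-count argument is weak) and the "$q \geq 2$" phases (where many edges vanish but $\delta_j$ need not drop). A naive argument double-counts or misses the case where the algorithm oscillates. The fix is to prove a clean invariant such as: between any two consecutive phases in which $q=1$ and the maximum degree is unchanged, the value $N$ of equal-degree nodes strictly decreases, so a run of $q=1$ phases at a fixed $\delta_j$ has length at most $\delta_j+1 = O(\Delta)$, while $q\ge 2$ phases are bounded by the edge-charging $\sum q_j\delta_j \le 2m$. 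Verifying that these two regimes together cover all phases, and that the per-phase progress claims hold in the corner cases (e.g. when $N$ is not a multiple of $\delta_j+1$, or when degrees collide after the decrement step and re-sorting), is the routine-but-fiddly part that the full proof must check.
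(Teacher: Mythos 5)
Your $O(\Delta)$ argument matches the paper's: after at most two consecutive phases at maximum degree $\delta_j$, the max degree strictly decreases --- when $N \le \delta_j+1$ a single phase eliminates all degree-$\delta_j$ nodes, and when $N > \delta_j+1$ one phase leaves $N - q(\delta_j+1) < \delta_j+1$ of them, so the following phase drops the max. Hence at most $2\Delta$ phases.

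Your $O(\sqrt m)$ argument, however, has a genuine gap. Several intermediate claims fail: low phases do \emph{not} ``each destroy $\Omega(\sqrt m)$ edges'' (a $q=1$ phase with $\delta_j<\sqrt m$ commits only $\delta_j<\sqrt m$ edges); the recurring identification ``$O(\Delta)=O(\sqrt m)$'' is unjustified in general (for a star graph, $\Delta=m\gg\sqrt m$); and the ``main obstacle'' bound, that a run of $q=1$ phases at fixed $\delta_j$ has length $\le\delta_j+1$, would --- even if tight --- sum over $\delta_j$ to $O(\Delta^2)$ rather than $O(\Delta)$; you should instead reuse the two-phases-per-$\delta$ invariant you already proved. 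Finally, the charging $\sum_j q_j\delta_j\le 2m$ applied blindly to $q_j\ge 2$ phases gives only $O(m)$, since $q_j\delta_j$ can be as small as a constant there.

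The clean route, which is the paper's, is a two-part count. While $\delta_j>\sqrt m$, each phase permanently zeroes the degree of $q_j\ge 1$ nodes whose current (hence initial) degree exceeds $\sqrt m$; there are at most $2m/\sqrt m=2\sqrt m$ such nodes, so at most $2\sqrt m$ high phases. This is exactly your charging $\sum q_j\delta_j\le 2m$ \emph{restricted to high phases}, where $\delta_j>\sqrt m$ forces $\sum q_j\le 2\sqrt m$ --- but you never actually impose the restriction that makes it yield $O(\sqrt m)$. Once $\delta_j\le\sqrt m$, the two-phases-per-$\delta$ invariant with $\delta_j$ ranging over $[1,\sqrt m]$ gives at most $2\sqrt m$ low phases, with no edge-charging needed. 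Together these give $O(\sqrt m)$ phases.
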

\begin{proof}
Consider a phase $j$ (comprising the $j^{th}$ iteration of the while loop), where $\delta_j$ is the maximum degree and $N$ is the number of nodes of degree $\delta_j$. 
We distinguish two cases.

If $N\leq \delta_j+1$, then the nodes $x_{2},x_{3},\ldots,x_{1+\delta_j}$ store ${\ID}(x_1)$ in their neighbor-list, and decrease their degree by $1$. Also, $x_1$ is removed by setting its degree to zero. So overall, the maximum degree in graph has decreased by at least $1$. 

Otherwise ($N>\delta_j+1$), for $q=\lfloor N/(\delta_j+1)\rfloor$ distinct values of $i$, the nodes $x_{i+1},x_{i+2},\ldots,x_{i+\delta_j}$ store ${\ID}(x_i)$ in their neighbor-list, and decrease their degree by $1$. Intuitively, we remove $q$ stars from $G$. Though the maximum degree may remain $\delta_j$ after this process, the number of nodes left with degree $\delta_j$ must be at most $\delta_j$. This implies that in the next round, $N$ would be bounded by $\delta_j+1$, so at most one additional round would be needed to eliminate $\delta_j$.

Basically, in each iteration of the while loop, at least one node with the maximum degree for that phase is removed from consideration.
This guarantees the number of iterations to be at most $\Delta$. Observe that the number of nodes of degree greater than $\sqrt m$ in any $m$-edge graph is at most $O(\sqrt m)$. Also, in each iteration, the degree of at least one node of maximum degree becomes zero; it follows that the number of iterations required to remove all nodes with degree $>\sqrt{m}$ is at most $O(\sqrt m)$. For the remaining nodes with degree $\leq \sqrt{m}$, the $O(\Delta)$ bound translates into an $O(\sqrt m)$ bound.
Hence, the number of iterations is bounded by $O(\min\{\Delta,\sqrt m\})$.
\end{proof}

\begin{lemma}\label{lem:phase}
One phase (i.e., a single iteration of the while loop in Algorithm~\ref{Algorithm:dist-deg-real}) requires at most $\tO(1)$ rounds.
\end{lemma}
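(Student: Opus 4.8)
The goal is to show that a single iteration of the while loop runs in $\tO(1)$ rounds. The plan is to account for the cost of each line of Algorithm~\ref{Algorithm:dist-deg-real} separately and verify that each is either $O(\plog n)$ or falls within the parameter regime where the primitives of Section~\ref{s:prelim} run in $\tO(1)$ rounds. First I would handle the sorting step: by the sorting theorem (the $O(\log^3 n)$-round sorted-path construction in Section~\ref{sec: sorting}), re-sorting the nodes in non-increasing order of degree, together with building the associated balanced binary search tree, costs $O(\log^3 n) = \tO(1)$ rounds, and by Corollary~\ref{cor:inorder} each node can learn its position in $L$ in $O(\log n)$ additional rounds. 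Next, broadcasting $\delta_j = d(x_1)$ and aggregating/broadcasting $N = \max\{i \mid d(x_i) = \delta_j\}$ are a global broadcast and a global aggregation (with $f = g = \max$ as the distributive aggregate function, where each node contributes its index if its degree equals $\delta_j$ and $0$ otherwise), so by Theorem~\ref{thm:global} these cost $O(\log n)$ rounds. Computing $q = \max\{1, \lfloor N/(\delta_j+1)\rfloor\}$ is local.

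The substantive step is the parallel star-formation: for each $\alpha \in [q]$ the center node $x_{i}$ with $i = \alpha(\delta_j+1) - \delta_j$ multicasts its \ID to its $\delta_j$ successors $x_{i+1}, \ldots, x_{i+\delta_j}$ on the path $L$. I would cast this as a single instance of the local multicast task (Theorem~\ref{thm:multicast}) with $g = q$ multicast groups, where group $\alpha$ consists of the $\delta_j+1$ consecutive nodes starting at position $\alpha(\delta_j+1)-\delta_j$ (group identifier $\alpha$, source $s_\alpha = x_i$). Since the groups are disjoint intervals of $L$, we have $L = \sum_\alpha |A_\alpha| \le q(\delta_j+1) \le n$, and every node belongs to at most one group and is the source of at most one group, so $\ell_1 \le 1$ and $\ell_3 \le 1$. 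Hence Theorem~\ref{thm:multicast} gives a round complexity of $O\!\left(\frac{L}{n} + \frac{\ell_1 + \ell_3}{\log n} + \log n\right) = O(\log n) = \tO(1)$. The only subtlety is that the members of group $\alpha$ must agree on the group identifier and know which group they are in: this follows because after sorting each node knows its position in $L$ (Corollary~\ref{cor:inorder}), $\delta_j$ is common knowledge, and membership in group $\alpha$ is determined purely by the node's position modulo $(\delta_j+1)$ within the first $q(\delta_j+1)$ nodes. Once a node receives the multicast \ID it updates its neighbor-list and decrements its degree locally; detecting a negative degree and broadcasting \textsc{unrealizable} is one more global broadcast, $O(\log n)$ rounds.

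Summing over the (constantly many) lines, one phase costs $O(\log^3 n) = \tO(1)$ rounds. The step I expect to require the most care is the reduction of the star-formation to the multicast primitive — specifically, making sure the preconditions of Theorem~\ref{thm:multicast} are met (agreement on group identifiers without the endpoints knowing each other in advance, and the bounds $\ell_1, \ell_3 = O(1)$), and double-checking that the butterfly-network emulation underlying that primitive is available in $\zeroncc$ (which it is, by the $O(\log n)$-round emulation cited in Section~\ref{sec:comp}). Everything else is a routine invocation of the global-broadcast, global-aggregation, and sorting primitives already established.
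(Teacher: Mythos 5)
Your proof follows the paper's argument essentially line by line: sort in $O(\log^3 n)$ rounds so each node learns its position, use global broadcast/aggregation (Theorem~\ref{thm:global}) to disseminate $\delta_j$ and $N$, and cast the parallel star-formation as one multicast instance (Theorem~\ref{thm:multicast}) with $q$ disjoint groups whose identifiers nodes compute locally from their positions. You are slightly more explicit than the paper in checking the parameters $L,\ell_1,\ell_3$ of the multicast primitive (and have a minor slip saying group membership is determined by position \emph{modulo} $\delta_j+1$ rather than by integer division), but the approach and conclusion are the same.
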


\begin{proof}

Consider the sorting, aggregation of frequency/ maximum, and selective broadcasting procedures in an individual phase (steps 2-10 in Algorithm~\ref{Algorithm:dist-deg-real}).
The sorting in step 2 is accomplished in $O(\log^3 n)$ rounds via the sorting techniques presented in Sect. \ref{s:prelim}. This allows all nodes to be arranged in the path $L$ (sorted by decreasing order of degrees). Crucially, at the end of step 2 each node knows its position in $L$. We then accomplish step 4 via a global broadcast and inform all the nodes in the network of the value $\delta_j$. Step 6 can then be accomplished by aggregation and broadcast. First, we calculate the value of $N$, by having each node $x_i$ such that $d(x_i)=\delta_j$ aggregate towards the node $x_1$ using the group \ID $1$. Then $x_1$ performs a global broadcast to inform the nodes in $L$ of  $N$. All of the above can be accomplished in $O(\log n)$ rounds using the global aggregation and broadcast techniques discussed in Sect. \ref{s:prelim} (Theorem \ref{thm:global}).

Once $N$ is known to all the nodes in the network, each node can locally compute the value of $q$. In step 7, we form $q$ distinct groups for $i\in[1,q]$. Each $i$ is a group \ID for any node that is going to become a neighbor for $x_i$. Note that a node can calculate which distinct group it belongs to locally (with information about its position and $N$). Thus, for any $i$, each node $x_i$ can use $i$ as the group \ID and use the multicast algorithm to broadcast its \ID to nodes $x_{i+1},x_{i+2},\ldots,x_{i+\delta_j}$. All of this can be done in parallel. Notice that by Theorem~\ref{thm:multicast}, steps 5-7 can be done in $O(\log n)$ rounds.
\end{proof}

We thus obtain the following theorem.

\begin{theorem}
There exists a procedure for implicitly realizing any given length $n$ graphic sequence $D=(d_1,\ldots,d_n)$ in $\tilde{O}(\min\{\sqrt{m},\Delta\})$ rounds, in both the $\zeroncc$ and $\onencc$ models, where
$\Delta$ is the maximum degree in $D$ and $m$ is the number of edges required for the realization.
\label{theorem:deg-seq-gen-1}
\end{theorem}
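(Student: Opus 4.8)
The plan is to combine the two lemmas just proved with the correctness of the parallel Havel--Hakimi scheme. Concretely, Theorem~\ref{theorem:deg-seq-gen-1} follows almost immediately once we establish (i) that Algorithm~\ref{Algorithm:dist-deg-real} runs for $O(\min\{\sqrt m,\Delta\})$ phases, each taking $\tilde O(1)$ rounds, and (ii) that when it terminates without any node announcing \textsc{unrealizable}, the implicit overlay it has built is a correct realization of $D$ (and conversely, if $D$ is graphic then no node ever announces \textsc{unrealizable}). Part (i) is exactly Lemma~\ref{lem:whileloop} times Lemma~\ref{lem:phase}, giving the stated $\tilde O(\min\{\sqrt m,\Delta\})$ round bound, so the only real work left is part (ii), the correctness argument.

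For correctness I would argue by induction on the phases, mirroring the classical Havel--Hakimi characterization (the theorem attributed to~\cite{H55,Ha62} stated above). The invariant is: after each phase, the residual degrees held by the not-yet-finished nodes form a graphic sequence iff the original $D$ was graphic, and the implicit edges committed so far are consistent with \emph{some} realization of $D$. Within a single phase, when $N\le \delta_j+1$ we pick the single top node $x_1$ and attach it to its $\delta_j$ highest-degree successors and zero out its degree — this is literally one Havel--Hakimi step, which preserves graphic-ness by the stated theorem. When $N>\delta_j+1$ we instead peel off $q=\lfloor N/(\delta_j+1)\rfloor$ vertex-disjoint stars simultaneously, each centered at one of the $x_i$'s and using the $\delta_j$ nodes immediately following it; I need to check that doing these $q$ Havel--Hakimi-type reductions in parallel is equivalent to doing them one after another in the sorted order, which holds because the star centers are the $q$ highest-degree vertices, all of equal degree $\delta_j$, and the chosen neighbor sets are consecutive disjoint blocks in the sorted list, so each star-removal leaves the relative order of the remaining high-degree vertices unchanged and the next center is still a legal Havel--Hakimi pivot. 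Hence $q$ parallel steps $=$ $q$ sequential canonical steps, and graphic-ness is preserved. The \textsc{unrealizable} detection is correct because a degree goes negative during a reduction precisely when the current residual sequence fails the Havel--Hakimi test, which by the invariant can only happen if $D$ was not graphic to begin with.

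I would also note the bookkeeping points needed to make the implicit realization literally satisfy the degree constraints: every edge $(x_i, x_{i+k})$ created in a phase is recorded by the endpoint $x_{i+k}$ (which learns $\ID(x_i)$ via the multicast), so the overlay is a valid implicit realization in the sense defined in the problem statement; a center $x_i$ accumulates exactly $\delta_j$ new neighbors in the phase where it is realized and none afterward (its degree is set to \textsc{nil}), while a non-center node's committed-edge count plus its residual degree is invariant across phases, so at termination (all residual degrees $0$) every node $v$ has exactly $d(v)$ recorded incident edges. The termination condition $\delta_j = 0$ is reached in finitely many phases by Lemma~\ref{lem:whileloop}. Putting these together: the algorithm halts in $\tilde O(\min\{\sqrt m,\Delta\})$ rounds, outputs \textsc{unrealizable} iff $D$ is not graphic, and otherwise outputs a correct implicit realization; since Lemma~\ref{lem:phase} only uses primitives (sorting, global aggregation/broadcast, multicast) that work in $\zeroncc$ (and a fortiori in $\onencc$), the bound holds in both models.

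The main obstacle I anticipate is the parallel-step equivalence in the $N>\delta_j+1$ case: one has to be careful that simultaneously removing $q$ stars really does coincide with some valid sequential Havel--Hakimi execution, including the re-sorting that the sequential algorithm performs between steps. The key observation that makes this go through is that all $q$ pivots have the same degree $\delta_j$ and the neighbor blocks are disjoint consecutive segments of the non-increasing list, so whatever re-sorting happens, the multiset of residual degrees after the $q$ parallel decrements equals the multiset after $q$ canonical sequential decrements — and since the Havel--Hakimi characterization only cares about the residual multiset, graphic-ness (and the non-negativity test) is preserved. Everything else is routine given the two lemmas.
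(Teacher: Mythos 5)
Your argument matches the paper's proof exactly on the round bound: combine Lemma~\ref{lem:whileloop} (at most $O(\min\{\sqrt m,\Delta\})$ phases) with Lemma~\ref{lem:phase} (each phase costs $\tilde O(1)$ rounds), which is precisely the paper's two-line proof of this theorem. You additionally supply the correctness argument the paper leaves implicit — that when $N>\delta_j+1$ all $q(\delta_j+1)\le N$ vertices touched in the phase have degree exactly $\delta_j$, so the residual degree multiset after the $q$ parallel star-removals coincides with that after $q$ canonical sequential Havel--Hakimi steps, making the realizability test faithful — and this addition is sound (modulo the minor imprecision that the $q$ centers are not literally the $q$ top positions in $L$, only positions within the top $N$, which is harmless since all those positions carry degree $\delta_j$).
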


\begin{proof}
By lemma~\ref{lem:whileloop}, Algorithm~\ref{Algorithm:dist-deg-real} involves at most $O(\min\{\sqrt{m},\Delta\})$ iterations of the while loop. 
By Lemma \ref{lem:phase}, a single iteration of the while loop (steps 2-10) can be performed in $\tilde{O}(1)$ rounds. It follows that the entire construction process takes $\tO(\min\{\sqrt{m},\Delta\})$ rounds.
\end{proof} 

\subsection{An \texorpdfstring{$\tO(\Delta)$}~  Time Algorithm for Explicit Degree-Realization}
\label{ss:explicit}

We now describe the extension of Theorem~\ref{theorem:deg-seq-gen-1} for explicit realization.
We first execute Algorithm~\ref{Algorithm:dist-deg-real}. At the end of its execution, each edge is stored implicitly. That is, for any edge $e=(u,v)$ that was formed, at least one of its endpoints (say $u$) is aware of the edge's existence (and of $v$'s \ID). Therefore, $u$ must communicate its \ID to $v$ to make the realization explicit. 
To accomplish this, we create a group for each $v$ such that the associated set is all nodes that must communicate their \IDs to $v$. 
Now applying Theorem~\ref{thm:token}, we conclude the following.
\begin{theorem}
There exists a procedure for explicitly realizing any given 
length $n$ graphic sequence $D=(d_1,\ldots,d_n)$ in $O(m/n + \Delta/\log n + \log n)$ rounds, in both the $\zeroncc$ and $\onencc$ models,
where $m = (1/2) \sum_i d_i$ and $\Delta = \max_i d_i$. 
\label{theorem:deg-seq-gen-2}
\end{theorem}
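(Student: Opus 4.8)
The plan is a two‑stage reduction built on Theorem~\ref{theorem:deg-seq-gen-1}: first run Algorithm~\ref{Algorithm:dist-deg-real} to obtain an \emph{implicit} realization $G$ of $D$, and then append one ``symmetrization'' round‑block in which, for every edge, the endpoint not yet aware of it is informed. I would begin by recalling the precise form in which Algorithm~\ref{Algorithm:dist-deg-real} leaves the edges: whenever a node $x$ is peeled it becomes the centre of a star whose leaves are the consecutive successors of $x$ in the then‑current sorted list, and it is exactly those leaves that record $\ID(x)$ in their neighbour lists. Consequently, for every edge $(x,y)$ of $G$ exactly one endpoint — the leaf, i.e.\ the one peeled later — knows the \ID\ of the other, while the centre knows only that the edge exists (in $\zeroncc$ it learns the leaf's \emph{position} in the list, not its \ID). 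Hence it suffices that every leaf send its own \ID\ to the corresponding centre.

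The second stage is a token‑collection instance fed into Theorem~\ref{thm:token}. For each node $x$ that was peeled, form a group $A_x$ consisting of the leaves of $x$'s star, with destination $x$ and with the token held by a leaf $u\in A_x$ being $\ID(u)$. The group identifier can be taken to be $\ID(x)$ itself: it is known to $x$ and to every member of $A_x$ (each leaf stored it), and distinct groups have distinct destinations since each node is peeled at most once --- so, unlike in the generic reduction that proves Theorem~\ref{thm:token}, no extra machinery is needed to make the endpoints agree on the group identifier. I would then bound the two governing parameters: $L=\sum_x|A_x|$ equals the number of edges of $G$, namely $m$, because every edge yields exactly one leaf‑to‑centre token; and $\ell_1=\max_u|\{x:u\in A_x\}|$ is the maximum, over nodes $u$, of the number of incident edges of $G$ that $u$ learned as a leaf, which is at most $\deg_G(u)=d(u)\le\Delta$. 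Plugging $L=m$ and $\ell_1\le\Delta$ into Theorem~\ref{thm:token} gives $O(m/n+\Delta/\log n+\log n)$ rounds; afterwards each centre appends the received \IDs\ to its neighbour list, both endpoints of every edge become mutually aware, and $G$ is explicit. Correctness and the detection of unrealizability are inherited from Theorem~\ref{theorem:deg-seq-gen-1}, since symmetrization alters no edge and only adds knowledge.

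The step I expect to be the main obstacle is making the stated running time hold for the \emph{entire} procedure rather than just for the symmetrization block: that block costs $O(m/n+\Delta/\log n+\log n)$ as above, but a black‑box call to Theorem~\ref{theorem:deg-seq-gen-1} for the implicit stage costs $\tO(\min\{\sqrt m,\Delta\})$, whose polylogarithmic overhead is not, in general, absorbed by $O(m/n+\Delta/\log n+\log n)$. To close this gap I would avoid forming the star edges one phase at a time and instead compute the Havel--Hakimi peeling \emph{schedule} — which node becomes a star centre at which step, and with which leaf set — from a single sorted arrangement of the nodes, and then emit all $m$ star edges together in one token‑collection round‑block, again with $L=m$ and $\ell_1\le\Delta$. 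The delicate point here is exactly the re‑sorting that occurs between peels (a demoted leaf can overtake equal‑degree nodes), so the crux is to show that the schedule is recoverable from the initial sorted sequence, e.g.\ by prefix‑sum arguments on that sequence in the spirit of parallel degree‑sequence realization. With the schedule in hand, both the implicit construction and the symmetrization collapse into a single $O(m/n+\Delta/\log n+\log n)$-round token collection (preceded only by a one‑time sort), which yields the claimed bound.
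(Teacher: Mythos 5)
Your first two stages are precisely the paper's proof: run Algorithm~\ref{Algorithm:dist-deg-real} to obtain the implicit realization, then form one token-collection group per star centre whose members are the leaves that stored its \ID, and invoke Theorem~\ref{thm:token} with $L=m$ and $\ell_1\le\Delta$ so that each leaf delivers its own \ID{} to the centre; your remark that the centre's \ID{} can serve directly as the group identifier (since every leaf already holds it) is a harmless simplification of the generic reduction used to prove Theorem~\ref{thm:token}. Where you diverge is only in your final paragraph, and there you are reacting to a genuine looseness in how the theorem is phrased rather than to a defect in the method: the displayed bound $O(m/n+\Delta/\log n+\log n)$ is the cost of the \emph{conversion} step, and the intended total running time of the procedure is $\tO(\Delta)$ (as the section heading and the abstract state), which implicit realization in $\tO(\min\{\sqrt m,\Delta\})$ rounds followed by the token collection does achieve. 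Consequently the ``precompute the entire Havel--Hakimi peeling schedule from a single sort and emit all $m$ edges in one token-collection block'' construction you sketch is not needed for what the paper actually proves --- and it is also the one part of your write-up that is not established, since recovering the schedule in the presence of the inter-phase re-sorting is left as a plan rather than an argument.
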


\subsection{Approximately Realizing Non-Graphic Sequences}
\label{ss:approx}

We next consider the case of a non-realizable degree sequence $D=(d_1,d_2,\ldots,d_n)$. An {\em upper envelope} to $D$ is a degree sequence $D'=(d'_1,d'_2,\ldots,d'_n)$ satisfying $d'_i\geq d_i$ for every $i$. A natural goal is to find a realization for an upper envelope $D'$ of low total {\em discrepancy} with $D$, defined as $\varepsilon(D,D')=\sum_{i=1}^n (d'_i - d_i)$. The question of minimizing the discrepancy was studied by Hell and Kirkpatrick in the centralized setting~\cite{HellK:09}. We present a distributed solution that provides an explicit realization of an upper envelope $D'$ for a given non-realizable sequence, with a discrepancy of at most $\sum_{i=1}^n d_i$. This only requires the following alteration to step~\ref{step:unrealizable_versus_reset-degree-to-0} of Algorithm~\ref{Algorithm:dist-deg-real}.

\vspace{2mm}
\begin{center}

\fbox{

\parbox{0.46\textwidth}{
  
    \textbf{Step~\ref{step:unrealizable_versus_reset-degree-to-0}:} The nodes $x_{i+1},\ldots,x_{i+\delta_j}$ store ${\ID}(x_i)$ in their neighbor-list, and decrease their degrees by $1$. In the process, if degree of a node becomes negative, then it resets its degree to $0$.
  }
}
\end{center}
\vspace{3mm}

Let $D'=(d'_1,d'_2,\ldots,d'_n)$ be the degree sequence of the output graph. It is easy to see that the  total degree increase $\sum_{i=1}^n (d'_i - d_i)$ is bounded by $\sum_{i=1}^n d_i$. This is because when a node's degree is reset to $0$, the resorting ensures that it will again be used as a neighbor at most $d_i$ times, yielding the following.

\begin{theorem}
There exists a procedure for explicitly realizing, for any given 
length $n$ (possibly non-graphic) sequence $D=(d_1,\ldots,d_n)$, an upper envelope $D'=(d'_1,\ldots,d'_n)$ satisfying (i) $ d'_i\geq d_i$ for every $i$, and (ii)~$\sum_{i=1}^n d'_i\leq 2\sum_{i=1}^n d_i$. 
This can be done in $\tilde{O}(\Delta)$ rounds
in both the $\zeroncc$ and $\onencc$ models, where $\Delta$ is the maximum degree in $D$. 
\label{theorem:deg-seq-gen-3}
\end{theorem}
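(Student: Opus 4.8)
The plan is to run the modified version of Algorithm~\ref{Algorithm:dist-deg-real}, i.e., Algorithm~\ref{Algorithm:dist-deg-real} with Step~\ref{step:unrealizable_versus_reset-degree-to-0} replaced by the boxed variant, and to show that (a) it never aborts and terminates in $\tO(\Delta)$ rounds, (b) its output overlay $G$ has a degree sequence $D'$ that is an upper envelope of $D$, and (c) $\sum_i d'_i\le 2\sum_i d_i$; a final token-collection step then makes the realization explicit. Throughout I would track, for each node $v$, its residual degree $\mathrm{bud}(v)$ (the quantity called $d(v)$ in the algorithm) and its degree $\deg(v)$ in the partial overlay, with $\mathrm{bud}(v)=d_v$ and $\deg(v)=0$ initially.

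For the round complexity, observe that in the modified algorithm a residual degree can only decrease (a center is zeroed, a non-center drops by $1$) or be reset from $-1$ to $0$; hence it never exceeds the current maximum, the maximum residual degree is non-increasing, and it stays bounded by $\Delta$ for the whole run. Therefore the counting argument of Lemma~\ref{lem:whileloop} carries over unchanged --- after a phase whose maximum residual degree is $\delta$ at most $\delta$ nodes still have residual degree $\delta$, so at most two phases share the same maximum --- which bounds the number of phases by $O(\Delta)$. Since the reset-versus-abort choice in Step~\ref{step:unrealizable_versus_reset-degree-to-0} is a purely local decision, Lemma~\ref{lem:phase} still gives $\tO(1)$ rounds per phase, so the modified algorithm produces an implicit realization of some $D'$ in $\tO(\Delta)$ rounds. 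Converting it to an explicit realization is done exactly as in the proof of Theorem~\ref{theorem:deg-seq-gen-2}: create, for each node $u$ that ever acted as a center, the group of the non-centers that stored $\ID(u)$, and invoke token collection (Theorem~\ref{thm:token}); the parameters satisfy $L\le m':=\tfrac12\sum_i d'_i\le\sum_i d_i$ and $\ell_1=O(\Delta)$, so this costs another $\tO(\Delta)$ rounds.

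The two output guarantees both follow from one invariant. Whenever an edge is added, exactly one endpoint plays the role of a center (residual degree zeroed) and the others play non-centers (residual degree minus $1$), so the quantity $\mathrm{bud}(v)+\deg(v)$ is preserved by every operation except a reset of $v$, which increases it by exactly $1$. Since the algorithm stops only once all residual degrees are $0$, writing $r_v$ for the number of times $v$ is reset we obtain $d'_v=\deg(v)=d_v+r_v\ge d_v$, which is guarantee~(i). For~(ii), note that each reset of $v$ happens precisely when a new overlay edge incident to $v$ is created, and distinct resets correspond to distinct edges (a node acts as a center at most once, so an edge is created at most once); hence the total number of resets is at most $m'$. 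Combining this with the summed invariant $\sum_v d'_v=\sum_v d_v+\sum_v r_v$ and the handshake identity $\sum_v d'_v=2m'$ gives $\sum_v r_v\le m'=\tfrac12(\sum_v d_v+\sum_v r_v)$, hence $\sum_v r_v\le\sum_v d_v$ and $\sum_v d'_v\le 2\sum_v d_v$.

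The step I expect to demand the most care is guarantee~(ii): the paper calls it ``easy to see'', but the clean justification needs the observation that resets are in injective correspondence with realized edges, together with the conservation law for $\mathrm{bud}+\deg$. A related point to nail down is that $G$ is a simple graph: within a phase the multicast groups are disjoint blocks of the sorted path so no parallel edge appears, but because the sequence may be non-graphic a node that has been zeroed can later be swept into an overflowing group (which is exactly why resets occur), and one must check --- using the sorted order and the fact that a node is a center at most once --- that this never re-creates an already present edge; if it could, the statement would have to be read over multigraphs, or a de-duplication pass added. Everything else is a direct assembly of Lemmas~\ref{lem:whileloop} and~\ref{lem:phase} and Theorems~\ref{theorem:deg-seq-gen-2} and~\ref{thm:token}.
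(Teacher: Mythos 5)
Your proof is correct. The overall plan matches the paper's --- run the modified algorithm, carry over Lemmas~\ref{lem:whileloop} and~\ref{lem:phase}, and use token collection to make the realization explicit --- but the justification of guarantee~(ii) is where you go well beyond the paper. The paper dismisses~(ii) with the one-line remark that after a reset the resorting ensures a node $v$ is used as a neighbor at most $d_v$ more times. Taken per node this is not what is needed and can in fact fail: with $D=(2,2,0)$ the modified algorithm outputs the triangle, so the third node is reset twice even though $d_3=0$. What does hold, and what your argument actually establishes, is the aggregate inequality $\sum_v r_v\le\sum_v d_v$. Your route to it --- the invariant $\mathrm{bud}(v)+\deg(v)=d_v+r_v$, combined with the injection from reset events into realized edges, closed out by the handshake identity --- is a clean and correct derivation and a genuine sharpening of the paper's ``easy to see.''

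Two side points. First, your simplicity worry resolves favorably: once a node becomes a center its residual is set to $\textsc{nil}$ and it leaves the sorted path, and a node that has ever been reset has residual $0$, so it can never later become a center. Together these imply the endpoints of a realized edge can never swap center/non-center roles, so no edge is created twice. Second, for the token-collection step you assert $\ell_1=O(\Delta)$; the short reason is $d'_v=d_v+r_v$ with $d_v\le\Delta$, and $r_v$ is bounded by the number of phases (a node is reset at most once per phase), which is $O(\Delta)$ by Lemma~\ref{lem:whileloop}.
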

\section{Degree-Sequence Realization in Trees} \label{s:trees}
In this section, we consider the degree realization problem when we restrict our realizations to trees. We refer to this as the {\em tree-realization} problem. Note that a degree sequence $(d_1, d_2, \ldots, d_n)$ is realizable as a tree if and only if $\sum_i d_i=2(n-1)$ \cite{GraphTheory}. Since this condition can be verified in $\zeroncc$ in $O(\log n)$ rounds by aggregation, we can quickly test whether a given degree sequence has a valid tree-realization. 

We present two $O(\plog{n})$ round algorithms for realizing trees in $\zeroncc$. The first is a simple algorithm yielding a tree-realization of the maximum possible diameter. We then show how with some modification to the first algorithm, it is possible to output a minimum diameter tree.

A detailed pseudocode of our first algorithm is presented in Algorithm~\ref{Algorithm:dist-tree-deg-real1}. The key idea is as follows. 
Given a realizable sequence, we first sort the nodes according to their degrees. This allows us to separate the leaves from the non-leaves. Let $V=\{x_1,x_2,\ldots,x_n\}$ be the nodes in the network such that $d(x_1)\geq d(x_2) \ge \ldots \geq d(x_n)$. Assume that they are arranged in a sorted path graph $L$. Define 
$$k=|\{x~|~d(x)>1\}|.$$ 
Thus, $x_{k+1},\ldots, x_n$ are the leaves. For each $i$ in $[1,k]$, $x_i$ creates an edge with $x_{i+1}$ by exchanging \IDs. Now for any $x_i$ such that $i \in [2,k]$, as that node has a predecessor and a successor, the remaining degree requirement is $d(x_i)-2$; for $x_1$, the degree requirement is $d(x_1)-1$. Each of these can be satisfied by simply attaching the required number of leaves (as the analysis in~\citep{AM1996realizing} shows that there will be a sufficient number of leaves). Once $k$ is made known to the nodes, node $x_i$ is only required to know how many leaves are required for the nodes $x_1,\ldots,x_{i-1}$. In the algorithm, we capture this by the prefix sum $p_i$, 
which allows node $x_i$ to calculate the number of children its predecessors require. Once $p_i$ is known, $x_i$ knows the position of its leaves in $L$ and can inform its \ID to them. 

\begin{algorithm}[!ht]
\BlankLine
\KwIn{An $n$-node network with each node $x$ provided with a degree $d(x)$.}
\KwOut{A corresponding implicit tree realization.}
\BlankLine
Sort the nodes into a list $L$, represented as $x_1,\ldots,x_n$, in non-increasing order of degrees. After sorting, each node knows its position in $L$. \\
Aggregate the value $S=\sum_{x \in V}d(x)$ at a single node (say $x_1$).
If $S\neq 2(n-2)$, then $x_1$ broadcasts $\textsc{unrealizable}$, and the procedure terminates.\\
Aggregate and broadcast to all the nodes the value $k=|\{x~|~d(x)>1\}|$ such that $x_1, \ldots, x_k$ are the non-leaf nodes and $x_{k+1}, \ldots, x_n$ are leaves in $L$.
\\
Compute the prefix sums $p_i=2+\sum\limits_{j=1}^{i-1}\big(d_j-2\big)$ for $2\leq i\leq k$. $p_1=2$.\\

\ForEach{$i\in [k]$, in parallel}
{
    \uIf{$i=1$}{Set $I=0$}
    \lElse{Set $I=1$}
	Node $x_{i}$ stores {\ID}$(x_{i-1})$ and {\ID}$(x_{i+1})$ in its neighbor-list (when exists).\\ 
	Broadcast {\ID}$(x_i)$ to the following $d_i-1-I$ contiguous leaves $x_{k+p_i+I},\ldots, x_{k+p_i+d_i-2}$. \\
	
	The nodes $x_{k+p_i+I},\ldots, x_{k+p_i+d_i-2}$ store {\ID}$(x_i)$ in their neighbor-list. \\ \CommentSty{\%there will always be sufficiently many leaves (From Arikati and Maheswari\cite{AM1996realizing}, section 3.2)} \\
}
\caption{Distributed-Tree-Realization-1}
\label{Algorithm:dist-tree-deg-real1}
\end{algorithm}

We now show how to achieve a {\em polylogarithmic} time implementation of Algorithm~\ref{Algorithm:dist-tree-deg-real1}. Steps 1-3 can be performed by a combination of sorting and aggregation. First, sort the nodes according to their degrees, after which each node can be made to know its position in the path (via the BFS tree construction). Then, perform steps 2 and 3 by global aggregation operations (to the node in position 1, $x_1$).

The prefix sums $p_i$ (in step 4) can be computed in $O(\log n)$ rounds in a manner that is reminiscent of computing inorder traversal numbers. In fact, inorder traversal numbers are (close to) prefix sums when the degree values are 1. Our approach is to build a local binary tree on the nodes $x_1,\ldots x_k$. Then we apply a two-phase process. Phase 1 involves a bottom-up convergecast (to calculate the sum of the degree values of nodes in sub-trees). At the end of phase 1, the root knows its prefix sum and phase 2 involves a recursive top-down computation in the binary tree for nodes to deduce their prefix sums.

 Step 9 can be performed in $O(1)$ rounds by neighbors simply exchanging their \IDs. Note that at the end of step 4, each node $x_i$ (for any $i \in[1,k]$) knows how many leaves it requires. Thus any $x_i$ can locally calculate the positions of its leaves in the path $L$. Suppose $x_i$'s leaves are located in positions $j$ to $j+(d_i-3)$. Then $x_i$ can inform these nodes of its \ID in the following manner. First, using $j$ as the group \ID, $x_i$ informs $x_j$ of its \ID using the aggregation algorithm. Then it does the same for node $x_{j+d_i-3}$. Note that this only requires $O(\log n)$ rounds via aggregation (Theorem~\ref{thm:aggregation}). We now treat the problem of the nodes $x_j$ and $x_{j+d_i-3}$ informing the other nodes between their position in $L$ as a smaller instance of the global broadcast problem in the NCC (containing only the $d_i-2$ nodes), which can be solved in $O(\log n)$ rounds (Theorem~\ref{thm:global}). Also, each $x_i$ can simultaneously inform its \ID to all its leaves without congestion by running the above procedure concurrently. Thus we have the following.

\begin{theorem}
    There exists a procedure for implicitly realizing a length $n$ tree realizable sequence by a tree overlay network in $O(\log^3{n})$ rounds in the $\zeroncc$ model.
\end{theorem}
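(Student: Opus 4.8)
The plan is to prove the theorem in two parts: first argue that Algorithm~\ref{Algorithm:dist-tree-deg-real1} outputs a correct implicit tree realization whenever one exists (and correctly reports failure otherwise), and then show that every step can be implemented in $O(\plog{n})$ rounds in $\zeroncc$, the sorting step being the bottleneck at $O(\log^3 n)$.

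\textbf{Correctness.} Step~2 tests the characterization $\sum_{x} d(x) = 2(n-1)$, which is necessary and sufficient for tree-realizability; if it fails, $x_1$ correctly announces \textsc{unrealizable}. Otherwise, after sorting, $x_1,\dots,x_k$ are precisely the nodes of degree $>1$ and $x_{k+1},\dots,x_n$ are the leaves. The edges added in Step~9 form the path $x_1 - x_2 - \cdots - x_k - x_{k+1}$ (where $x_{k+1}$ is a leaf), consuming one unit of degree from $x_1$ and two from each of $x_2,\dots,x_k$; the residual demands $d_1-1$ and $d_i-2$ (for $2\le i\le k$) are then met by the contiguous leaf blocks indexed via the prefix sums $p_i$. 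The resulting overlay is connected (every leaf hangs off the internal path, which is connected) and has exactly $n-1$ edges by the degree-sum identity, hence is a tree; since the degree-sum identity also makes the count of required leaf-slots match the number of available leaves exactly, the blocks stay inside $[k+1,n]$ and are pairwise disjoint, as guaranteed by the greedy analysis of Arikati and Maheshwari~\cite{AM1996realizing} for non-increasing tree-graphic sequences. Finally the realization is implicit because each leaf edge is known to its (unique) internal endpoint, which is the one that learns and stores the leaf's \ID.

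\textbf{Round complexity.} Step~1 builds the sorted path $L$ via the sorting primitive of Section~\ref{s:prelim} in $O(\log^3 n)$ rounds, and positions in $L$ come for free from the accompanying balanced search tree (Corollary~\ref{cor:inorder}). Steps~2 and~3 are distributive aggregations to $x_1$ plus broadcasts, $O(\log n)$ rounds each (Theorem~\ref{thm:global}). The prefix sums of Step~4 are computed by constructing a balanced binary tree over $x_1,\dots,x_k$ (Theorem~\ref{thm:binarytree}) and performing a bottom-up convergecast of subtree degree-sums followed by a top-down pass --- the same two-phase scheme used for inorder numbers --- in $O(\log n)$ rounds. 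Step~9 is one round of neighbors exchanging \IDs along $L$. For the leaf-attachment broadcasts in Step~5, each internal $x_i$ locally knows the index range $[j,\,j+d_i-3]$ of its leaf block; using $j$ as a group \ID it sends its \ID to $x_j$ and to $x_{j+d_i-3}$ by the aggregation primitive (Theorem~\ref{thm:aggregation}), and then the at most $d_i-2$ leaves in the block run a self-contained instance of global broadcast among themselves (Theorem~\ref{thm:global}); both cost $O(\log n)$ rounds. Because distinct internal nodes own disjoint leaf blocks, each node lies in $O(1)$ of these groups, so all instances run concurrently within the message bounds of $\zeroncc$, and Step~5 takes $O(\log n)$ rounds overall. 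Summing, the total is $O(\log^3 n)$.

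\textbf{Main obstacle.} The part needing the most care is Step~5: one must check that the simultaneous small aggregations and broadcasts obey the per-round message limits of $\zeroncc$ --- which reduces to bounding the relevant parameters ($\sum_i|A_i| = O(n)$, and each node in $O(1)$ groups) in the invocation of Theorem~\ref{thm:aggregation} --- together with the combinatorial guarantee, taken from~\cite{AM1996realizing}, that the greedy path-then-leaves assignment never exceeds the supply of leaves. Everything else is a routine composition of the primitives of Section~\ref{s:prelim}.
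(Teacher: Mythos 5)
Your proposal matches the paper's own argument essentially step for step: sorting dominates at $O(\log^3 n)$, positions via the balanced search tree, prefix sums by a two-phase convergecast/top-down pass, Step~9 in $O(1)$, and the leaf-block attachment done by aggregating to the two endpoints of each block and then a self-contained broadcast among the block, all in $O(\log n)$. You additionally spell out a correctness argument (the caterpillar structure, the $n-1$ edge count, disjointness of leaf blocks) that the paper delegates entirely to the citation of Arikati--Maheshwari, but this is a welcome elaboration rather than a different route.
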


Next, we present an algorithm for obtaining a minimum  diameter tree realization from a degree sequence.
The resulting Algorithm~\ref{Algorithm:dist-tree-deg-real2} is in fact a distributed version of the sequential algorithm studied in \cite{SSW16}, and the resulting tree is referred to therein as the {\em greedy tree} $T_G$. 

The key idea here is to put the nodes with the higher degrees as high up in the tree as possible, without violating the realization requirement.
Initially, the highest degree node $x_1$ becomes the root and connects to next $d(x_1)$ highest degree nodes. Thereafter, each node $x_i$ satisfies its degree requirement by connecting with $d(x_i)-1$  nodes (as it is already connected to its parent) of highest degree that do not yet have a parent. 

\begin{algorithm}[!ht]
\BlankLine
\KwIn{An $n$-node network with each node $x$ provided its degree $d(x)$ in the tree.}
\KwOut{An 
implicit tree realization of min diameter.}
\BlankLine
Initialization: As in Algorithm~\ref{Algorithm:dist-tree-deg-real1}, line numbers 1 -- 3.\\
Compute the prefix sums $p_i=2+\sum\limits_{j=1}^{i-1}\big(d_j-1\big)$ for $2\leq i\leq k$. Set $p_1=2$.\\
\ForEach{$i\in [n]$, in parallel}
{
    \uIf{$i=1$}{Set $I=0$}
    \lElse{Set $I=1$}
    Broadcast ${\ID}(x_i)$ to the $d_i-1-I$ nodes $x_{p_i+I},\ldots, x_{p_i+d_i-1}$.\\

	The nodes $x_{p_i+I},\ldots, x_{p_i+d_i-1}$ store ${\ID}(x_i)$ in their neighbor-list. \\
}
\caption{Distributed-Tree-Realization-2}
\label{Algorithm:dist-tree-deg-real2}
\end{algorithm}

We next claim that the greedy tree $T_G$ constructed by Algorithm~\ref{Algorithm:dist-tree-deg-real2} has minimum diameter.
\begin{lemma}
\label{lem:min-diam}
    The output of Algorithm~\ref{Algorithm:dist-tree-deg-real2} is a minimum diameter realization of the input degree sequence $D$.
\end{lemma}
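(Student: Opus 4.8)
\textbf{Proof plan for Lemma~\ref{lem:min-diam}.}

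The plan is to show that the greedy tree $T_G$ produced by Algorithm~\ref{Algorithm:dist-tree-deg-real2} is exactly the tree constructed by the sequential greedy procedure of~\cite{SSW16}, and then to invoke (or re-derive) the optimality proof from that work. First I would verify the structural equivalence: in Algorithm~\ref{Algorithm:dist-tree-deg-real2}, after sorting so that $d(x_1)\ge d(x_2)\ge\cdots\ge d(x_n)$, the root $x_1$ is attached to $x_2,\ldots,x_{d_1+1}$, and in general each internal node $x_i$ (processed in order of decreasing degree) is attached to the next $d_i-1$ not-yet-parented nodes in the sorted order. The prefix-sum bookkeeping $p_i = 2 + \sum_{j=1}^{i-1}(d_j-1)$ is precisely the index of the first still-unassigned node when we reach $x_i$, so the algorithm realizes the "place high-degree nodes as high as possible, breaking ties by sorted order" rule that defines the greedy tree. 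I would also note that the realizability condition $\sum_i d_i = 2(n-1)$ (checked in the initialization step) guarantees that the process consumes exactly all $n$ nodes and leaves every degree requirement met, so $T_G$ is a genuine tree realization of $D$.

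The core of the argument is the optimality claim: among all trees realizing $D$, the greedy tree minimizes the diameter. The approach I would take is a "level-by-level domination" or exchange argument. Order the degrees non-increasingly and root both $T_G$ and an arbitrary realizing tree $T$ at a (centroid) vertex. Define, for a rooted tree and each radius $r$, the quantity $n_{\le r}$ = number of vertices within distance $r$ of the root. The key lemma to establish is that for the greedy assignment, $n_{\le r}^{(T_G)} \ge n_{\le r}^{(T)}$ for every $r$ and every choice of where $T$ is rooted at its center — i.e., the greedy tree "fills up" balls around the root as fast as any realization can, because at each BFS level it uses the largest still-available degrees. Formally one argues by induction on the levels: if $T_G$ has weakly more vertices at every level $\le \ell$ and uses the largest $\ell$-many "degree budget", then by a standard majorization/rearrangement inequality it can host weakly more vertices at level $\ell+1$. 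Since the diameter of a tree is at most $2\cdot(\text{eccentricity of its center})$ and at least that eccentricity, a tree whose balls grow fastest around the center has the smallest center-eccentricity and hence the smallest diameter; this is exactly the content of the greedy-tree optimality theorem in~\cite{SSW16}, which I would cite for the combinatorial statement while observing that Algorithm~\ref{Algorithm:dist-tree-deg-real2} outputs that very tree.

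The main obstacle I anticipate is the majorization step itself: making precise the claim that greedily assigning the largest available degrees to the shallowest available slots is simultaneously optimal for \emph{all} radii $r$, and handling the tie-breaking and parity issues (e.g. when $T$ has an edge-center rather than a vertex-center, or when several degree values coincide). One must be careful that "rooting $T$ at its center" is legitimate — a tree's center is one vertex or one edge — and that comparing $n_{\le r}$ across a vertex-rooted $T_G$ and a possibly edge-centered $T$ still yields the diameter comparison; the clean way is to compare eccentricity profiles rather than just a single diameter number. Since the excerpt explicitly says Algorithm~\ref{Algorithm:dist-tree-deg-real2} is "a distributed version of the sequential algorithm studied in~\cite{SSW16}" and the output "is referred to therein as the greedy tree $T_G$", the cleanest route for the paper is: (1) prove the algorithm outputs $T_G$ (the equivalence, which is the genuinely new part here and follows from the prefix-sum indexing), and (2) cite~\cite{SSW16} for the statement that $T_G$ has minimum diameter among all realizations of $D$, sketching the majorization argument above for completeness.
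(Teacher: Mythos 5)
Your high-level plan matches the paper's: observe that Algorithm~\ref{Algorithm:dist-tree-deg-real2} outputs the greedy tree $T_G$ of~\cite{SSW16}, cite that paper's extremal theorem, and extract the minimum-diameter claim. However, the step in which you derive "minimum diameter" from the extremal property is both more complicated and subtly gappier than what the paper does, and it is worth seeing why.

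You propose rooting both $T_G$ and a competitor $T$ at their centers, showing that the greedy tree fills balls around the root at least as fast, concluding that $T_G$ has the smallest center-eccentricity (radius), and then passing from radius to diameter. This last passage is not free: for trees the diameter is $2r$ or $2r-1$, so equal radii do not pin down which tree has the smaller diameter, and you would need an extra argument to break that case. You also correctly anticipate the vertex-center versus edge-center annoyance — but the paper's proof never roots anything and so never faces it. The paper works directly with the eccentricity-count $n_\ell(T) = |\{v : ecc(v,T) \le \ell\}|$ and invokes Theorem~16 of~\cite{SSW16}, which says $n_\ell(T_G) \ge n_\ell(T)$ for every $\ell$ and every realizing tree $T$. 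From there the diameter bound is a two-line counting argument: any tree of diameter $\delta$ has at least two vertices of eccentricity exactly $\delta$ (the endpoints of a diametric path), so $n_{\delta_G-1}(T_G) \le n-2$; domination then forces $n_{\delta_G-1}(T') \le n-2$ for every competitor $T'$, so $T'$ has a vertex of eccentricity $\ge \delta_G$ and hence diameter $\ge \delta_G$. This sidesteps centers, radii, and parity entirely. If you adopt the $n_\ell$ quantity instead of balls-around-the-center, your proof collapses to the paper's and all the obstacles you list in your final paragraph disappear.

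One more note: the "majorization argument" you propose to sketch is the content of Theorem~16 of~\cite{SSW16} itself; the paper deliberately cites it rather than reproving it, and you should too unless the venue demands self-containment — it is a nontrivial result and not the novel part of this lemma.
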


\begin{proof}
Let $\cTvd$ be the class of trees realizing $D$. The {\em eccentricity} of a node $v$ in the tree $T$ is $ecc(v,T)=\max_{u\in T}dist_T(v,u)$. 
Let $n_\ell(T)=|\{v \mid ecc(v,T)\leq \ell\}|$.
It is shown in \cite{SSW16} (Theorem 16) that $n_\ell(T_G) \ge n_\ell(T)$ for every non-negative integer $\ell$ and tree $T\in\cTvd$.
Fix $\ell=\delta_G-1$, where $\delta_G$ is the diameter of $T_G$. Then $n_\ell(T_G)\le n-2$. Since $T_G$ maximizes $n_\ell(T)$, also $n_\ell(T')\le n-2$ for every other $T'\in\cTvd$.
Hence, every $T'\in\cTvd$ must have diameter at least $\ell+1=\delta_G$,  establishing the lemma.
\end{proof}

In the NCC model, Steps 1-2 of Algorithm~\ref{Algorithm:dist-tree-deg-real2} can be performed in $polylogarthmic$ rounds via sorting and global aggregation. Step 3 is performed just as step 4 in Algorithm~\ref{Algorithm:dist-tree-deg-real1}. Steps 8-10 can be performed in a similar manner. Once a node $x_i$ knows its childrens' positions in the path (via the prefix sums), it does the following.  Suppose $x_i$'s children are in positions $j$ to $j+d_{i}-2$. Then $x_i$ informs node $x_j$ and $x_{j+d_{i}-2}$ of its \ID (by a multicast). Then these two children inform the other nodes (as before, this can be seen as a smaller instance of broadcast in the NCC with $d_i-1$ nodes). Note that no node belongs to more than one multicast group. Thus, we get the following.

\begin{theorem}
    There exists a procedure for implicitly realizing a length $n$ tree realizable sequence $D$ by an overlay tree network $T$ in $O(\plog{n})$ rounds, in the $\zeroncc$ model, such that $T$ has the minimum diameter possible for the given $D$.
\end{theorem}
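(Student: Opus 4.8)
The plan is to verify that Algorithm~\ref{Algorithm:dist-tree-deg-real2} correctly realizes the given tree-realizable sequence $D$ and that each of its steps can be implemented in $O(\plog{n})$ rounds in $\zeroncc$; the minimum-diameter guarantee is already established by Lemma~\ref{lem:min-diam}, so the remaining work is correctness of the assignment (every node receives exactly its prescribed degree, and the result is a tree) and the round-complexity bookkeeping. First I would confirm the combinatorial skeleton: after sorting into $L$ with $d(x_1)\ge\cdots\ge d(x_n)$, node $x_1$ adopts the top $d_1$ remaining nodes as children, and thereafter each $x_i$ (for $i\ge 2$), already holding one edge to its parent, claims the next $d_i-1$ still-parentless nodes of highest degree. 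I would argue by the standard greedy-tree invariant (as in~\cite{SSW16}) that this is exactly a breadth-first filling of a tree on $n$ nodes using all degrees, so the prefix-sum positions $p_i=2+\sum_{j=1}^{i-1}(d_j-1)$ precisely locate the block of children of $x_i$ in $L$; since $\sum_i d_i = 2(n-1)$, the blocks partition the positions $2,\dots,n$ with no overlap and no gap, every node ends up with exactly one parent (hence its degree is met: $d_i-1$ children plus one parent, or $d_1$ children for the root), and connectedness plus the edge count $n-1$ forces a tree.

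Next I would handle the distributed implementation, step by step. The initialization (sorting into $L$ so each node knows its position, aggregating $S=\sum_x d(x)$ to check $S = 2(n-1)$ and broadcasting $\textsc{unrealizable}$ if not, and computing $k = |\{x : d(x)>1\}|$) follows from Theorem~\ref{thm:binarytree}, Corollary~\ref{cor:inorder}, and Theorem~\ref{thm:global}, costing $O(\log^3 n)$ rounds. The prefix sums $p_i$ are computed exactly as described for Algorithm~\ref{Algorithm:dist-tree-deg-real1}: build a balanced binary tree on the sorted nodes, do a bottom-up convergecast of partial degree sums, then a top-down pass so each $x_i$ learns $p_i$, all in $O(\log n)$ rounds. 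Finally, for the parallel \textbf{ForEach} loop, each $x_i$ locally knows the positions $p_i + I,\dots,p_i + d_i - 1$ of its children in $L$; it contacts the two endpoints of that block via the aggregation/multicast primitive (Theorem~\ref{thm:aggregation} or~\ref{thm:multicast}) in $O(\log n)$ rounds, and then the block of $x_i$'s children — a contiguous sub-path of $L$ with a known head and tail — runs an independent instance of global broadcast to propagate $\ID(x_i)$ inward (Theorem~\ref{thm:global}), again $O(\log n)$ rounds. The crucial point enabling full parallelism is that the child-blocks are pairwise disjoint sub-paths of $L$, so these broadcasts do not contend for the shared channels beyond what the primitives already tolerate.

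The main obstacle I anticipate is not any single hard lemma but the interface/congestion argument for the last step: making precise that all $k$ simultaneous "claim your block of children and tell them your ID" operations fit within the per-node message budget of $\zeroncc$ and within the guarantees of the multicast/broadcast primitives. One has to check that each node appears in at most one child-block (it has a single parent), that the parameters $L = \sum_i |A_i| = O(n)$, $\ell_1, \ell_3 = O(1)$ when the primitive of Theorem~\ref{thm:multicast} is invoked with one group per non-leaf node, and that the internal sub-path broadcasts reuse the balanced-binary-tree machinery of Theorem~\ref{thm:binarytree} on disjoint vertex sets without interference. Modulo this routine-but-careful accounting, the round count is $O(\log^3 n) + O(\log n) = O(\plog{n})$, and combined with Lemma~\ref{lem:min-diam} this yields the stated theorem.
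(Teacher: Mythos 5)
Your proposal matches the paper's proof in structure and substance: rely on Lemma~\ref{lem:min-diam} for optimality of the greedy tree, reuse the sorting/prefix-sum machinery from Algorithm~\ref{Algorithm:dist-tree-deg-real1}, and handle the parallel \textbf{ForEach} by having each $x_i$ contact the two endpoints of its (pairwise-disjoint) child block via multicast and then run a sub-path broadcast on that block, all in $O(\plog{n})$ rounds. The only material difference is that you spell out the combinatorial correctness argument (the prefix sums partition positions $2,\dots,n$ and the edge count $\sum_i d_i - (n-1) = n-1$ forces a tree), which the paper leaves implicit by appeal to the sequential greedy-tree construction of~\cite{SSW16}; this is a harmless and arguably welcome addition.
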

While both Algorithm~\ref{Algorithm:dist-tree-deg-real1} and Algorithm~\ref{Algorithm:dist-tree-deg-real2} have similar initial stages (i.e., sorting the nodes into leaves and non-leaves), Algorithm~\ref{Algorithm:dist-tree-deg-real2} guarantees a minimal-diameter tree as illustrated in Figure~\ref{fig:treerealization}.
\begin{figure}[]
    \centering
    \includegraphics[scale=.27, trim=0 100 0 0, clip]{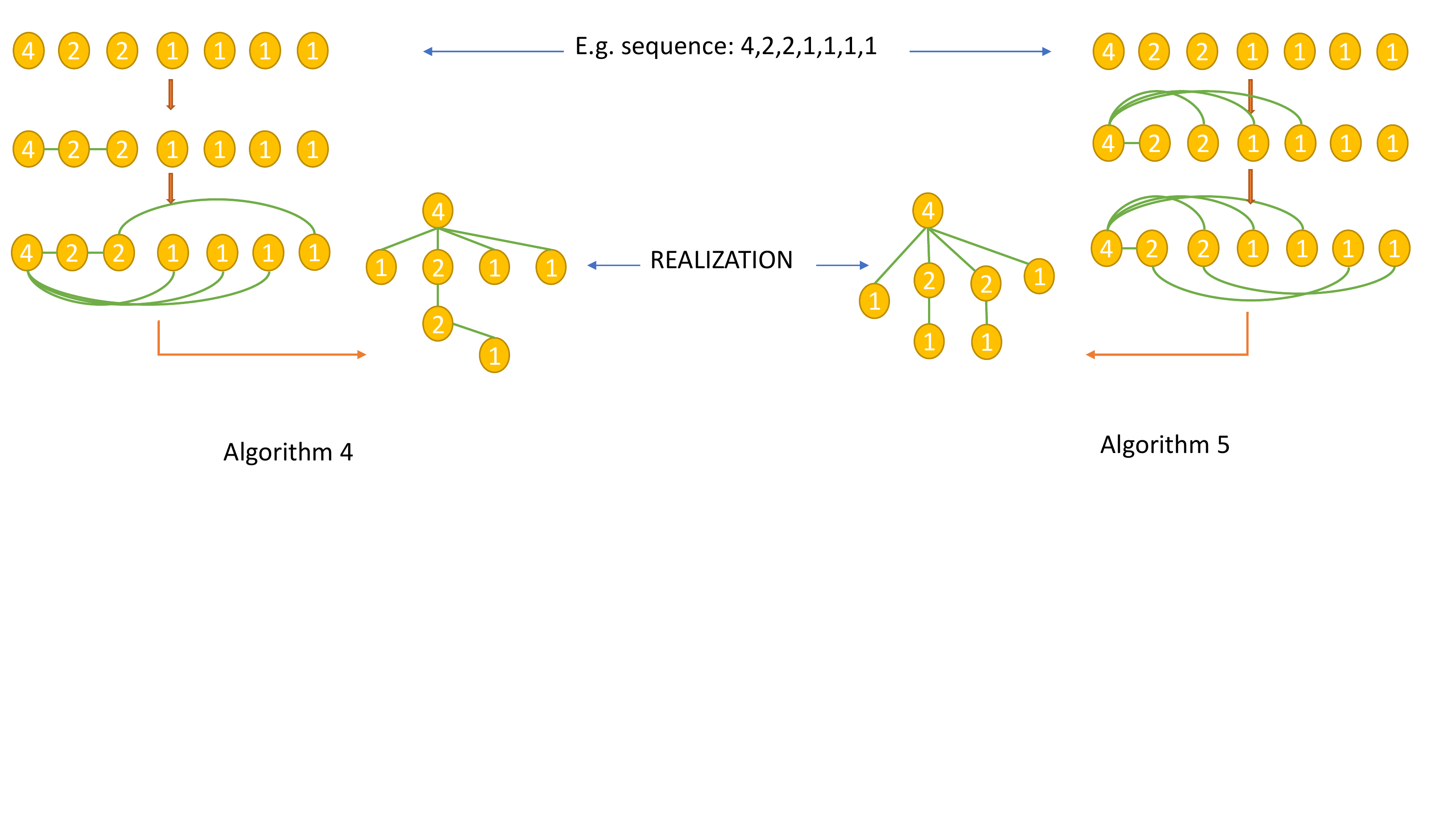}
    \caption{An illustration of our two algorithms for the realization of trees. Given the same degree sequence, clearly Algorithm~\ref{Algorithm:dist-tree-deg-real2} provides a minimal diameter realization.}
    \label{fig:treerealization}
\end{figure}

\section{Connectivity Threshold Realizations}
\label{section:Connectivity}

In this section we study the  minimum connectivity threshold realization problem in the NCC model. For any two nodes $u$ and $v$ in a graph $G$, define their connectivity, denoted as $\edgeconn_G(u,v)$, as the minimum number of edge-disjoint paths present between $u$ and $v$.  In the minimum connectivity threshold problem, each node $v$ in the network is initially provided with a local copy of a {\em connectivity threshold} vector ${\bar\sigma(v)}=\langle \sigma(v,u_1),\cdots,\sigma(v,u_n)\rangle$, which specifies the required minimum edge connectivity 
at $v$ w.r.t. every other node $u\neq v$. The goal is then to compute the sparsest possible graph $G$ 
such that any two nodes $u,v$ in $G$ satisfies the condition
$\edgeconn_G(u,v) \geq \sigma(u,v)$.

We provide a $2$-approximate solution guaranteeing that the number of edges is at most twice the minimum possible. In order to achieve this goal, we do the following.
Let $\rho(v)=\max_{u\in V} \sigma(u,v)$ for each $v\in V$.
Our algorithms output realizations that satisfy the condition
\begin{equation*}
\edgeconn_G(u,v) \geq \min\{\rho(u),\rho(v)\}.
\label{equation:min_conn_strong}
\end{equation*}
Any realization that ensures the above condition guarantees a 2-approximation solution for the connectivity threshold problem. Furthermore, working with the above condition also allows us to assume that each node $v$ can be provided with just the value $\rho(v)$, rather than the entire length $n$ vector ${\bar\sigma}(v)$, or equivalently, that all the entries of ${\bar\sigma}(v)$ are identical to $\rho(v)$.

\subsection{An \texorpdfstring{$\tilde{O}(1)$}~ Time Implicit Realization in $\onencc$}
\label{subsection:conn-1}

We first consider the simplest scenario of implicit realization in the $\onencc$ model. Our algorithm has two steps.
\begin{enumerate}
    \item First, find a node $w$ of maximum $\rho$ value, i.e., such that $\rho(w)=\max_{v\in V}\rho(v)$, breaking ties arbitrarily. Note that $w$ can be found using global aggregation (\ref{thm:aggregation}), and its address can then be broadcast to all the nodes, all in $\tO(1)$ time.
    \item  Next, each node $v\neq w$ chooses an arbitrary subset $X^v=\{x^v_1,\ldots,x^v_{\rho(v)}\}$ of $V$ satisfying that $w\in X^v$, and outputs $X^v\times\{v\}$ as the required number of edges for node $v$. This step can be done in $O(1)$ time in the $\onencc$-model since $v$ already knows the addresses of all the nodes in $X^v$.
\end{enumerate}

\dnsparagraphem{Correctness:~}
Consider any node $v\neq w$. Assume w.l.o.g. that $x^v_1$, the first node in $X^v$, be $w$. Then $(v,w)$, and $(v,x^v_i,w)$ for $i\in[2,\rho(v)]$, are $\rho(v)$ edge disjoint paths from $v$ to $w$, proving that $\edgeconn_G(w,v)=\rho(v)$ for every $v\neq w$. Now for any two nodes $v_1,v_2\neq w$, let $\min\{\rho(v_1),\rho(v_2)\}$ be $\rho(v_1)$. Let $X^{v_1}=\{x_{v_1}^1,\ldots,x_{v_1}^{\rho(v_1)}\}$ and $X^{v_2}=\{x_{v_2}^1,\ldots,x_{v_2}^{\rho(v_2)}\}$ be the neighborhoods of the nodes $v_1$ and $v_2$ respectively (such that $x_{v_1}^1=x_{v_2}^1=w$). For any $i\ne 1$, if $x_{v_1}^i=x_{v_2}^i$ then $(v_1,x_{v_1}^i,v_2)$ is a path from $v_1$ to $v_2$. Otherwise ($x_{v_1}^i\neq x_{v_2}^i$), $(v_1,x_{v_1}^i,w,x_{v_2}^i,v_2)$ is a path from $v_1$ to $v_2$. This ensures that there are $\rho(v_1)$ edge disjoint paths between $v_1$ and $v_2$. 

\dnsparagraphem{Approximation factor:~}
\noindent\emph{Approximation factor}:
Clearly, in any graph that adheres to the minimum connectivity threshold constraints, the degree of each node $v$ must be at least $\rho(v)$. Thus, any such realizing graph must contain at least $\frac{1}{2}\sum_{v\in V}\rho(v)$ edges.
Our algorithm obtains a realization $G$ where each node $v \neq w$ adds exactly $\rho(v)$ edges. Since each node $v\ne w$ adds an edge to $w$, and $w$ itself does not add any edges, the total number of edges is determined by $\sum_{v\in V\setminus\{w\}}\rho(v)\leq \sum_{v\in V}\rho(v)$ edges. Hence, our algorithm gives a 2-approximation solution. We thus get the following.

\begin{theorem}
There exists a $\tO(1)$ time procedure for implicitly realizing, in the $\onencc$ model, 
a given collection of connectivity threshold vectors by a graph $G$ such that number of edges in $G$ is at most twice the number of edges in the optimal realization.
\label{theorem:conn-polylog-time}
\end{theorem}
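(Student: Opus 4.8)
The plan is to verify the three properties claimed in Theorem~\ref{theorem:conn-polylog-time}: correctness of the realization (the edge-connectivity condition holds), the $2$-approximation guarantee on the number of edges, and the $\tO(1)$ round complexity. Since the statement already references the two-step algorithm and the ``correctness'', ``approximation factor'' discussion appears in the excerpt, I would organize the proof to mirror those three parts, filling in the details.

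First I would argue correctness. Fix the hub node $w$ with $\rho(w)=\max_{v}\rho(v)$. For each $v\ne w$ the algorithm selects $X^v$ of size $\rho(v)$ with $w\in X^v$; reindex so $x^v_1=w$. I would exhibit $\rho(v)$ edge-disjoint $v$--$w$ paths explicitly: the direct edge $(v,w)$, and for $i\in[2,\rho(v)]$ the two-hop path $(v,x^v_i,w)$. These are edge-disjoint because each uses a distinct first edge $(v,x^v_i)$ out of $v$ and a distinct last edge $(x^v_i,w)$ into $w$ (note every $x^v_i$ with $i\ge 2$ has an edge to $w$ since $\rho(x^v_i)\le\rho(w)$ means... actually more simply: $x^v_i\ne w$ chose $w\in X^{x^v_i}$, so the edge $(x^v_i,w)$ exists). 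Hence $\edgeconn_G(w,v)\ge\rho(v)=\min\{\rho(w),\rho(v)\}$. For two non-hub nodes $v_1,v_2$ with, say, $\rho(v_1)\le\rho(v_2)$, I would match up $X^{v_1}$ and $X^{v_2}$ index-by-index (both having $w$ as entry $1$) and, for each $i\in[2,\rho(v_1)]$, produce the path $(v_1,x^{v_1}_i,v_2)$ if $x^{v_1}_i=x^{v_2}_i$, and otherwise the four-hop path $(v_1,x^{v_1}_i,w,x^{v_2}_i,v_2)$. The care needed here — the main technical point — is checking edge-disjointness across these $\rho(v_1)$ paths plus the trivial path through $w$ (i.e.\ $(v_1,w,v_2)$): the $i$-th path uses edges incident to $v_1$ via $x^{v_1}_i$ and to $v_2$ via $x^{v_2}_i$, all distinct across $i$; the only possible clash is on edges incident to $w$, so I must confirm that distinct $i$ yield distinct $w$-edges, using that $x^{v_1}_i$ (resp.\ $x^{v_2}_i$) are distinct for distinct $i$, and that when the short path $(v_1,x^{v_1}_i,v_2)$ is used it touches no $w$-edge at all. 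This gives $\edgeconn_G(v_1,v_2)\ge\rho(v_1)=\min\{\rho(v_1),\rho(v_2)\}\ge\sigma(v_1,v_2)$, so $G$ satisfies all connectivity thresholds.

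Next, the approximation factor. In any graph meeting the thresholds, $\deg_G(v)\ge\rho(v)$ for every $v$ (since $v$ needs $\rho(v)$ edge-disjoint paths to the node $u$ achieving $\sigma(u,v)=\rho(v)$, each contributing a distinct edge at $v$), so any valid realization has at least $\frac12\sum_{v}\rho(v)$ edges. The edges our algorithm adds are $\bigcup_{v\ne w}(X^v\times\{v\})$; even counting each such edge once for each endpoint that introduced it, the total is at most $\sum_{v\ne w}\rho(v)\le\sum_{v}\rho(v)$, which is at most twice the lower bound. Hence $G$ is a $2$-approximation with respect to the number of edges.

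Finally, the round complexity. Step~1 finds $w$ by a global aggregation (max of $\rho(v)$) and then a global broadcast of $w$'s \ID, which by Theorem~\ref{thm:aggregation} (or the global primitives of Theorem~\ref{thm:global}) takes $\tO(1)$ rounds. Step~2 is purely local: in $\onencc$ every node knows all \IDs, so each $v\ne w$ can pick $X^v$ (any $\rho(v)-1$ nodes besides $w$, plus $w$) and record the implicit edges in $O(1)$ rounds with no communication. I do not expect a genuine obstacle; the one place to be careful is the edge-disjointness bookkeeping at $w$ in the $v_1$--$v_2$ case, which is the part I would write out most carefully.
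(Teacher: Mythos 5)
Your proposal matches the paper's proof essentially verbatim: the same two-step algorithm (elect a hub $w$ of maximum $\rho$ via global aggregation/broadcast, then each $v$ locally picks $X^v\ni w$ of size $\rho(v)$), the same path constructions for the $v$--$w$ and $v_1$--$v_2$ cases, the same $\frac12\sum_v\rho(v)$ lower bound giving the $2$-approximation, and the same $\tO(1)$ round accounting. One small heads-up on the step you flagged: the raw index-by-index pairing of $X^{v_1}$ and $X^{v_2}$ can fail if a node in $X^{v_1}\cap X^{v_2}$ sits at different indices in the two lists (two resulting paths would share a $w$-edge), so you should first pair up the common elements and only then pair the disjoint remainders arbitrarily — the paper's proof is equally terse on this and implicitly assumes such an alignment.
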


\subsection{An \texorpdfstring{$\tilde{O}(\Delta)$}~ Time Explicit Realization in $\zeroncc$ and $\onencc$}
\label{subsection:conn-12}

We next present a connectivity threshold algorithm providing an explicit realization that also works  in the $\zeroncc$ model. Our algorithm is inspired by the work of Frank and Chou~\cite{FC70}, which presents a 2-approximation in the centralized setting, and crucially uses the degree-realization results presented in Section \ref{sec:deg-realiz-general}. 

The algorithm begins by first sorting the nodes in non-increasing order of $\rho$ (defined as in Subsection~\ref{subsection:conn-1}). Let $x_1,\ldots,x_n$ be the sorted nodes and let $d_0=\rho(x_1)$.  
The algorithm proceeds in two phases. 
\begin{enumerate}
\item 
The first phase focuses on the nodes $x_1,...,x_{d_0+1}$. The node $x_1$ broadcasts $d_0$ to everyone. Next, the nodes $x_1,...,x_{d_0+1}$ use their assigned $\rho$ values ($(\rho(x_1),\rho(x_2),\ldots,\rho(x_{d_0+1}))$) as a degree sequence to obtain a realization $G_1$ (either exactly or approximately if an exact realization is impossible) using the approximation algorithm given in Section \ref{ss:approx}. 
This ensures that each node $x_i$ in $G_1$, for $i\in[2,d_0+1]$, is connected to $x_1$ and to at least $\rho(x_i)-1$ nodes from the set $x_2,\ldots,x_{d_0+1}$.
\item 
In the second phase, for each $i\in[d_0+2,n]$ in parallel, $x_i$ broadcasts ${\ID}(x_i)$ to $\rho(x_i)$ predecessors of $x_i$, i.e., $x_{i-1},x_{i-2},\ldots,x_{i-\rho(x_i)}$. 
These nodes store ${\ID}(x_i)$ in their neighbor-list. This ensures that for every $i\in[d_0+2,n]$, the degree of $x_i$ in the graph induced by the nodes $x_1,\ldots,x_i$ is at least $\rho(x_i)$. To make the realization explicit, the nodes $x_{i-1},\ldots,x_{i-\rho(x_i)}$ also broadcast their own \IDs to $x_i$ (after receiving $x_i$'s \ID).
\end{enumerate}

\begin{algorithm}[!ht]
\BlankLine
\KwIn{An $n$-node network with each node $x$ provided with a connectivity threshold $\rho(x)$.}
\KwOut{An explicit connectivity threshold realization.}
\BlankLine
Sort the nodes in the non-increasing order of $\rho$, let these be represented as $x_1,\ldots,x_n$. (After sorting, each node knows its position in the sorted list).\\
Broadcast $d_0=\rho(x_1)$ to all the nodes.\\
Obtain a distributed degree-realization for sequence $(\rho(x_1),\rho(x_2),\ldots,\rho(x_{d_0+1}))$ over the first $d_0+1$ nodes using Theorem~\ref{theorem:deg-seq-gen-3}.\\ 
\ForEach{$i\in[d_0+2,n]$, in parallel,}
	{	
		Broadcast ${address}(x_i)$ to $\rho(x_i)$ predecessors of $x_i$,  $x_{i-1},x_{i-2},\ldots,x_{i-\rho(x_i)}$.\\
	
		The nodes $x_{i-1},\ldots,x_{i-\rho(x_i)}$ after receiving the address of $x_i$ store it in their neighbor-list, and also broadcast their own addresses to node $x_i$, which in turn stores them in its neighbor-list.
	}
\caption{Distributed-Connectivity-Realization}
\label{Algorithm:dist-conn-real}
\end{algorithm}

\dnsparagraphem{Correctness:~}
Let $G_1$ be the graph induced by the nodes $x_1,\ldots,x_{d_0+1}$ computed in the first phase, and $G_2=G$ be final graph after the completion of the second phase. We show that $\edgeconn_G(x_1,x_i)\geq \rho(x_i)$, for each $i>1$. 
First consider $G_1$. Since the degree of $x_1$ in $G_1$ is $d_0$, it is adjacent to all nodes in $G_1$. Consider a node $x_i$ for $i\in[2,d_0+1]$. Let $w_1,\ldots,w_{\rho(x_i)-1}$ be $x_i$'s neighbors in $G_1$, other than $x_1$. Then $(x_i,x_1)$ and $(x_i,w_j,x_1)$ for $j\in[1,\rho(x_i)-1]$ are $\rho(x_i)$ edge disjoint paths from $x_i$ to $x_1$. Since $G_1$ is a subgraph of $G$, for each $i \in[2,d_0+1]$, $$\edgeconn_G(x_1,x_i)\geq \edgeconn_{G_1}(x_1,x_i) \geq \rho(x_i).$$
It is easy to prove by induction that for $i\geq d_0+2$, $\edgeconn_G(x_1,x_i) \geq \rho(x_i)$.
Finally, using Menger's Theorem~\cite{goring2000short}, we get that
\begin{eqnarray*}
\edgeconn_G(x_i,x_j) &\geq& \min\{\edgeconn_G(x_i,x_1),\edgeconn_G(x_j,x_1)\}
\\
&\geq& \min\{\rho(x_i),\rho(x_j)\}.
\end{eqnarray*}
 
\dnsparagraphem{Approximation factor:~}
By Theorem~\ref{theorem:deg-seq-gen-3}, the number of edges in $G_1$ is at most $\sum_{i=1}^{d_0+1}\rho(x_i)$. The second phase adds $\rho(x_i)$ edges for every $i\geq d_0+2$. Hence the number of edges in $G$ is at most $\sum_{i=1}^{n}\rho(x_i)$. By arguments similar to those of Subsection~\ref{subsection:conn-1}, $\frac{1}{2}\sum_{i=1}^{n}\rho(x_i)$ is a lower bound on the number of edges in $G$, implying that our algorithm achieves an approximation factor of two.
We thus have the following.

\begin{theorem}
There exists a $\tO(\Delta)$ time procedure for explicitly (as well as implicitly) realizing, in the $\zeroncc$ and $\onencc$ models, a connectivity threshold graph $G$ 
with at most twice the number of edges in the optimal realization.
\label{theorem:conn-delta-time}
\end{theorem}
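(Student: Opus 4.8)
The plan is to show that Algorithm~\ref{Algorithm:dist-conn-real} runs in $\tO(\Delta)$ rounds and then invoke the already-established correctness and approximation-factor arguments presented just above the statement. The running-time analysis proceeds step by step through the pseudocode. First, the initial sort in Step~1 costs $O(\log^3 n)$ rounds by the sorting primitive of Section~\ref{s:prelim}, and at its end each node knows its rank, so computing and broadcasting $d_0 = \rho(x_1)$ in Step~2 is a single global broadcast in $O(\log n)$ rounds by Theorem~\ref{thm:global}. The heart of the bound is Step~3: running the approximate degree-realization algorithm of Theorem~\ref{theorem:deg-seq-gen-3} on the sub-instance consisting of the $d_0 + 1$ highest-$\rho$ nodes with degree sequence $(\rho(x_1),\dots,\rho(x_{d_0+1}))$. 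Since the maximum degree in this sub-sequence is at most $d_0 = \rho(x_1) = \Delta$, Theorem~\ref{theorem:deg-seq-gen-3} gives a running time of $\tO(\Delta)$ for this phase. One mild technical point I would address here is that Theorem~\ref{theorem:deg-seq-gen-3} is stated for an $n$-node network, whereas Step~3 only involves the first $d_0+1$ nodes; I would note that the remaining nodes can simply idle (or, equivalently, be given degree requirement $0$), so the primitive applies verbatim and the bound is $\tO(d_0) = \tO(\Delta)$.

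Next I would bound the second phase (Steps~4--6). Here, for each $i \in [d_0+2, n]$ in parallel, node $x_i$ must send its ID to its $\rho(x_i)$ immediate predecessors on the sorted path, and each such predecessor must reply with its own ID. Because each node $x_i$ knows its own position in the sorted list (from Step~1) and knows $\rho(x_i)$ locally, the target set $\{x_{i-1},\dots,x_{i-\rho(x_i)}\}$ is determined by rank; to actually deliver messages to those nodes one uses the multicast primitive (Theorem~\ref{thm:multicast}) with group $i$ having $x_i$ as source and its $\rho(x_i)$ predecessors as recipients, followed by a symmetric token-collection / aggregation step (Theorem~\ref{thm:token} or the reverse multicast) for the replies. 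The relevant parameters are $L = \sum_{i} \rho(x_i) = O(m)$ (twice the number of edges in the realization), $\ell_3 = 1$ (each $x_i$ is source of exactly one group), and $\ell_1 = \max_v |\{i : x_v \in A_i\}|$. The key observation is that a node $x_v$ appears in group $A_i$ only if $i - \rho(x_i) \le v \le i-1$, and since the $\rho$ values are sorted non-increasingly, $\rho(x_i) \le \rho(x_v)$ for $i > v$, so $x_v$ belongs to at most $\rho(x_v) \le \Delta$ such groups; hence $\ell_1 \le \Delta$. Plugging into Theorem~\ref{thm:multicast}, the phase costs $O(m/n + \Delta/\log n + \log n) = \tO(\Delta)$ rounds (using $m/n \le \Delta$ since $m \le n\Delta/2$).

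Finally I would assemble the pieces: Step~1 is $O(\log^3 n)$, Step~2 is $O(\log n)$, Step~3 is $\tO(\Delta)$ by Theorem~\ref{theorem:deg-seq-gen-3}, Steps~4--6 are $\tO(\Delta)$ by Theorem~\ref{thm:multicast}, so the total is $\tO(\Delta)$. Correctness (that $\edgeconn_G(x_i,x_j) \ge \min\{\rho(x_i),\rho(x_j)\} \ge \sigma(x_i,x_j)$) and the 2-approximation bound on the number of edges are exactly the arguments given in the paragraphs labeled \emph{Correctness} and \emph{Approximation factor} above, which I would simply cite rather than repeat; the explicit-realization claim follows because every edge created in Step~3 is explicit by Theorem~\ref{theorem:deg-seq-gen-3} and every edge created in Step~6 is made explicit by the mutual ID exchange. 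The main obstacle I anticipate is the second-phase congestion analysis — specifically, verifying that $\ell_1 = O(\Delta)$ despite the fact that a low-rank node can be pulled into many groups; this is where the non-increasing ordering of $\rho$ is essential, and it is the one place where a careless implementation would blow the round bound up to $\tO(m/n + \Delta)$ or worse. Everything else is a routine invocation of the primitives from Section~\ref{s:prelim} and the degree-realization results of Section~\ref{sec:deg-realiz-general}.
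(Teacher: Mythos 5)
Your proposal is correct and follows exactly the same route as the paper: the same Algorithm~\ref{Algorithm:dist-conn-real}, the same invocation of Theorem~\ref{theorem:deg-seq-gen-3} for the first phase, and the same correctness and approximation arguments, which you rightly just cite. Your step-by-step running-time accounting (in particular the observation that non-increasing sorting of $\rho$ forces $\ell_1\le\Delta$ in the second phase, and the remark that the first phase applies Theorem~\ref{theorem:deg-seq-gen-3} on the $(d_0+1)$-node sub-instance so its cost is $\tO(d_0)=\tO(\Delta)$) fills in details the paper leaves implicit, and is correct.
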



\section{Lower Bounds for Degree Realization} \label{sec:lb}
In this section, we establish a number of lower bounds applicable to the distributed graph realization problem in $\zeroncc$, some of which are fairly straightforward.
We begin with the simplest case. 

In any instance of the explicit version, a node of maximum degree needs to know the addresses of its $\Delta$ neighbors. This yields the following.
\begin{theorem}\label{theorem:explicit-lower-bound}
Any distributed $\zeroncc$ algorithm for explicit realization of a degree sequence $D$ with maximum degree $\Delta$ requires at least $\Omega(\Delta/\log n)$ rounds for all instances.
\end{theorem}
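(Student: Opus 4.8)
The plan is to argue via an information-theoretic bottleneck at the maximum-degree node. Fix an instance with a node $v$ of degree $\Delta$. In the explicit model, after the algorithm terminates $v$ must store in its local state the \IDs of all $\Delta$ of its neighbors in the realized graph, together with the fact that each is a neighbor. Since \IDs are drawn from $[1,n^c]$, the set of $v$'s neighbors is a subset of an address space of size $n^c$, and there are many instances (or, within a single instance, many valid realizations the adversary can force) for which this neighbor set is essentially arbitrary among subsets of size $\Delta$; encoding such a set requires $\Omega(\Delta \log n)$ bits in $v$'s final state.

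The key step is then a counting/entropy argument on how fast $v$ can acquire these bits. In the $\zeroncc$ model, $v$ receives at most $O(\log n)$ messages of $O(\log n)$ bits per round, hence at most $O(\log^2 n)$ bits per round. To learn $\Omega(\Delta\log n)$ bits of previously-unknown information, $v$ therefore needs at least $\Omega\!\left(\frac{\Delta \log n}{\log^2 n}\right) = \Omega(\Delta/\log n)$ rounds. One must be slightly careful that the information $v$ needs is genuinely new, i.e. not already encoded in $v$'s initial knowledge: since $G_k$ is a path, $v$ initially knows only $O(1)$ other \IDs, so all but $O(1)$ of the $\Delta$ neighbor \IDs must arrive over the communication channels. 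This makes the bound hold "for all instances" in the sense stated, because every realization of a sequence with maximum degree $\Delta$ contains such a node $v$, and the adversary can relabel \IDs so that $v$'s neighbor \IDs are not predictable from $v$'s initial state.

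The main obstacle is making precise that $v$ really must \emph{receive} (rather than merely \emph{compute} or \emph{guess}) the neighbor \IDs. This is handled by a standard indistinguishability argument: construct a family of instances (or re-labelings) that are indistinguishable to $v$ unless it receives enough distinguishing bits through its incident links; if $v$ outputs before receiving them, it errs on some instance in the family. One packages this as: the number of distinct possible "correct" neighbor-sets for $v$ is at least $\binom{n^c - O(1)}{\Delta} \geq 2^{\Omega(\Delta \log n)}$ (for $\Delta \le n^c/2$, say), and $v$'s output must distinguish among all of them, while in $t$ rounds $v$'s received transcript takes at most $2^{O(t \log^2 n)}$ values; equating gives $t = \Omega(\Delta/\log n)$. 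The rest — verifying that such high-degree instances exist for every relevant $\Delta$, and that the relabeling preserves realizability — is routine.
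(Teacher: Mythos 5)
Your proposal is correct and uses the same approach as the paper: the maximum-degree node must end up storing the \IDs of its $\Delta$ neighbors, and since in $\zeroncc$ it can receive only $O(\log n)$ messages of $O(\log n)$ bits per round, this forces $\Omega(\Delta/\log n)$ rounds. The paper presents this as a one-sentence observation; your fleshing out of the indistinguishability and counting details (why the neighbor \IDs must genuinely be received rather than inferred) is exactly what makes that observation rigorous, but it is the same argument, not a different route.
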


We now turn our attention to implicit realizations.
Let $\cD_{n,m}$ be the class of length $n$ degree sequences $D=(d_i)_{i \in [n]}$ such that $m=\sum_{i} d_i / 2$, and let $\cD'_\Delta$ be the class of length $n$ degree sequences with maximum degree $\Delta$. 
\begin{theorem}
\label{lem:generallb}
(a) Any distributed algorithm to realize degree sequences in $\cD_{n,m}$ in $\zeroncc$ requires at least $\Omega(\sqrt{m}/\log n)$ rounds. (b) Any distributed algorithm to realize degree sequences in $\cD'_\Delta$ in $\zeroncc$ requires at least $\Omega(\Delta)$ rounds.
\end{theorem}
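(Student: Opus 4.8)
The plan is to prove the two lower bounds via reductions from two-party communication complexity, exploiting the key bottleneck of $\zeroncc$: each node sends and receives at most $O(\log n)$ messages of $O(\log n)$ bits per round, and in particular the path $G_k$ has only one ``long-range'' link, so information must cross a cut of bounded bandwidth. The canonical hard problem is set-disjointness (or its sparse variant), whose two-party randomized communication complexity is $\Omega(k)$ when the sets come from a universe of size $k$ (or $\Omega(\sqrt{m})$ / $\Omega(\Delta)$ for the appropriately tuned parameters). For part~(b), the simplest route is actually a direct argument: I would exhibit a family of degree sequences in $\cD'_\Delta$ where determining a single node's realized neighborhood (even implicitly) forces $\Omega(\Delta)$ bits to flow across a single edge of the knowledge path. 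Concretely, split the path into two halves, put a single node $u$ of degree $\Delta$ near the boundary in the left half, and place on the right half a pool of candidate neighbors whose own small degrees encode an adversarially chosen subset of size $\Delta$; since $u$ has only one link toward the right half (through the path) and can pass only $O(\log n)$ bits per round, and since $u$ must end up ``knowing'' (implicitly) which of the right-half nodes are its neighbors — or those right-half nodes must learn they are adjacent to $u$, which requires $u$'s identity to reach them — $\Omega(\Delta / \log n)$, in fact $\Omega(\Delta)$ after accounting for the per-message bandwidth versus the $\Theta(\log n)$-bit identifiers and the $\binom{\cdot}{\Delta}$ entropy, rounds are needed.

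For part~(a), I would set up a bisection/fooling-set argument with $\Delta \approx \sqrt{m}$ so that a sequence with roughly $\sqrt{m}$ high-degree nodes is forced. Partition $V$ into a left block $A$ and right block $B$ along the path $G_k$, each containing $\Theta(\sqrt m)$ nodes of degree $\Theta(\sqrt m)$ plus enough low-degree padding nodes to absorb remaining stubs. The adversary chooses a perfect matching (or a bipartite graph of $\Theta(m)$ edges) between the high-degree nodes of $A$ and those of $B$ that is forced by the degree constraints in a way that depends on $\Theta(m)$ hidden bits; Alice simulates $A$, Bob simulates $B$, and any $T$-round $\zeroncc$ algorithm yields a two-party protocol with $O(T \log^2 n)$ communication (one message of $O(\log n)$ bits per round across the single cut edge, with the $\log n$ factor for pipelining/emulation). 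Since recovering an implicit realization lets Alice and Bob jointly reconstruct the $\Theta(m)$-bit hidden object (each edge is known to at least one endpoint, hence to at least one of Alice/Bob), we get $T \log^2 n = \Omega(m)$... which is too strong, so the object must be chosen to have only $\Theta(\sqrt m \log n)$ bits of genuine uncertainty — e.g. each of the $\sqrt m$ left-high-degree nodes must ``choose'' which of two pre-specified right-blocks its edges go to, encoding $\Theta(\sqrt m)$ bits, giving $T = \Omega(\sqrt m / \log n)$.

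The main obstacle I anticipate is the bandwidth accounting: $\zeroncc$ is not literally a two-party model, since nodes within $A$ talk freely and the cut between $A$ and $B$ consists of exactly one path edge but nodes may also have acquired other IDs during execution. I need to argue that at the start only the path edges exist, so in round $1$ the only $A$–$B$ channel is the single boundary edge carrying $O(\log n)$ bits; and then argue inductively — via a simulation in which Alice runs all of $A$ and Bob all of $B$, exchanging, each round, the transcript of whatever messages cross between $A$ and $B$ — that the total cross-cut communication after $T$ rounds is $O(T \cdot \log n \cdot \log n)$, since each node sends $O(\log n)$ messages per round and we only care about those delivered across the cut, and a node can address a node across the cut only using an ID it has already received across the cut (which the simulation has already accounted for). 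Making this ``IDs-as-capabilities'' bookkeeping rigorous — so that the cross-cut bandwidth genuinely stays $\tilde O(1)$ per round rather than growing — is the delicate part; once it is in place, the disjointness/indexing lower bound plugs in routinely, and choosing the degree sequences so that the number of hidden bits is $\Theta(\sqrt m)$ (resp.\ $\Theta(\Delta)$, or $\Theta(\Delta \log n)$ to kill the $\log n$ and reach $\Omega(\Delta)$) completes both parts.
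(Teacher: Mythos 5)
The paper's own proof is far more elementary than your plan, and also correct for a different reason than the one you reach for. It fixes $k=\lfloor\sqrt{m}\rfloor$ and looks at degree sequences with $d_i=0$ for all $i>k$, so that every edge of the realization lies inside the first $k$ nodes. Since each edge must be known (implicitly) to at least one of its two endpoints, the first $k$ nodes collectively must store $\Omega(m)$ (self, neighbour-\ID) pairs; by pigeonhole some single node $v$ must know $\Omega(m/k)=\Omega(\sqrt m)$ neighbour \IDs it did not initially have. In $\zeroncc$, $v$ receives at most $O(\log n)$ messages of $O(\log n)$ bits per round, i.e.\ $O(\log n)$ new \IDs per round, giving $\Omega(\sqrt m/\log n)$. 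Part (b) is the same pigeonhole applied to the all-$\Delta$ sequence. No bisection, no two-party reduction, no set-disjointness: the bottleneck is the receive capacity of one node, not a cut.

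Your bisection/Alice--Bob route has a genuine gap, and it is exactly the point you flag as ``delicate'': the cross-cut bandwidth in $\zeroncc$ is \emph{not} $O(\log^2 n)$ bits per round after the first round, and the ``\IDs-as-capabilities'' accounting does not save it. Once a single $B$-\ID has crossed to $A$, it can be flooded \emph{inside} $A$ at no cut cost, so after $t$ rounds $\Theta((\log n)^t)$ nodes of $A$ may hold $B$-\IDs; simultaneously the set of distinct $B$-\IDs resident in $A$ (and symmetrically $A$-\IDs in $B$) obeys a multiplicative recurrence $|T_A(t{+}1)|\le |T_A(t)|+O(\log n)\,|T_B(t)|$, so it too grows like $(\log n)^t$. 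The per-round cut traffic is bounded only by $O(\log n)$ times the number of addressable targets, which therefore also grows geometrically. Plugging this honest bound into a $\Theta(\sqrt m)$-bit indexing/disjointness instance yields only $T=\Omega(\log m/\log\log n)$, nowhere near $\Omega(\sqrt m/\log n)$. Your part-(b) sketch has a second issue: for an \emph{implicit} realization the distinguished node $u$ need not learn any of its neighbours --- the neighbours could each learn $u$'s single $O(\log n)$-bit \ID, which needs to cross the cut only once and then spreads internally --- so fixing $u$ in advance does not force $\Omega(\Delta)$ bits through any bottleneck; one must pigeonhole over all nodes, as the paper does. The two-party framing \emph{does} become sound if one party is a \emph{single} node (whose receive capacity is genuinely $O(\log^2 n)$ bits per round), but at that point it is just a restatement of the paper's pigeonhole argument.
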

\begin{proof} 
Let $k=\lfloor\sqrt{m}\rfloor$, and let $\cD^*_{n,m} \subset \cD_{n,m}$ be the family of degree sequences in which $d_i=0$ for all $i>k$.
Consider an arbitrary 
$D^* = (d_1, d_2, \ldots, d_k, d_{k+1}, \ldots, d_n) \in \cD^*_{n,m}$. Any $\zeroncc$ algorithm must ensure that the first $k$ nodes combined learn $\Omega(m)$ \IDs. By the pigeonhole principle, at least one of these nodes
must learn $\Omega(m/k)=\Omega(\sqrt{m})$ \IDs, which requires $\Omega(\sqrt{m}/\log n)$ rounds in $\zeroncc$.

The lower bound of $\Omega(\Delta)$ for realizing degree sequences in $\cD'_\Delta$ can be argued similarly by considering the degree sequence $(d_i = \Delta)_{1 \le i \le n}$.
\end{proof}

\section{Realization with non-preassigned information}\label{sec:non-preassigned}

So far, we have restricted our focus to realizations in which each node $v$ is provided with a certain local constraint (e.g., its degree or its connectivity vector in the overlay graph) that the realization is expected to satisfy. A natural question to ask is how to approach the problem of computing a realization that meets certain requirements defined in terms of a global graph property, rather than local constraints. In this scenario, nodes are not a pre-assigned an individual constraint they need to satisfy; rather, there are certain requirements that are required of the network, that the nodes must work together to satisfy.

Let's focus on the degree-realizability problem in this context. Here, we assume that the degree vector $D=(d_1,d_2,\ldots,d_n)$ is presented in the distributed manner, that is, in the beginning each $v_i$ is provided with the value $d_i$. Our requirement is that the degree sequence of realization must be either $D$, or a permutation of $D$ (note that in the final realization, the degree of $v_i$, $d(v_i)$, need not be equal to $d_i$).

Note that for any explicit realization, even in the simpler $\onencc$ model, there is a lower bound of $\Omega(\Delta)$ rounds (see Theorem \ref{theorem:explicit-lower-bound}). However, if we allow a $(1+\epsilon$)-factor approximation where $0\leq \epsilon \leq 1$, 
 we show 
 that in the $\onencc$ model, we can in fact compute a explicit degree-realization for any degree sequence $D$ requiring only a polylogarithmic number of rounds.


\subsection{A non-preassigned $(1+\epsilon)$ approximation solution for explicit degree realization in  the $\onencc$ model}
\label{subsection:1+e-approx}


Let $V= \{v_1,v_2,\ldots,v_n\}$ be the set of nodes of the given network and let  $D=(d_1,d_2,\ldots,d_n)$ be the given degree sequence. 
Assume w.l.o.g. that initially each $v_i$ stores the value $d_i$ (as in the previous cases). However, in the non-preassigned case, we only require that the final degree sequence realized is a permutation of $D$, thus the final degree $d(v_i)$ realized at $v_i$ need not equal (or depend upon) $d_i$. 

We begin by defining the notion of interval sequences as in~\cite{BCPR-interval:19, CDZ00,GGT11}. An interval sequence consists of $n$ integer intervals $[a_i,b_i]$ such that $0\le a_i\leq b_i \le n-1$, and is said to be {\em graphic/realizable} if there exists a graph with degree sequence $D'=(d_1',\ldots,d_n')$ satisfying the condition $a_i\leq d_i'\leq b_i$, for each $i \in[1,n]$. 

The key idea here is to appropriately convert the given degree sequence $D$ to an interval sequence $\cal S$ whose realization is at most an $\epsilon$-factor deviation, i.e., there exists a one-to-one mapping between the given degree sequence (when sorted) $D=(d_1,d_2,\ldots,d_n)$ and the realized degree sequence (when sorted) $D'=(d_1',\ldots,d_n')$ obtained from the interval realization of $\cal S$ such that 
$$d_i/(1+\epsilon) \leq d_i' \leq (1+\epsilon)d_i$$
for every $1\leq i \leq n$. 

By~\cite{BCPR-interval:19}, we know that given any interval sequence $\cal S=$ $([a_i,b_i],[a_2,b_2],\ldots, [a_n,b_n])$, there exists a centralized solution that computes a graphic sequence $D'$ (if one exists) realizing $\cal S$ in $O(n\log n)$ time. Thus, 
if we could broadcast an appropriate interval sequence to all nodes, they could rely on this centralized algorithm to obtain a graphic sequence $D'$ that would satisfy the required approximation property. However, broadcasting any length $n$ sequence is expensive. 

To reduce this communication cost, we divide the nodes into a small number of interval classes based on their assigned degrees and just broadcast the count of the number of nodes belonging to each class. Each node can locally create the interval sequence $\cal S$ based on the class node count.

Depending on their initially assigned degrees, nodes are divided into $O(\log_{1+\epsilon} \Delta)$ intervals (for a chosen suitably $\epsilon >0$) as follows. Each class $i$ ($1\leq i\leq \log_{1+\epsilon} \Delta$) of nodes consists of all nodes in $V$ whose degree is in the interval 
$$(a_i,b_i] ~=~ ((1+\epsilon)^i,(1+\epsilon)^{i+1}].$$
(The nodes at the boundaries join the lower class). 
That is, a node $v$ with degree $d(v)$ belongs to class $i$ iff 
\\
$(1+\epsilon)^i < d(v) \leq (1+\epsilon)^{i+1}$.  



Note that, through this division of nodes into classes, each node $v$ can be assigned an interval based on the class it lies in. Any node $v$ in the $i^{th}$ class would have a corresponding interval $[a_i,b_i]$.
The number of nodes in each class is calculated by aggregation and broadcast, which helps all the nodes to implicitly create the appropriate interval sequence $\cal S$ locally. Note that $\cal S$ would be the same across all nodes (for a given \ID-to-degree mapping, e.g., the highest \ID node gets the highest degree, which nodes could decide a-priori). Each node can then locally create a degree sequence $D'$ based on $\cal S$ using the centralized algorithm of \cite{BCPR-interval:19}.\suman{This paragraph is possibly redundant and can be removed.} We now describe the simple two step process in detail. 

\begin{enumerate}
\item  
Each node is assigned to a class $i$ ($1\leq i\leq \log_{1+\epsilon} \Delta$).  Let $I=\{I_1,I_2,\ldots,I_{(\log_{1+\epsilon} \Delta)}\}$ where $I_i$ is the number of nodes in the $i^{th}$ class, i.e, 
$$I_i ~=~ |\{ v_j \mid d(v_j) \in (a_i,b_i],~~ 1\leq j\leq n \}|.$$ 
For each $i \in[1,\log_{1+\epsilon} \Delta]$,  we calculate the value $I_i$ in parallel via aggregation to a common root (say the node at position $i$ of the graph) using $i$ as the group \ID. Since only the count of the nodes in a class is required, this can be achieved in $O(\log n)$ rounds via aggregation (see Theorem~\ref{thm:aggregation}). Then, via global broadcast, each $I_i$ is broadcast to every node of the network. As there are $\log_{1+\epsilon} \Delta$ values that need to be broadcast, this can be achieved in $O(\log_{1+\epsilon} \Delta) \cdot O(\log n) \leq O(\log^2 n)$ time. 
\item 
After each node in the network has received $I$, it locally creates the interval sequence $\cal S$, from which it computes a graphic sequence $D'=(d'_1,\ldots,d'_n)$ using \cite{BCPR-interval:19}. Since in the $\onencc$ model all node \IDs are known to every node, for a fixed \ID-to-degree mapping, the construction is explicit.   
\end{enumerate}



\begin{theorem}
There exists an $O(\log^2 n)$ round  ($1+\epsilon$)-approximation solution to the explicit-degree realization problem in the non-preassigned $\onencc$ model. 
\end{theorem}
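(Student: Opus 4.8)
The plan is to verify the two claimed ingredients hold up end-to-end: (i) that the interval-sequence construction step produces a degree sequence $D'$ whose sorted version is within a $(1+\epsilon)$ multiplicative factor of the sorted $D$ entrywise, and (ii) that the whole procedure runs in $O(\log^2 n)$ rounds in $\onencc$. For (ii), I would simply tally the costs already bounded in the excerpt: assigning each node to its class $i$ with $1\le i\le \log_{1+\epsilon}\Delta$ is local; computing each count $I_i$ is an aggregation to the node in position $i$, done in parallel with group \ID $i$, which by Theorem~\ref{thm:aggregation} costs $O(\log n)$ rounds; broadcasting the $O(\log_{1+\epsilon}\Delta)$ values $I_i$ to all nodes via global broadcast (Theorem~\ref{thm:global}) costs $O(\log_{1+\epsilon}\Delta)\cdot O(\log n)$, and since $\log_{1+\epsilon}\Delta = O(\log n)$ for constant $\epsilon$ (using $\log_{1+\epsilon}\Delta = \ln\Delta/\ln(1+\epsilon)$ and $\Delta\le n$), this is $O(\log^2 n)$. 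Once every node holds the vector $I$, it reconstructs $\cal S$ identically (all nodes agree because $I$ is common knowledge and the \ID-to-degree tiebreaking rule is fixed a priori) and runs the centralized algorithm of~\cite{BCPR-interval:19} locally on $\cal S$ to get $D'$; since $\onencc$ nodes know all \IDs, the fixed \ID-to-degree assignment makes the realization explicit at no communication cost. So the round bound is immediate from the primitives.

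For (i), the approximation guarantee, I would argue as follows. Define the interval for class $i$ to be $[a_i,b_i]=((1+\epsilon)^i,(1+\epsilon)^{i+1}]$, so that a node originally assigned degree $d$ lands in the class $i$ with $(1+\epsilon)^i < d \le (1+\epsilon)^{i+1}$, and any realized degree $d'$ satisfying the interval constraint of that class also obeys $(1+\epsilon)^i < d' \le (1+\epsilon)^{i+1}$. Hence $d'/d$ and $d/d'$ both lie in $((1+\epsilon)^{-1}, (1+\epsilon))$, giving $d/(1+\epsilon) \le d' \le (1+\epsilon)d$ for the node; this is a per-node statement, but since the class membership is determined purely by the degree value and both the original and realized degrees of a class-$i$ node lie in the same dyadic-style band, the same bound transfers to the $i$-th entries when both sequences are sorted in the same (say non-increasing) order — one checks that sorting preserves the band-to-band correspondence because all class-$i$ degrees, original or realized, sit strictly above all class-$(i-1)$ degrees. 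The one caveat to handle is realizability: the interval sequence $\cal S$ must itself be graphic, otherwise the centralized algorithm returns nothing. Here I would invoke that $D$ itself (or at least $\sum d_i$ even — we may assume the instance is graphic) lies coordinate-wise inside $\cal S$, so $\cal S$ is graphic, and~\cite{BCPR-interval:19} then outputs some realizing $D'$.

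The main obstacle I anticipate is making the sorted-entrywise claim fully rigorous, i.e., that after independently sorting $D$ and $D'$, the $i$-th largest entry of $D'$ is within $(1+\epsilon)$ of the $i$-th largest entry of $D$. This does not follow from a naive per-node argument if classes could "leak" — but they cannot, because the interval endpoints are chosen so the bands $((1+\epsilon)^i,(1+\epsilon)^{i+1}]$ partition the positive reals (boundary points assigned to the lower class as stated), so the multiset of realized degrees, grouped by class, has exactly the same class-sizes $I_i$ as the original, and within each class every value (original or realized) lies in the same half-open band. Consequently the sorted sequences are "aligned" class by class, and comparing $i$-th entries reduces to comparing two numbers in the same band, yielding the factor. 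A secondary subtlety worth a sentence is confirming $\log_{1+\epsilon}\Delta$ is genuinely $O(\log n)$ for the stated range $0\le\epsilon\le 1$ — for very small $\epsilon$ the factor $1/\ln(1+\epsilon)\approx 1/\epsilon$ blows up, so the theorem should be read with $\epsilon$ a fixed constant (or at least $\epsilon = \Omega(1/\mathrm{polylog}\,n)$), which I would note explicitly; with that reading the $O(\log^2 n)$ bound stands.
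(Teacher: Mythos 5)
Your proposal follows essentially the same route as the paper: class counts $I_i$ are aggregated and broadcast in $O(\log_{1+\epsilon}\Delta)\cdot O(\log n)$ rounds, every node locally reconstructs the interval sequence $\cal S$ and runs the centralized algorithm of \cite{BCPR-interval:19} with a fixed \ID-to-degree rule, and the $(1+\epsilon)$ guarantee comes from each realized degree lying in the same band $((1+\epsilon)^i,(1+\epsilon)^{i+1}]$ as the original. Your two added observations — that the paper's sorted-entrywise claim needs the band-alignment argument (disjoint, ordered bands with matching multiplicities ensure sorted positions correspond class-by-class), and that $\log_{1+\epsilon}\Delta=O(\log n)$ requires $\epsilon$ bounded away from $0$ rather than the full stated range $0\le\epsilon\le 1$ — are both legitimate refinements that the paper's terse proof glosses over.
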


\begin{proof}

We first prove that the realized degree sequence $D'$ is an  ($1+\epsilon$)-approximation solution for the given degree sequence $D$. For convenience, we consider these sequences in sorted order.
Consider the degree at the $j^{th}$ position $d_j$ in sequences $D$ and $D'$ $(1\leq j \leq n)$.  
Let the existence of $d_j$ contribute to the node count $I_i$ of the $i{th}$ class with a representative interval of $(a_i,b_i]$
iff $(1+\epsilon)^i < d_j \leq (1+\epsilon)^{i+1}$. The correctness of the realized degree sequence $D'$ given an interval sequence is guaranteed by \cite{BCPR-interval:19}. Note that, as $d_j$ lies within the interval $(a_i,b_i]$,
the realized value (corresponding to $d_j$) cannot increase by more than a $(1+\epsilon)$ factor. Since this is true for all $j \in (1,\dots, n)$, the overall approximation holds.

To prove the time complexity, we look at the various operations performed by the algorithm. Aggregating the count $I_i$ of the nodes belonging to a particular class $i$ requires $O(\log n)$ rounds via aggregation (see Theorem~\ref{thm:aggregation}) as messages from different nodes can be combined to update the count (and individual \IDs need not be sent).  To collect the individual counts from all $\log_{1+\epsilon} \Delta$ classes, global broadcast is used. As there are $\log_{1+\epsilon} \Delta$ values that need to be broadcast, this can be achieved in $O(\log_{1+\epsilon} \Delta) \cdot O(\log n) \leq O(\log^2 n)$ time (see Theorem~\ref{thm:global}).
\end{proof}
\section{Conclusion and Future Work} \label{sec:con}
We have initiated the study of graph realization problems in the distributed setting, and presented efficient algorithms in the NCC model for realizing overlay networks that satisfy  degree or connectivity requirements. We believe that such formal study of overlay network design may facilitate the 
design of a wide range of fundamental tools with strong guarantees. In addition to building such theoretical foundation, we hope our line of work may produce new ideas leading to practical improvements in building overlay networks. Our work also opens up a number of interesting new directions for future exploration. 
\begin{compactenum}
\item 
It is unclear if the lower bounds we have provided for $\zeroncc$ hold also for $\onencc$. If not, it will be interesting to achieve improved algorithms for $\onencc$.
\item 
Overlays need to be robust against a wide range of failures. Modeling these requirements in the form of properties that must be satisfied by the implemented network could lead to new graph realization variants that have not been studied so far. 
\item 
It may be interesting to realize graphs from suitable graph classes. We have already studied realization of trees, but we believe that realizing graphs in other classes like planar graphs, chordal graphs, etc. can also be of value.
\end{compactenum}

\section*{Acknowledgments} John Augustine and Sumathi Sivasubramaniam are supported by DST/SERB Extra Mural Grant (file number EMR/2016/00301). John Augustine is also supported by DST/SERB MATRICS Grant (file number MTR/2018/001198). David Peleg is supported by the Venky Harinarayanan and Anand Rajaraman Visiting Chair Professorship at IIT Madras. Supported by the chair's funds, the work was done in part when David Peleg, Avi Cohen, and Keerti Choudhary visited IIT Madras and when John Augustine visited Weizmann Institute of Science. This work was done in part when Suman Sourav visited IIT Madras. 
A short version of this paper appears in the proceedings of IPDPS 2020.

\end{document}